\newtheorem{theorem}{Theorem}
\newtheorem{lemma}[theorem]{Lemma}						
\theoremstyle{remark}
\newcommand{\<}{\left\langle} 
\renewcommand{\>}{\right\rangle}
\renewcommand{\(}{\left(}				
\renewcommand{\)}{\right)}
\renewcommand{\[}{\left[}
\renewcommand{\]}{\right]}			
\def\E{\mathbb{E}}			
\def\P{\mathbb{P}}										
\def\R{\mathbb{R}}										
\def\I{\mathbb{I}}
\def\A{\mathscr{A}}
\def\B{\mathscr{B}}
\def\D{\mathscr{D}}
\def\F{\mathscr{F}}		
\def\G{\mathscr{G}}		
\def\H{\mathscr{H}}						
\def\L{\mathscr{L}}	
\def\M{\mathscr{M}}			
\def\O{\mathscr{O}}
\def\Ncal{\mathscr{N}}
\def\S{\mathscr{S}}
\def\B{\mathscr{B}}
\def\Xc{\mathscr{X}}
\def\Vc{\mathscr{V}}
\def\Uc{\mathscr{U}}
\def\s{\mathfrak{s}}
\def\m{\mathfrak{m}}
\def\u{\mathfrak{U}}
\def\k{\mathfrak{K}}
\def\eps{\epsilon}
\def\om{\omega}
\def\Om{\Omega}
\def\sig{\sigma}
\def\Lam{\Lambda}
\def\Gam{\Gamma}
\def\lam{\lambda}
\def\del{\delta}
\def\sigb{\overline{\sig}}
\def\thb{\overline{\theta}}
\def\fOmb{\overline{f \! \Om}}
\def\d{\partial}		
\newcommand{\ind}{\perp \! \! \! \perp}
\def\Et{\widetilde{\E}}
\def\Pt{\widetilde{\P}}
\def\Bt{\widetilde{\B}}
\def\Wt{\widetilde{W}}
\numberwithin{equation}{section}	
\numberwithin{theorem}{section}
\begin{document}

\title{Pricing Derivatives on Multiscale Diffusions: an Eigenfunction Expansion Approach}

\author{Matthew J. Lorig\thanks{Work partially supported by NSF grant DMS-0739195.}\\
\emph{ORFE Department, Princeton University}}

\maketitle

\begin{abstract}
Using tools from spectral analysis, singular and regular perturbation theory, we develop a systematic method for analytically computing the approximate price of a large class derivative-assets.  The payoff of the derivative-assets may be path-dependent.  Additionally, the process underlying the derivatives may exhibit killing (i.e., jump to default) as well as combined local/nonlocal stochastic volatility.  The nonlocal component of volatility may be multiscale, in the sense that it may be driven by one fast-varying and one slow-varying factor.  The flexibility of our modeling framework is contrasted by the simplicity of our method.  We reduce the derivative pricing problem to that of solving a single eigenvalue equation.  Once the eigenvalue equation is solved, the approximate price of a derivative can be calculated formulaically.  To illustrate our method, we calculate the approximate price of three derivative-assets: a vanilla option on a defaultable stock, a path-dependent option on a non-defaultable stock, and a bond in a short-rate model.
\end{abstract}

{\bf Keywords:} derivative pricing, stochastic volatility, local volatility, default, knock-out, barrier, spectral theory, eigenfunction, singular perturbation theory, regular perturbation theory.

\clearpage

%
%

\section{Introduction}
The spectral representation for the transition density of a general one-dimensional diffusion was obtained in a seminal paper by \citet*{McKean1956}.  Since that time, \emph{spectral theory} -- and more specifically, the study of eigenfunction expansions of linear operators -- has become an essential tool for analysing diffusions.  As a diffusion often serves as the underlying process on which financial models are built, it is not surprising that methods from spectral theory have made their way into mathematical finance as well.
\par
In particular, many problems related to the pricing of derivative-assets have been solved analytically by using methods from spectral theory.  An overview of the spectral method applied to derivative pricing is as follows.  Using risk-neutral pricing, one expresses the value of a derivative-asset $u(t,x)$ as a risk-neutral expectation of some function of the future value of an underlying process $X$.  Mathematically, this is expressed as
\begin{align}
u(t,x) = \Et_x [ H(X_t) ] = \int H(y) \, p(t,x,y) \, dy . \label{eq:u}
\end{align}
Here, $p(t,x,y)$ is the transition density of the $X$ under $\Pt$.  If it turns out that the ininitesmal generator $\L$ of the underlying process is self-adjoint 
\footnote{
An operator $\L$ is \emph{self-adjoint} on a Hilbert space $\H$ with inner product $(\cdot,\cdot)$ if $\text{dom}{(\L)}=\text{dom}(\L^*)$ and $(\L f,g)=(f,\L g)$ for all $f,g \in \text{dom}(\L)$.
Please see appendix \ref{sec:Hilbert} for a brief review of self-adjoint operators in Hilbert Spaces.
}
on a Hilbert space with weighting measure $m(x)dx$ and if the spectrum of $\L$ is purely discrete, then the transition density of $X$ has an eigenfunction expansion
\begin{align}
p(t,x,y) = m(y) \sum_n e^{-\lam_n t} \psi_n(y) \psi_n(x) , \label{eq:p}
\end{align}
where $\left\{ \lam_n \right\}$ are the eigenvalues of $(-\L)$ and $\left\{ \psi_n \right\}$ are the corresponding eigenfunctions
\begin{align}
-\L \, \psi_n = \lam_n \, \psi_n .
\end{align}
The value of a derivative-asset can then be expressed analytically by inserting \eqref{eq:p} into \eqref{eq:u}
\begin{align}
u(t,x)
		&=		\sum c_n \, e^{-\lam_n t} \, \psi_n(x) , &
c_n
		&=		\( \psi_n , H \) := \int H(y) \psi_n(y) m(y) dy .
\end{align}
\par
Under some basic assumptions, the infinitesimal generator of a general one-dimensional diffusion 
\begin{align}
\L			&=				\frac{1}{2} a^2(x) \d^2_{xx} + b(x) \d_x - k(x) , &
x				&\in			(e_1,e_2) , \label{eq:L}
\end{align}
with domain $\text{dom}(\L)$ (described in appendix \ref{sec:BCs})
is \emph{always} self-adjoint on the Hilbert space $\H = L^2(I,\m)$, 
where $I \subset \R$ is an interval with endpoints $e_1$ and $e_2$ and
$\m$ is the speed density of the diffusion
\begin{align}
\m(x)				&:=			\frac{2}{a^2(x)}\exp \( \int_{x_0}^x \frac{2 \, b(y)}{a^2(y)} dy \) .	& &(\text{speed density)} \label{eq:m}
\end{align}
The lower limit of integration $x_0 \in I$ is arbitrary.
Thus, when a one-dimensional diffusion is adequate for describing the dynamics of an underlying, the spectral method outlined above serves as a powerful tool for analytically pricing derivatives on that underlying.  Among the topics that have been addressed by applying spectral methods to one-dimensional diffusions are option pricing (both vanilla and exotic), mortgages valuation, interest rate modeling, volatility modeling, and credit risk (see 
\citet*{linetsky2001, linetsky2002, linetsky2003, linetsky2004, AlbaneseLawi2005, albanese2004unifying, albanese2001black, lewis1998, lipton2002universal, Goldstein1997, linetsky2004blackinterest, linetsky2007intensity, JDCEV, linetsky2004lookback, linetsky2004asian, linetsky2006bankruptcy}).  A useful reference on the topic of spectral methods for one-dimensional diffusions in finance is \citet*{linetskybook}.
\par
As widely applicable as one-dimensional diffusions are in finance, there are applications in which one-dimensional diffusions are not adequate for describing the dynamics of an underlying.  This is the case, for example, in a stochastic volatility setting, where the volatility of the asset that underlies a derivative is controlled by (possibly multiple) nonlocal diffusions.  Ideally, one would like to employ techniques from spectral theory to solve problems that relate to multidimensional diffusions.  Unfortunately, whereas the infinitesimal generator of a one-dimensional diffusion is practically guaranteed to be self-adjoint, the infinitesimal generator of a multidimensional diffusion is only self-adjoint when the drift vector satisfies certain constraints imposed by the volatility matrix.  The drift constraint is not satisfied by any of the most prominent stochastic volatility models -- \citet*{heston}, \citet*{hullwhite1987}, \citet*{stein1991stock} and the SABR model by \citet*{sabr} -- which complicates the use of spectral methods.
\par
Recently, \citet*{lorig2}, show one way to deal with this issue.  By combining techniques from singular perturbation theory and spectral theory, the authors are able to express the approximate price of a (possibly path-dependent) option as an eigenfunction expansion, even though the infinitesimal generator of the two-dimensional diffusion they work with is not self-adjoint.  As notable as their work is, the results of \citet*{lorig2} are valid only when the asset underlying the option is a Black-Scholes-like geometric Brownian motion (GBM) with fast mean-reverting stochastic volatility.
\par
In this paper, we extend the work of \citet*{lorig2} in four important ways.
\begin{enumerate}
\item As a ``base'' model, we work with a general one-dimensional diffusion $dX_t = \nu(X_t) dt + a(X_t)dW_t$.  This is in contrast to \citet*{lorig2}, where the only base model considered is a GBM: $dX_t = \mu X_t dt + \sig X_t dW_t$.
\item The general diffusion we work with may exhibit killing (jump to default) at a rate $h(X_t) \geq 0$.  In the GBM case considered in \citet*{lorig2}, $X$ is always strictly positive.
\item To our general diffusion we add two factors of nonlocal volatility: $a(X_t) \rightarrow a(X_t) f(Y_t,Z_t)$.  The first factor $Y$ is a fast-varying factor.  The second factor $Z$ is slow-varying.  Thus, our model is a \emph{multiscale} stochastic volatility model.  Again, this is in contrast to \citet*{lorig2}, where the analysis is limited to a single fast mean-reverting factor of volatility $\sig X_t \rightarrow f(Y_t) X_t$.
\item In changing from the physical probability measure to the risk-neutral pricing measure, we consider a class of market prices of risk that is general enough to treat credit, equity, and interest rate derivatives in a single framework.  In \citet*{lorig2} the form chosen for the market price of risk restricts the authors to equity derivatives only.
\end{enumerate}
As in \citet*{lorig2}, we will derive an eigenfunction expansion for the approximate price of a derivative-asset despite the fact that the infinitesimal generator we consider is not (in general) self-adjoint.  Unlike \citet*{lorig2}, because our multidimensional diffusion contains both a fast-varying and a slow-varying factor of volatility, we must combine techniques from both singular \emph{and} regular perturbation theory to achieve our result. In \citet*{lorig2}, only singular perturbation techniques are required, due to the presence of a single fast mean-reverting factor of volatility.
\par
Of course, the idea of combining singular and regular perturbation techniques in a multiscale stochastic volatility setting is not particularly new or unique.  The seminal paper on the subject, applied in a Black-Scholes-like GBM setting, is due to \citet*{fouque2004multiscale}.  Further application of the singular and regular perturbation methods developed in \citet*{fouque2004multiscale} led to papers concerning bond-pricing, interest rate derivatives, credit derivatives, and option pricing in a CEV-like setting (see \citet*{foque2008bond,cfps,fouque2006stochastic,fouque2008modeling,fouqueCEV}).  There is also a book by \citet*{fpss}, which contains the many of the key results from the above mentioned publications.  What this paper contributes to the existing literature on multiscale diffusions is flexibility and simplicity.  From a flexibility standpoint, the methods developed in this paper are able to encapsulate, in a unified framework, many of the results contained in
\citet*{fouqueCEV, cfps, foque2008bond, fouque2006stochastic, fouque2004multiscale, fpss, fouque2008modeling},
as well as further results, which are not contained in these works (e.g., jump to default CEV with multiscale stochastic volatility, see section \ref{sec:JDCEV}).  With regards to simplicity, the spectral method we develop reduces the derivative pricing problem to that of solving a single, one-dimensional eigenvalue equation.  Once this equation is solved, the approximate price of a derivative-asset can be calculated formulaically by computing a few simple inner products.  This is in contrast to the methods developed in
\citet*{fouqueCEV, cfps, foque2008bond, fouque2006stochastic, fouque2004multiscale, fpss, fouque2008modeling}, 
where, in order to express the approximate price of a derivative-asset, an inhomogeneous partial differential equation (PDE) must be solved.
\par
The rest of this paper proceeds as follows.  
In section \ref{sec:model} we introduce a class of models described by multiscale diffusions.  We also explain the kind of derivative-asset we wish to consider.
In section \ref{sec:pricing} we solve (approximately), the problem of pricing a derivative-asset.  This is done in several steps.  First, using risk-neutral pricing, we derive a Cauchy problem which, if solved, would yield the exact value of a derivative-asset.  Next, 
we use techniques from singular and regular perturbation theory to formally derive three simpler Cauchy problems, which, if solved, would yield the approximate value of a derivative-asset.  Finally, using eigenfunction expansion techniques, we solve these Cauchy problems explicitly.  The solutions are given in Theorems \ref{thm:u00}, \ref{thm:u10} and \ref{thm:u01}.
In section \ref{sec:examples}, we illustrate our method of pricing derivative-assets with three examples.  We also provide an appendix, which contains some mathematical results that we use throughout this paper.

%
%

\section{A Class of Multiscale Models}\label{sec:model}
Let $(\Om,\F,\P)$ be a probability space supporting correlated Brownian motions $(W^x,W^y,W^z)$ and an exponential random variable $\mathscr{E} \sim \text{Exp}(1)$, which is independent of $(W^x,W^y,W^z)$.  
We shall consider a three-factor economy described by a time-homogenous, continuous-time Markov process $\Xc = (X,Y,Z)$, which takes values in some state space $E = I \times \R \times \R$.  Here, $I$ is an interval in $\R$ with endpoints $e_1$ and $e_2$ such that $-\infty \leq e_1 < e_2 \leq \infty$.  We assume that $\Xc$ starts in $E$ and is instantaneously killed (sent to an isolated cemetery state $\Delta$) as soon as $X$ leaves $I$.  Specifically, the dynamics of $\Xc$ under the physical measure $\P$ are as follows:
\begin{align}
\Xc_t							&=				\begin{cases}
																\( X_t, Y_t, Z_t \)	&\tau_I > t \\
																\Delta									&\tau_I \leq t
																\end{cases}	, &
\tau_I						&=			\inf \left\{ t > 0 : X_t \notin I \right\} ,
\end{align}
where $\(X,Y,Z\)$ are given by
\begin{align}
 \left\{ \begin{aligned}
dX_t				&=			\nu(X_t) \, dt + a(X_t) f(Y_t,Z_t) \, dW^x_t , \\
dY_t				&=			\frac{1}{\eps} \alpha(Y_t) dt + \frac{1}{\sqrt{\eps}} \beta(Y_t) \, dW^y_t , \\
dZ_t				&=			\del c(Z_t) dt + \sqrt{\del} g(Z_t) \, dW^z_t , \\
d\< W^x,W^y\>_t			&=	\rho_{xy} \, dt , \\
d\< W^x,W^z\>_t			&=	\rho_{xz} \, dt , \\
d\< W^y,W^z\>_t			&=	\rho_{yz} \, dt , \\
(X_0,Y_0,Z_0)					&=	(x,y,z) \in E .
\end{aligned} \right. \label{eq:Physical}
\end{align}
Here, $(\rho_{xy},\rho_{xz} ,\rho_{yz})$ satisfy $|\rho_{xy}|, |\rho_{xz}| ,|\rho_{yz}| \leq 1$ and $1 + 2 \rho_{xy} \rho_{xz} \rho_{yz} - \rho_{xy}^2 - \rho_{xz}^2 - \rho_{yz}^2 \geq 0$ so that the correlation matrix of the Brownian motions is positive-semidefinite.
\par
The process $X$ could represent a variety of things.  For example, it could represent the price of a stock, the value of an index, the risk-free short-rate of interest, etc.  More generally, $X$ could represent an exogenous factor that controls the value of any or all of the items mentioned above.  Under the physical measure $\P$, the process $X$ has instantaneous drift $\nu(X_t)$ and stochastic volatility $a(X_t) f(Y_t,Z_t) > 0$, which contains both a local component $a(X_t)$ and nonlocal component $f(Y_t,Z_t)$.  The nonlocal component of volatility $f(Y_t,Z_t)$ is controlled by two factors: $Y$ and $Z$.  We note that the infinitesimal generators of $Y$ and $Z$
\begin{align}
\L_Y^\eps				&=		\frac{1}{\eps} \( \frac{1}{2}\beta^2(y) \, \d^2_{yy}+ \alpha(y) \, \d_y \) , \label{eq:Ly}\\
\L_Z^\del				&=		\del \( \frac{1}{2}g^2(z) \, \d^2_{zz}+ c(z) \, \d_z \) , \label{eq:Lz}
\end{align}
are scaled by factors $1/\eps$ and $\del$ respectively.  Thus, $Y$ and $Z$ have intrinsic time-scales $\eps>0$ and $1/\del>0$.  We assume $\eps<<1$ and $\del << 1$ so that the intrinsic time-scale of $Y$ is small and the intrinsic time-scale of $Z$ is large.  Hence, $Y$ represents a fast-varying factor of volatility and $Z$ represents a slow-varying factor.  
Note that $\L_Y^\eps$ and $\L_Z^\del$ have the form \eqref{eq:L} with $k(x)=0$ for all $x \in I$.
Throughout this paper, we will assume that the domain of any operator of the form \eqref{eq:L} is given by equation \eqref{eq:domainL} of appendix \ref{sec:BCs}.
\par
We are interested in pricing a (possibly defaultable) derivative-asset, whose payoff at time $t>0$ may depend on the path of $X$.
Specifically, we shall consider payoffs of the form
\begin{align}
\text{Payoff}		&=			H(X_t) \, \I_{\left\{ \tau > t \right\}} . \label{eq:Payoff}
\end{align}
Here, $\tau$ is a random time, which represents the default time of the derivative-asset.  Because we are interested in pricing derivatives, we must specify the dynamics of $(X,Y,Z)$ under the risk-neutral pricing measure, which we denote as $\Pt$.  We have the following risk-neutral dynamics
\begin{align}
\left\{ \begin{aligned}
dX_t														&=				\Big( b(X_t) - a(X_t) f(Y_t,Z_t) \Om(Y_t,Z_t) \Big) dt + a(X_t) f(Y_t,Z_t) \, d\Wt^x_t , \\
dY_t														&=				\( \frac{1}{\eps} \alpha(Y_t) - \frac{1}{\sqrt{\eps}} \beta(Y_t) \Lam(Y_t,Z_t) \) dt + \frac{1}{\sqrt{\eps}} \beta(Y_t) \, d\Wt^y_t , \\
dZ_t														&=				\( \del c(Z_t) - \sqrt{\del} g(Z_t) \Gam(Y_t,Z_t) \) dt + \sqrt{\del} g(Z_t) \, d\Wt^z_t , \\
d\< \Wt^x,\Wt^y\>_t			&=				\rho_{xy} \, dt , \\
d\< \Wt^x,\Wt^z\>_t			&=				\rho_{xz} \, dt , \\
d\< \Wt^y,\Wt^z\>_t			&=				\rho_{yz} \, dt , \\
(X_0,Y_0,Z_0)					&=	(x,y,z) \in E ,
\end{aligned} \right. \label{eq:RiskNeutral}
\end{align}
where
\begin{align}
d\Wt_t^x					&:= 			dW_t^x +  \( \frac{ \nu(X_t) - b(X_t) }{a(X_t) f(Y_t,Z_t) } + \Om\( Y_t, Z_t \) \) \, dt, \\
d\Wt_t^y					&:= 			dW_t^y +  \Lam(Y_t,Z_t) \, dt, \\
d\Wt_t^z					&:= 			dW_t^z +  \Gam(Y_t,Z_t) \, dt,
\end{align}
are driftless BM's under $\Pt$.  We assume \eqref{eq:RiskNeutral} has a unique strong solution.
\par
As mentioned above, the random time $\tau$ represents the default time of the derivative-asset.  In our framework, default can occur in one of two ways.  Either default occurs when $X$ exits the interval $I$, or default occurs at a random time $\tau_h$, which is controlled by an instantaneous hazard rate $h(X_t) \geq 0$.  Mathematically, we express the default time $\tau$ as follows
\begin{align}
\left\{
\begin{aligned}
\tau								&=				\tau_I \wedge \tau_h ,  \\
\tau_I							&=				\inf \left\{ t \geq 0 : X_t \notin I \right\} , 	\\
\tau_h							&=				\inf  \left\{ t \geq 0 : \int_0^t h(X_s) \, ds \geq \mathscr{E} \right\}	  ,			&					\mathscr{E} 	&\sim 	\text{Exp}(1) ,		&  &\mathscr{E} 	\ind		(X,Y,Z) .	
\end{aligned}
\right. \label{eq:tau}
\end{align}
Note that the exponentially distributed random variable $\mathscr{E}$ is independent of $(X,Y,Z)$.
\par
Following \citet*{elliot2000}, to keep track of $\tau_h$, we introduce the indicator process $D_t:=\I_{\{t \geq \tau_h \}}$.  Denote by $\mathbb{D}=\{\D_t,t\geq0\}$ the filtration generated by $D$ and by $\mathbb{F}=\{\F_t,t\geq0\}$ the filtration generated by $(W^x,W^y,W^z)$.  Define the enlarged filtration $\mathbb{G}=\{\G_t,t\geq0\}$ where $\G_t=\F_t \vee \D_t$.  Note that $(X,Y,Z)$ is adapted to $\mathbb{G}$ and $\tau$ is a $\mathbb{G}$-stopping time (i.e., $\{\tau \leq t\}\in\G_t$ for every $t \geq 0$).
\par
We shall assume our economy includes a risk-free asset, which grows instantaneously at short-rate $r(X_t) \geq 0$.  Thus, if our economy includes, for example, a non-dividend-paying defaultable asset $S$, whose price process is described by $S_t = \I_{\{\tau>t\}}X_t$, where the state space of $X$ was $I=(0,\infty)$, then the discounted asset price $\{e^{-\int_0^t r(X_s)ds}S_t,t\geq0\}$ must be a $(\Pt,\mathbb{G})$-martingale.  The martingale property can be achieved by setting $b(X_t)=\[r(X_t)+h(X_t)\]X_t$ and $\Om(Y_t,Z_t)=0$ in \eqref{eq:RiskNeutral}.  The reason for adding the hazard rate $h(X_t)$ to the risk-free rate of interest $r(X_t)$ in the drift of $X$ is to compensate for the possibility of a default (see \citet{JDCEV}, Section 2).
\par
On the other hand, if $X$ only describes the risk-free rate of interest through $r(X_t)$, then in changing from the physical measure $\P$ to the pricing measure $\Pt$, one may not have a reason to change the drift of $X$ from $\nu(X_t)$ to $b(X_t)$.  However, one may still 
%
%
wish to consider the effect of including a market price of risk.  In this case, one could set $b(X_t) = \nu(X_t)$ and keep $\Om(Y_t,Z_t) \neq 0$ in \eqref{eq:RiskNeutral}.
\par
We have now described our economy under both the physical and risk-neutral pricing measures, and we have specified the kind of derivative-asset we wish to price.  However, we have not been specific about certain technical assumptions, which we shall need in order to prove the accuracy of our pricing approximation.  
Specific model assumptions can be found in Appendix \ref{sec:assumptions}.

%
%

\section{Derivative Pricing} \label{sec:pricing}
We wish to price a derivative-asset whose payoff is of the form \eqref{eq:Payoff}, where the default time $\tau$ is given by \eqref{eq:tau}.  Using risk-neutral pricing and the Markov property of $\Xc$, the value $u^{\eps,\del}(t,x,y,z)$ of such a derivative-asset at time zero is given by
\begin{align}
u^{\eps,\del}(t,x,y,z)
		&=			\Et_{x,y,z} \[ \exp \( -\int_0^t r(X_s) \, ds \) H(X_t) \, \I_{\left\{ t > \tau \right\}} \] , \label{eq:Expectation}
\end{align}
where $(x,y,z) \in E$ represents the starting point of the process $(X,Y,Z)$.  By conditioning on the path of $X$ (see p. 225 of \citet*{linetskybook}) and by using the Feynman-Kac formula, one can show that $u^{\eps,\del}(t,x,y,z)$ satisfies the following Cauchy problem
\begin{align}
\( -\d_t + \L^{\eps,\del} \) u^{\eps,\del}	&=	0 , 			&	&		(x,y,z) \in E , \, t \in \R^{+} , \label{eq:uPDE} \\
u^{\eps,\del}(0,x,y,z)															&= H(x) , 	
\label{eq:uBC}
\end{align}
where the operator $\L^{\eps,\del}$ is given by 
\begin{align}
\L^{\eps,\del}			&=					\frac{1}{\eps} \L_0 + \frac{1}{\sqrt{\eps}} \L_1 + \L_2 + \sqrt{\frac{\del}{\eps}} \M_3 
																+ \sqrt{\del}  \M_1 + \del \M_2, \label{eq:L,eps,del} \\
\L_0 								&= 					\frac{1}{2} \beta^2(y) \, \d^2_{yy}+ \alpha(y) \, \d_y, \\
\L_1								&= 					\beta(y) \Big( \rho_{xy} a(x) f(y,z) \d_x - \Lam(y,z) \Big) \d_y, \label{eq:L-10} \\
\L_2								&= 					\frac{1}{2} a^2(x) f^2(y,z) \d^2_{xx} + \Big( b(x) - a(x) \Om(y,z) f(y,z) \Big) \d_x - k(x) , \label{eq:L00} \\
\M_3								&= 					\rho_{xz} \beta(y) g(z) \d^2_{yz} , \\
\M_1								&= 					g(z) \Big( \rho_{xz} a(x) f(y,z) \d_x - \Gam(y,z) \Big) \d_z, \\
\M_2								&= 					\frac{1}{2}g^2(z) \d^2_{zz}+ c(z) \d_z , \\
k(x)								&=					r(x) + h(x) .
\end{align}
Aside from the initial condition \eqref{eq:uBC}, the function $u^{\eps,\del}(t,x,y,z)$ must satisfy additional boundary conditions (BCs) at the endpoints $e_1$ and $e_2$ of the interval $I$.  The BCs at $e_1$ and $e_2$ are understood to be contained in the domain of $\L^{\eps,\del}$ and will depend on the nature of the process $\Xc$ near the endpoints of $I$.  Appropriate BCs are discussed in appendix \ref{sec:BCs}.
\par
From equation \eqref{eq:Ly} we see that $\L_0=\L_Y^1$.  We assume that a diffusion with generator $\L_Y^1$ has an invariant distribution $\Pi$ with density $\pi$.  In section \ref{sec:asymptotics}, it will be important to note that the operator $\L_0$ with $\text{dom}(\L_0)=L^2(\R,\pi)$ is self-adjoint acting on the Hilbert space $L^2(\R,\pi)$.


\subsection{Formal Asymptotic Analysis}\label{sec:asymptotics}
We wish to solve Cauchy problem \eqref{eq:uPDE}-\eqref{eq:uBC}.  For general $(f,\alpha,\beta,\Lam,c,g,\Gam)$, no analytic solution exists.  However, we notice that, for fixed $\del$, the terms in \eqref{eq:L,eps,del} containing $\eps$ are diverging in the small-$\eps$ limit, giving rise to a \emph{singular} perturbation.  Meanwhile, for fixed $\eps$, the terms containing $\delta$ are small in the small-$\del$ limit, giving rise to a \emph{regular} perturbation.  Thus, the small-$\eps$ and small-$\del$ regime gives rise to a combined singular-regular perturbation about the $\O(1)$ operator $\L_2$.  This suggests that we seek an asymptotic solution to Cauchy problem \eqref{eq:uPDE}-\eqref{eq:uBC}.  To this end, we expand $u^{\eps,\del}$ in powers of $\sqrt{\eps}$ and $\sqrt{\del}$ as follows
\begin{align}
u^{\eps,\del}
		&=	\sum_{j \geq 0} \sum_{i \geq 0} \sqrt{\eps}^{\,i} \sqrt{\del}^{\,j} u_{i,j} . \label{eq:uexpand}
\end{align}
Our goal will be to find an approximation of the price $u^{\eps,\del} \approx u_{0,0} + \sqrt{\eps} u_{1,0} + \sqrt{\del} u_{0,1}$.  The choice of expanding in half-integer powers of $\eps$ and $\del$ is natural given the form of $\L^{\eps,\del}$.  We will justify this expansion when we prove the accuracy of our pricing approximation in Theorem \ref{thm:accuracy}.
\par
Because we are performing a dual expansion in half-integer powers of $\eps$ and $\del$, we must decide which of these parameters we will expand in first.  We choose to perform a regular perturbation expansion with respect to $\del$ first. Then, within each of the equations that result from the regular perturbation analysis, we will perform a singular perturbation expansion with respect to $\eps$. \footnote{Note that we do not take a limit as $\eps$ and $\del$ go to zero simultaneously.}


\subsubsection*{Regular Perturbation Analysis of Equation \eqref{eq:uPDE}}
The regular perturbation expansion proceeds by separating terms in $\L^{\eps,\del}$ and $u^{\eps,\del}$ by powers of $\sqrt{\del}$
\begin{align}
\L^{\eps,\del}							&=						\L^\eps + \sqrt{\del} \, \M^\eps + \del \M_2 , &
u^{\eps,\del}								&=						\sum_{j \geq 0}  \( \sqrt{\del} \)^j u_j^{\eps} \label{eq:u,del},
\end{align}
where
\begin{align}
\L^\eps 							&=						\frac{1}{\eps} \L_0 + \frac{1}{\sqrt{\eps}} \L_1 + \L_2 , &
\M^\eps								&=						\frac{1}{\sqrt{\eps}} \M_3 +  \M_1 , \label{eq:L,eps}\\
u_j^{\eps}								&=						\sum_{i \geq 0} \( \sqrt{\eps} \)^i u_{i,j} , \label{eq:u,eps}
\end{align}
Inserting expansions \eqref{eq:u,del} into PDE \eqref{eq:uPDE} and collecting terms of like-powers of $\sqrt{\del}$ we find that the lowest order equations of the regular perturbation expansion are
\begin{align}
\O(1):&&
0	
		&=	\(-\d_t + \L^\eps \) u_0^\eps , \label{eq:order1}\\
\O(\sqrt{\del}) :&&
0	
		&=	\(-\d_t + \L^\eps \) u_1^\eps + \M^\eps u_0^\eps . \label{eq:orderhalf}
\end{align}
Now, within equations \eqref{eq:order1} and \eqref{eq:orderhalf}, we will perform a singular perturbation expansion with respect to the parameter $\eps$.  We begin with \eqref{eq:order1}, the $\O(1)$ equation. 
\subsubsection*{Singular Perturbation Analysis of Equation \eqref{eq:order1}}
We insert expansions \eqref{eq:L,eps} and \eqref{eq:u,eps} into \eqref{eq:order1} and collect terms of like-powers of $\sqrt{\eps}$.  The resulting order $\O\( 1/ \eps \)$ and $ \O\( 1/ \sqrt{ \eps} \) $ equations are
\begin{align}
\O\( 1/ \eps \) :& &
0
				&=				\L_0 	 u_{0,0} , \label{eq:eps,-2} \\
\O\( 1/ \sqrt{ \eps} \) :& &
0
				&=				\L_0 	 u_{1,0} +	\L_1 	u_{0,0} . \label{eq:eps,-1}
\end{align}
We note that all terms in $\L_0$ and $\L_1$ take derivatives with respect to $y$.  Therefore, if $ u_{0,0}$ and $ u_{1,0}$ are independent of $y$, equations \eqref{eq:eps,-2} and \eqref{eq:eps,-1} will be satisfied.  Thus, we choose $u_{0,0}=u_{0,0}(t,x,z)$ and $u_{1,0}=u_{1,0}(t,x,z)$.  Continuing the asymptotic analysis, the order $\O\(1 \)$ and $ \O\( \sqrt{ \eps} \) $ equations are
\begin{align}
\O( 1 ) :& &
0
				&=				\L_0 u_{2,0} + \( -\d_t + \L_2 \) u_{0,0} , \label{eq:PoissonA} \\
\O( \sqrt{ \eps} ) :& &
0
				&=				\L_0 u_{3,0} + \L_1 u_{2,0} + \( -\d_t + \L_2 \) u_{1,0} . \label{eq:PoissonF}
\end{align}
where we have used $ \L_1u_{1,0}=0$ in \eqref{eq:PoissonA}.  Equations \eqref{eq:PoissonA} and \eqref{eq:PoissonF} are Poisson equations of the form
\begin{align}
0
				&=				\L_0 u + \chi	. \label{eq:PoissonGeneric}
\end{align}
Recall that $\L_0$ is a self-adjoint operator acting on $L^2(\R,\pi)$.  
By the Fredholm alternative
\footnote{Please refer to Appendix \ref{sec:Poisson} for an discussion of the Fredholm alternative},
in order for equations of the form \eqref{eq:PoissonGeneric} to admit solutions $u \in \text{dom}(\L_0)=L^2(\R,\pi)$, the following \emph{centering condition} condition must be satisfied
\begin{align}
\< \chi \> &:=  \int \chi(y) \, \pi(y) \, dy = 0	, \label{eq:center}
\end{align}
where we have introduced the notation $\< \cdot \>$ to indicate averaging over the invariant distribution $\Pi$.  In equations \eqref{eq:PoissonA} and \eqref{eq:PoissonF} centering condition \eqref{eq:center} corresponds to
\begin{align}
\O(1) :& &
0
				&=				\( -\d_t +\< \L_2 \> \) u_{0,0}, \label{eq:CenterA} \\
\O( \sqrt{ \eps} ) :& &
0
				&=				\< \L_1 u_{2,0}\> + \( -\d_t +\< \L_2 \> \) u_{1,0}. \label{eq:CenterB}
\end{align}
The operator $\< \L_2 \>$ is given by
\begin{align}
\< \L_2 \>					&=				\frac{1}{2} \sigb^2\, a^2(x) \d^2_{xx} + \Big( b(x) - \fOmb\,\, a(x) \Big) \d_x - k(x), &
x										&\in			(e_1,e_2) , \label{eq:<L00>}
\end{align}
where we have defined
\begin{align}
\sigb^2(z) 				&:= 		\< f^2(\cdot,z) \> 		, &
\fOmb(z) 					&:= 		\< f(\cdot,z) \Om(\cdot,z) \> 	, \label{eq:fOmbar}
\end{align}
We assume $\sigb^2(z) < \infty $ and $\fOmb(z) < \infty$.
Given appropriate BCs at $e_1$ and $e_2$, one can find a unique solution $u_{0,0}$ to PDE \eqref{eq:CenterA}.  However, in order to make use of \eqref{eq:CenterB} we need an expression for $\< \L_1 u_{2,0} \>$.  To this end, we note from \eqref{eq:PoissonA} that
\begin{align}
\L_0 u_{2,0}
				&=				- \( -\d_t + \L_2 \) u_{0,0} \\
				&=				- \( -\d_t + \L_2 \) u_{0,0} + \( -\d_t + \< \L_2 \> \) u_{0,0} \\
				&=				- \( \frac{1}{2} a^2 \Big( f^2 - \sigb^2 \Big) \d^2_{xx} - a \Big( f\Om - \fOmb \Big) \d_x \) u_{0,0} . 
									\label{eq:L-20psi20}
\end{align}  
Now, we introduce $\phi(y,z)$ and $\eta(y,z) $ as the solutions to the following Poisson equations
\begin{align}
\L_0 \, \phi				&=				f^2 - \sigb^2 , &
\L_0 \, \eta				&=				f \Om - \fOmb . \label{eq:eta} 
\end{align}
Using \eqref{eq:eta}, we can express $u_{2,0}$ as
\begin{align}
u_{2,0}			&=				- \( \frac{1}{2} a^2 \, \phi \, \d^2_{xx} - a \, \eta \, \d_x \) u_{0,0} + C. \label{eq:psi20}
\end{align}
Note that $C$ is a constant that is independent of $y$.  Now, inserting \eqref{eq:L-10} and \eqref{eq:psi20} into $\< \L_0 u_{2,0}\>$ we find
\begin{align}
\< \L_1 u_{2,0} \>
&=			- \< \( \beta \Big( \rho_{xy}\, a\, f \,\d_x - \Lam \Big) \d_y \) \( \frac{1}{2} \,a^2\, \phi\, \d^2_{xx} - a \eta\, \d_x  \) u_{0,0} \> 
=			- \A \, u_{0,0} . \label{eq:<Lpsi>=A}
\end{align}
The operator $\A$ is given by
\begin{align}
\A
		&=			- \Vc_3  \, a(x) \d_x a^2(x) \d^2_{xx} - \Vc_2 \, a^2(x) \d^2_{xx} - \Uc_2 \, a(x) \d_x a(x) \d_x - \Uc_1 \, a(x) \d_x, \label{eq:A10}
\end{align}
where we have defined four \emph{group parameters}
\footnote{The phrase \emph{group parameter} refers to any $z$-dependent parameter which can be calculated as a moment of model-specific functions.  As we shall see, the effect that the functions ($\beta$, $\Lam$, $g$, $\Gamma$) have on the approximate price of a derivative asset is felt only through eight group parameters.}
\begin{align}
\Vc_3		
		&=			-\frac{\rho_{xy}}{2} \Big\langle \beta f \d_y \phi \Big\rangle, &
\Vc_2
		&=			\frac{1}{2} \Big\langle \beta \Lam \d_y \phi \Big\rangle , &
\Uc_2
		&=			\rho_{xy} \Big\langle \beta f \d_y \eta \Big\rangle , &
\Uc_1
		&=			- \Big\langle \beta \Lam \d_y \eta \Big\rangle .
\end{align}
Inserting \eqref{eq:<Lpsi>=A} into \eqref{eq:CenterB} we find
\begin{align}
\A \, u_{0,0}
			&=		\( -\d_t + \< \L_2 \> \) u_{1,0} . \label{eq:eigentemp}
\end{align}
Given an expression for $u_{0,0}$ and appropriate BCs, one can use PDE \eqref{eq:eigentemp} to find an expression for $u_{1,0}$.  This is as far as we will take the analysis of equation \eqref{eq:order1}.  We now return to the $\O(\sqrt{\del})$ equation \eqref{eq:orderhalf}.  
\subsubsection*{Singular Perturbation Analysis of Equation \eqref{eq:orderhalf}}
The singular perturbation analysis of \eqref{eq:orderhalf} proceeds by inserting expansions \eqref{eq:L,eps} and \eqref{eq:u,eps} into \eqref{eq:orderhalf} and collecting terms of like-powers of $\sqrt{\eps}$.  The resulting order $\O( \sqrt{\del}/ \eps )$ and $ \O( \sqrt{\del}/ \sqrt{ \eps} )$ equations are
\begin{align}
\O( \sqrt{\del}/ \eps ) :& &
0
				&=				\L_0 	u_{0,1} , \label{eq:order-21} \\
\O( \sqrt{\del}/ \sqrt{ \eps} ) :& &
0
				&=				\L_0 u_{1,1} +	\L_1 u_{0,1} , \label{eq:order-11}
\end{align}
where we have used $\M_3 u_{0,0} = 0$.  We note that if $ u_{0,1}$ and $ u_{1,1}$ are independent of $y$, equations \eqref{eq:order-21} and \eqref{eq:order-11} will automatically be satisfied.  Thus, we choose $u_{0,1}=u_{0,1}(x,z)$ and $u_{1,1}=u_{1,1}(x,z)$.  Continuing the asymptotic analysis, the order $\O( \sqrt{\del} )$ equation is
\begin{align}
\O( \sqrt{\del} ) :& &
0
				&=						\L_0 u_{2,1} + \(-\d_t+ \L_2 \) u_{0,1} + \M_1 u_{0,0} ,	\label{eq:PoissonC}
\end{align}
where we have used $\L_1 u_{1,1} = 0$ and $ \M_3 u_{1,0}=0$.  We note that equation \eqref{eq:PoissonC} is a Poisson equation for $u_{2,1}$ of form \eqref{eq:PoissonGeneric}.  By the Fredholm alternative, in order for \eqref{eq:PoissonC} to admit a solution $u_{2,1} \in L^2(\R,\pi)$ centering condition \eqref{eq:center} must be satisfied.  In \eqref{eq:PoissonC} centering condition \eqref{eq:center} corresponds to
\begin{align}
0
		&=		\( -\d_t + \< \L_2 \> \) u_{0,1} + \< \M_1 \> u_{0,0} . \label{eq:EigenTemp2}
\end{align}
Note that $u_{0,0}(t,x,z)$ depends on $z$ only through $\sigb(z)$ and $\fOmb(z)$.  Thus, in \eqref{eq:EigenTemp2} $\< \M_1 \>$ can be written
\begin{align}
\< \M_1 \>
		&=	-	\B\d_z , \label{eq:<L01>} \\
\B
		&=		- \Vc_1 a(x) \d_x - \Vc_0 , &
\Vc_1
		&:=		g \rho_{xz}\<f\> , &
\Vc_0
		&:=		- g \< \Gam \> , \label{eq:B}	\\
\d_z	 
		&= \sigb' \d_{\sigb} + \fOmb' \d_{\fOmb} , &
\sigb' 
		&:= \d_z \sigb , &
\fOmb' 
		&:= \d_z \fOmb . \label{eq:dz}	
\end{align}
Note that we have introduced four more group parameters: $\Vc_1$, $\Vc_0$, $\sigb'$ and $\fOmb'$.
This is as far as we will take the asymptotic analysis of equation \eqref{eq:uPDE}.  For convenience, we review the most important results of this section.


\subsubsection*{Main Results of the Asymptotic Analysis}
\begin{align}
\O ( 1 ):&&
\( -\d_t + \< \L_2 \> \) u_{0,0}
 		&=		0, 		&
u_{0,0}(0,x,z)
		&=		H(x)	,																				\label{eq:EigenA} \\
\O ( \sqrt{ \eps } ):&&
\( -\d_t + \< \L_2 \> \) u_{1,0}
 	&=		\A u_{0,0} , &
u_{1,0}(0,x,z)
		&=		0	\label{eq:EigenB} \\
\O ( \sqrt{ \del } ) :&&
\( -\d_t + \< \L_2 \> \) u_{0,1}
 	&=		\B\d_z u_{0,0} , &
u_{0,1}(0,x,z)
		&=		0 . \label{eq:EigenC} 
\end{align}
The operators $\< \L_2\>$, $\A$, $\B$ and $\d_z$ are defined in \eqref{eq:<L00>}, \eqref{eq:A10}, \eqref{eq:B} and \eqref{eq:dz} respectively.  Note that we have imposed BCs at $t=0$.


\subsection{Explicit Solutions for $u_{0,0}$, $u_{1,0}$ and $u_{0,1}$} \label{sec:u}
In this section we shall explicitly solve equations \eqref{eq:EigenA}, \eqref{eq:EigenB} and \eqref{eq:EigenC} in terms of the eigenfunctions $\{\psi_n \}$ and eigenvalues $\{\lam_n\}$ of the operator $\<\L_2\>$.  To begin, we note that $\<\L_2\>$, given by \eqref{eq:<L00>}, has the form of an infinitesimal generator of a one-dimensional diffusion \eqref{eq:L}
with volatility $\sigb \, a(x)$,
drift $\( b(x) - \fOmb\,\, a(x) \)$
and killing rate $k(x)$.
The $\text{dom}(\<\L_2\>)$ includes BCs, which must be imposed at the endpoints $e_1$ and $e_2$.  Appendix \ref{sec:BCs} describes the appropriate BCs to impose for a general one-dimensional diffusion with a generator of the form \eqref{eq:L}.
\par
Throughout this section we assume $\<\L_2\>$ has a purely discrete spectrum.
We fix a Hilbert space $\H = L^2(I,\m)$ where $\m$ is the speed density corresponding to $\< \L_2 \>$. 
The operator $\<\L_2\>$ is self-adjoint in $\H$ and its domain is a dense subset of $\H$.  Thus, the eigenfunctions $\{ \psi_n \}$ of $\<\L_2\>$ form an orthonormal basis in $\H$.  It is not necessarily true that either $\A : \H \to \H$, $\B  : \H \to \H$ or $\d_z  : \H \to \H$.  As such, we define
\begin{align}
\text{dom}(\A) 				&:= \left\{ \psi \in \H : \A \psi \in \H \right\} , &
\text{dom}(\B) 				&:= \left\{ \psi \in \H : \B \psi \in \H \right\} , \\ 
\text{dom}(\d_z) 			&:= \left\{ \psi \in \H : \d_z \psi \in \H \right\} .
\label{eq:domain,A}
\end{align}


\begin{theorem}\label{thm:u00}
Assume that we can solve the following eigenvalue equation
\begin{align}
- \< \L_2\> \psi_n
		&=		\lam_n \psi_n , &
\psi_n
		&\in \text{\emph{dom}}\(\< \L_2\>\) , \label{eq:Eigen00}
\end{align}
and assume $H \in \H$.  Then the solution $u_{0,0}$ to \eqref{eq:EigenA} is given by
\begin{align}
u_{0,0}
	&=	\sum_n c_n \psi_n T_n , &
c_n
		&=		\( \psi_n , H \) , &
T_n
		&=		e^{ -t \, \lam_n } .
\end{align}
\end{theorem}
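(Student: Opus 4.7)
The plan is to solve the Cauchy problem \eqref{eq:EigenA} by separation of variables in the spectral basis of $\langle \L_2 \rangle$. Since $\langle \L_2 \rangle$ is self-adjoint on $\H = L^2(I,\m)$ with purely discrete spectrum, the spectral theorem guarantees that the eigenfunctions $\{\psi_n\}$ form an orthonormal basis of $\H$. In particular, because $H \in \H$ by hypothesis, we may expand
\begin{align}
H(x) = \sum_n c_n \, \psi_n(x), \qquad c_n = (\psi_n, H),
\end{align}
with the series converging in $\H$.

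Next, I would make the ansatz $u_{0,0}(t,x,z) = \sum_n a_n(t,z) \, \psi_n(x)$. Note that the operator $\langle \L_2 \rangle$ given in \eqref{eq:<L00>} acts only in the $x$ variable, and its coefficients involve $z$ only through $\sigb(z)$ and $\fOmb(z)$; crucially, the eigenfunctions $\psi_n$ and eigenvalues $\lam_n$ may depend parametrically on $z$ through these coefficients, but formally $\langle \L_2 \rangle \psi_n = -\lam_n \psi_n$ holds for each fixed $z$. Substituting the ansatz into $(-\d_t + \langle \L_2 \rangle) u_{0,0} = 0$ and applying the orthonormality $(\psi_m, \psi_n) = \delta_{mn}$ yields the decoupled ODEs
\begin{align}
\d_t a_n(t,z) = -\lam_n \, a_n(t,z),
\end{align}
whose solutions are $a_n(t,z) = a_n(0,z) \, e^{-\lam_n t}$. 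The initial condition $u_{0,0}(0,x,z) = H(x)$ then forces $a_n(0,z) = c_n$, which is independent of $z$. This gives exactly the claimed formula $u_{0,0} = \sum_n c_n \, \psi_n \, T_n$ with $T_n = e^{-\lam_n t}$.

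The main technical obstacle is justifying that the series can be differentiated term-by-term in $t$ and that the action of $\langle \L_2 \rangle$ commutes with the summation. This is standard for self-adjoint operators with purely discrete spectrum: one verifies that the partial sums $\sum_{n \le N} c_n e^{-\lam_n t} \psi_n$ lie in $\text{dom}(\langle \L_2 \rangle)$, that $\sum_n |c_n|^2 < \infty$ by Parseval, and that $\sum_n \lam_n^2 |c_n|^2 e^{-2\lam_n t}$ converges for $t > 0$ (the exponential decay $e^{-\lam_n t}$ absorbs the growth of $\lam_n$ for any $t > 0$, provided $\lam_n \to \infty$, which is guaranteed for a purely discrete spectrum of an unbounded self-adjoint operator). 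Hence term-by-term application of $\d_t$ and $\langle \L_2 \rangle$ is justified, and continuity at $t = 0$ in the $\H$-norm follows from dominated convergence applied to $\sum_n |c_n|^2 (1 - e^{-\lam_n t})^2$. Finally, appropriate boundary conditions at $e_1$ and $e_2$ are automatically inherited from the eigenfunctions $\psi_n \in \text{dom}(\langle \L_2 \rangle)$.
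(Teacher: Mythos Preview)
Your proof is correct and follows essentially the same spectral-expansion approach as the paper. The only difference is cosmetic: the paper simply \emph{verifies} that the stated series satisfies the PDE (using $-\langle \L_2\rangle \psi_n = \lam_n \psi_n$ and $\d_t T_n = -\lam_n T_n$) and the initial condition (using $T_n(0)=1$ and the basis expansion $H=\sum_n (\psi_n,H)\psi_n$), whereas you \emph{derive} the series via separation of variables and additionally supply the convergence/term-by-term justification that the paper omits.
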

\begin{proof}
One can easily verify that $u_{0,0}$ satisfies PDE \eqref{eq:EigenA} assuming \eqref{eq:Eigen00} holds.  To see that the BC $u_{0,0}(0,x,z)=H(x)$ is satisfied, notice that $T_n(0,z)=1$ and apply equation \eqref{eq:Identity} to the payoff function $H$
\begin{align}
\text{Id} \, H
		&=		\sum_n \( \psi_n , H \) \psi_n = \sum_n c_n \psi_n .
\end{align}
\end{proof}


\begin{theorem}\label{thm:u10}
Let $c_n$, $ \psi_n$ and $T_n$ be as described in Theorem \ref{thm:u00} and define
\begin{align}
\A_{k,n}
		&:=		\( \psi_k , \A \psi_n \) , &
U_{k,n}
		&:=		\frac{T_k-T_n}{\lam_k-\lam_n} .
\end{align}
Then the solution $u_{1,0}$ to equation \eqref{eq:EigenB} is
\begin{align}
u_{1,0}
		&= 			\sum_n \sum_{k \neq n} c_n \, \A_{k,n} \psi_k U_{k,n}
						- \sum_n c_n \, \A_{n,n} \psi_n \, t \, T_n . \label{eq:u10,long}
\end{align}
\end{theorem}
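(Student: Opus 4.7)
The plan is to expand $u_{1,0}$ in the orthonormal eigenbasis $\{\psi_k\}$ of $-\<\L_2\>$ and convert the PDE \eqref{eq:EigenB} into a decoupled family of scalar, first-order linear ODEs in $t$, then solve each ODE explicitly, taking care to separate the resonant case $k=n$ from the non-resonant case $k\neq n$. Since $t$ is the only differentiated variable in \eqref{eq:EigenB} (the spatial variable $x$ is encoded by which eigenfunction we project onto, and $z$ enters only parametrically through the coefficients $\sigb(z)$, $\fOmb(z)$, and through the group parameters defining $\A$), this is a standard separation of scales.

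First I write the ansatz
\begin{align}
u_{1,0}(t,x,z) = \sum_k a_k(t,z)\, \psi_k(x),
\end{align}
and expand the source. Using Theorem \ref{thm:u00} and the definition $\A_{k,n} = (\psi_k, \A\psi_n)$, I get
\begin{align}
\A u_{0,0} = \sum_n c_n T_n \, \A \psi_n = \sum_k \Bigl( \sum_n c_n T_n \A_{k,n} \Bigr) \psi_k.
\end{align}
Projecting \eqref{eq:EigenB} onto $\psi_k$ and using $\<\L_2\>\psi_k = -\lam_k \psi_k$ yields
\begin{align}
a_k'(t) + \lam_k a_k(t) = -\sum_n c_n\, \A_{k,n}\, e^{-\lam_n t}, \qquad a_k(0) = 0,
\end{align}
where the initial condition comes from the zero boundary data in \eqref{eq:EigenB} together with the orthonormality of $\{\psi_k\}$.

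Solving this linear ODE by integrating factor $e^{\lam_k t}$ gives
\begin{align}
a_k(t) = -\sum_n c_n \A_{k,n} \int_0^t e^{-\lam_k(t-s) - \lam_n s}\, ds.
\end{align}
For $n\neq k$ the integral evaluates to $(e^{-\lam_n t} - e^{-\lam_k t})/(\lam_k - \lam_n) = -U_{k,n}$, contributing $c_n \A_{k,n} U_{k,n}$. For the resonant term $n=k$ the integrand is $e^{-\lam_k t}$ and the integral is $t T_k$, contributing $-c_k \A_{k,k}\, t\, T_k$. Reinserting into the ansatz and relabeling the summation indices then produces
\begin{align}
u_{1,0} = \sum_n \sum_{k\neq n} c_n \A_{k,n} \psi_k\, U_{k,n} - \sum_n c_n \A_{n,n} \psi_n\, t\, T_n,
\end{align}
which is \eqref{eq:u10,long}.

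The only nontrivial point — and the place where care is needed — is justifying the termwise manipulations (interchange of $\A$ with the eigen-sum defining $u_{0,0}$, termwise differentiation in $t$, and convergence of the double sum) rather than the ODE calculation itself. This is handled by the standing assumption that $H \in \H$ together with the hypothesis that $\psi_n \in \text{dom}(\A)$ from \eqref{eq:domain,A}, so that each $\A\psi_n$ admits an $\H$-convergent expansion in $\{\psi_k\}$; the verification that the resulting $u_{1,0}$ satisfies \eqref{eq:EigenB} in the required sense is then immediate by construction, and the initial condition $u_{1,0}(0,x,z)=0$ follows from $T_k(0)=1$, $t\,T_n|_{t=0}=0$, and the identity $\sum_{k\neq n}\A_{k,n}\psi_k\,U_{k,n}(0)=0$ since $U_{k,n}(0)=0$.
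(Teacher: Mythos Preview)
Your proof is correct. It differs in presentation from the paper's argument, though the underlying computation is the same. You take a constructive route: expand $u_{1,0}$ in the eigenbasis, project the PDE onto each $\psi_k$ to obtain the scalar ODE $a_k'+\lam_k a_k=-\sum_n c_n\A_{k,n}e^{-\lam_n t}$, and solve by Duhamel's formula, splitting into resonant ($k=n$) and non-resonant ($k\neq n$) cases. The paper instead proceeds by direct verification: it writes down the claimed formula \eqref{eq:u10,long}, expands the source $\A u_{0,0}=\sum_n\sum_k c_n\A_{k,n}\psi_k T_n$ via \eqref{eq:Identity}, and checks that applying $(-\d_t+\<\L_2\>)$ to \eqref{eq:u10,long} reproduces this source using the elementary identities $(-\d_t-\lam_k)U_{k,n}=T_n$ and $(-\d_t-\lam_n)\,t\,T_n=-T_n$. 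Your derivation explains where the formula comes from and makes the role of the resonant term transparent; the paper's verification is slightly shorter and avoids having to argue that the eigenfunction ansatz captures the solution. Both arguments operate at the same (formal) level of rigor regarding the interchange of sums and derivatives.
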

\begin{proof}
See appendix \ref{sec:u10proof}.
\end{proof}
\noindent
Note that $u_{1,0}$ is linear in the group parameters $(\Vc_3, \Vc_2, \Uc_2, \Uc_1)$.


\begin{theorem}\label{thm:u01}
Let $c_n$, $ \psi_n$ and $T_n$ be as described in Theorem \ref{thm:u00}, let $U_{k,n}$ be as described in Theorem \ref{thm:u10} and define
\begin{align}
\Bt_{k,n}
		&:=		\( \psi_k , \B\d_z \psi_n \) , &
\B_{k,n}
		&:=		\( \psi_k , \B \psi_n \) , &
V_{k,n}
		&:=		\frac{T_k-T_n}{\( \lam_k - \lam_n \)^2} 
					+ \frac{t \, T_n}{\lam_k - \lam_n} .
\end{align}
Then the solution $u_{0,1}$ to equation \eqref{eq:EigenB} is
\begin{align}
u_{0,1}
		&= 			\sum_n \sum_{k \neq n} c_n \Bt_{k,n} \psi_k U_{k,n} -
						\sum_n c_n \, \Bt_{n,n} \psi_n \, t \, T_n 
						\\ &\qquad +
						\sum_n \sum_{k \neq n} (\d_z c_n )\B_{k,n} \psi_k U_{k,n} -
						\sum_n ( \d_z c_n ) \B_{n,n} \psi_n \, t \, T_n
						\\ &\qquad +
						\sum_n \sum_{k \neq n} c_n \B_{k,n} \psi_k \( \d_z \lam_n \) V_{k,n} + 
						\sum_n c_n \B_{n,n} \psi_n \( \d_z \lam_n \) \tfrac{1}{2} t^2 \, T_n .
						\label{eq:u01,long}
\end{align}
\end{theorem}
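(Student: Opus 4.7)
The plan is to solve the inhomogeneous Cauchy problem \eqref{eq:EigenC} by expanding the unknown $u_{0,1}$ in the eigenbasis $\{\psi_n\}$ of $-\<\L_2\>$, which will reduce the PDE to an uncoupled family of first-order linear ODEs in $t$ for the spectral coefficients. From Theorem \ref{thm:u00}, the source $\B\d_z u_{0,0}$ is built from $u_{0,0}=\sum_n c_n \psi_n T_n$; the key observation is that this source carries three distinct kinds of $z$-dependence, since the coefficient $c_n(z)$, the eigenfunction $\psi_n(\cdot\,;z)$, and the eigenvalue $\lam_n(z)$ (hence also $T_n=e^{-t\lam_n(z)}$) all depend on $z$ through the group parameters $\sigb(z)$ and $\fOmb(z)$. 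Tracking these three dependences separately is exactly what will generate the three pairs of terms in \eqref{eq:u01,long}.

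First I would apply the product rule to $\d_z u_{0,0}$ and then apply $\B$, which differentiates only in $x$ and therefore passes through the $z$-dependent scalar factors. Projecting each of the resulting terms onto $\psi_k$ using orthonormality in $\H$ produces the matrix elements $\B_{k,n}$ and $\Bt_{k,n}$ defined in the statement, together with a third contribution proportional to $-t\,c_n(\d_z\lam_n)\B_{k,n} T_n$ arising from $\d_z T_n$. Writing $u_{0,1}=\sum_k g_k(t,z)\psi_k$, substituting into \eqref{eq:EigenC}, and using $\<\L_2\>\psi_k=-\lam_k\psi_k$ then produces, for each $k$, a first-order linear ODE of the form $\d_t g_k + \lam_k g_k = -F_k(t,z)$ with $g_k(0,z)=0$ (inherited from the zero initial condition).

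Each such ODE is solved by the integrating factor $e^{t\lam_k}$, so the whole construction reduces to computing two integrals, $\int_0^t e^{s(\lam_k-\lam_n)}\,ds$ and $\int_0^t s\,e^{s(\lam_k-\lam_n)}\,ds$, and multiplying by $e^{-t\lam_k}$. For $k\neq n$ these evaluate to exactly $U_{k,n}$ and $V_{k,n}$; in the resonant case $k=n$ they collapse to $t\,T_n$ and $\tfrac{1}{2}t^2 T_n$, which supplies precisely the diagonal terms in \eqref{eq:u01,long}. Reindexing the outer summation variable then reassembles the six resulting contributions into the stated six-term form.

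The main obstacle is organizational rather than analytic: one must keep the three kinds of $z$-derivative cleanly separated, split each source contribution into its $k=n$ and $k\neq n$ pieces, reconcile the sign between the $(T_n-T_k)/(\lam_k-\lam_n)$ that arises naturally from the integrating factor and the stated $U_{k,n}=(T_k-T_n)/(\lam_k-\lam_n)$, and recognize $V_{k,n}$ as the antiderivative combination produced by the $s$-weighted integral. The only genuine analytic point is justifying termwise differentiation in $t$ and $z$ and termwise application of $\<\L_2\>$ to the spectral series, which is routine under the standing purely-discrete-spectrum hypothesis on $\<\L_2\>$ together with the domain restrictions $\psi_n\in\text{dom}(\B)\cap\text{dom}(\d_z)$ from \eqref{eq:domain,A}.
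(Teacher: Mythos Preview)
Your proposal is correct and rests on exactly the same computation as the paper: you expand $\B\d_z u_{0,0}$ via the product rule into the three contributions coming from $\d_z c_n$, $\d_z\psi_n$, and $\d_z T_n=-t(\d_z\lam_n)T_n$, project onto $\psi_k$ to produce the coefficients $\Bt_{k,n}$ and $\B_{k,n}$, and then match each source term against the ODE operator $(-\d_t-\lam_k)$. The only cosmetic difference is direction: the paper writes down \eqref{eq:u01,long} and \emph{verifies} it by checking the identities $(-\d_t-\lam_k)U_{k,n}=T_n$, $(-\d_t-\lam_n)tT_n=-T_n$, $(-\d_t-\lam_k)V_{k,n}=-tT_n$, $(-\d_t-\lam_n)\tfrac12 t^2T_n=-tT_n$, whereas you \emph{derive} the same functions by computing $e^{-t\lam_k}\int_0^t e^{s(\lam_k-\lam_n)}\,ds$ and $e^{-t\lam_k}\int_0^t s\,e^{s(\lam_k-\lam_n)}\,ds$. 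These are inverse presentations of the identical argument, and your constructive route has the minor advantage of explaining where $V_{k,n}$ comes from rather than pulling it from the statement.
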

\begin{proof}
See appendix \ref{sec:u01proof}.
\end{proof}
\noindent
Note that $u_{0,1}$ is linear in $(\Vc_1 \sigb', \Vc_1 \fOmb', \Vc_0 \sigb', \Vc_0 \fOmb')$.


\subsubsection*{Accuracy of the Pricing Approximation}
We have now  derived an approximation $u^{\eps,\del} \approx u_{0,0} + \sqrt{\eps} \, u_{1,0} + \sqrt{\del} \, u_{0,1}$ for the price of a derivative-asset.  However, this derivation relied on formal singular and regular perturbation arguments.  In what follows, we establish the accuracy of  our approximation.  For our accuracy result, in addition to the assumptions listed in section \ref{sec:assumptions}, we shall need one additional assumption
\begin{itemize}
\item The payoff function $H(x)$ and all of its derivatives are smooth and bounded.
\end{itemize}
Obviously, many common derivatives -- e.g., call and put options -- do not fit this assumption.  To prove the accuracy of our pricing approximation for calls and puts would require regularizing the option payoff as is done in \citet*{fouque2003proof}.  The regularization procedure is beyond the scope of this paper.  As such, we limit our analysis to options with smooth and bounded payoffs.  Our accuracy result is as follows:
\begin{theorem}\label{thm:accuracy}
For fixed $(t,x,y,z)$, there exists a constant $C$ such that for any $\eps \leq 1$, $\del \leq 1$ we have
\begin{align}
\left| u^{\eps,\del} - \( u_{0,0} + \sqrt{\eps} \, u_{0,0} + \sqrt{\del} \, u_{0,1} \) \right| \leq C \( \eps + \del \) .
\label{eq:accuracy}
\end{align}
\end{theorem}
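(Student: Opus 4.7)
The plan follows the standard residual-bounding argument from singular-regular perturbation theory (cf.\ \citet*{fouque2004multiscale}, \citet*{fouque2003proof}, \citet*{lorig2}). The idea is to extend the truncated approximation with enough higher-order terms so that, when $(-\d_t + \L^{\eps,\del})$ is applied, every contribution of order lower than $\eps + \del$ cancels by the cascade of equations derived in section \ref{sec:asymptotics}. The surviving source term and the initial-time mismatch are then both $\O(\eps+\del)$, and a Feynman-Kac representation transfers this bound to the residual itself.

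Concretely, set
\begin{align*}
Q^{\eps,\del} := u_{0,0} + \sqrt{\eps}\,u_{1,0} + \sqrt{\del}\,u_{0,1} + \eps\,u_{2,0} + \eps^{3/2} u_{3,0} + \sqrt{\eps\del}\,u_{1,1} + \eps\sqrt{\del}\,u_{2,1},
\end{align*}
where $u_{2,0}$ is given by \eqref{eq:psi20}, $u_{3,0}$ solves the Poisson equation \eqref{eq:PoissonF}, $u_{1,1}$ is $y$-independent (forced by the $\O(\sqrt{\del/\eps})$ Fredholm condition), and $u_{2,1}$ solves \eqref{eq:PoissonC}. Expanding $(-\d_t + \L^{\eps,\del})Q^{\eps,\del}$ in powers of $\sqrt{\eps}$ and $\sqrt{\del}$, the contributions at orders $\eps^{-1}, \eps^{-1/2}, 1, \sqrt{\eps}, \sqrt{\del/\eps}, \sqrt{\del}$ all vanish by \eqref{eq:eps,-2}--\eqref{eq:PoissonF} and \eqref{eq:order-21}--\eqref{eq:PoissonC}. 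The leftover source $F^{\eps,\del} := (-\d_t + \L^{\eps,\del})Q^{\eps,\del}$ is a finite sum of terms such as $\L_1 u_{3,0}$, $(\L_2 - \d_t) u_{2,0}$, $\M_1 u_{0,1}$, and $\M_2 u_{0,0}$, each carrying a prefactor of $\eps$, $\sqrt{\eps\del}$, or $\del$; so $\|F^{\eps,\del}\|_\infty = \O(\eps+\del)$. Since $u_{1,0}(0,\cdot) = u_{0,1}(0,\cdot) = 0$, the initial mismatch $G^{\eps,\del} := H - Q^{\eps,\del}(0,\cdot)$ reduces to $-\eps u_{2,0}(0,\cdot) - \eps^{3/2} u_{3,0}(0,\cdot) - \sqrt{\eps\del}\,u_{1,1}(0,\cdot) - \eps\sqrt{\del}\,u_{2,1}(0,\cdot)$, also $\O(\eps+\del)$.

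The residual $R^{\eps,\del} := u^{\eps,\del} - Q^{\eps,\del}$ satisfies a linear Cauchy problem with source $-F^{\eps,\del}$, initial data $G^{\eps,\del}$, and boundary behavior at $e_1, e_2$ inherited (at the same order) from the auxiliary functions. Applying the Feynman-Kac representation for the killed diffusion $(X,Y,Z)$ with discount $e^{-\int_0^{\cdot \wedge \tau_I} k(X_s)\,ds}$ and using $k = r + h \geq 0$ to bound the exponential by $1$ gives
\begin{align*}
|R^{\eps,\del}(t,x,y,z)| \leq (1+t)\bigl(\|F^{\eps,\del}\|_\infty + \|G^{\eps,\del}\|_\infty\bigr) \leq C(\eps+\del).
\end{align*}
Since also $|Q^{\eps,\del} - (u_{0,0} + \sqrt{\eps}\,u_{1,0} + \sqrt{\del}\,u_{0,1})| \leq C(\eps+\del)$ by inspection, the triangle inequality yields \eqref{eq:accuracy}.

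The main obstacle will be establishing uniform $L^\infty$ bounds on the auxiliary functions $u_{2,0}, u_{3,0}, u_{1,1}, u_{2,1}$ and on the $x,z$-derivatives of $u_{0,0}, u_{1,0}, u_{0,1}$ that are activated by $\L_1, \L_2, \M_1, \M_3$. The smoothness-and-boundedness hypothesis on $H$ and all of its derivatives, propagated through the Feynman-Kac representations of $u_{0,0}, u_{1,0}, u_{0,1}$ (for the diffusion generated by $\<\L_2\>$), delivers smooth bounded $x,z$-derivatives. The $y$-dependent factors $\phi,\eta$ and their higher-order analogues are Poisson solutions against the ergodic generator $\L_0$; under the centering conditions they exist and, under the assumptions on the invariant distribution $\Pi$ of $Y$ in section \ref{sec:assumptions}, grow at most polynomially in $y$. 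This polynomial growth is absorbed by standard moment bounds on $Y_t$ under $\Pt$, which closes the uniform estimate and hence the proof.
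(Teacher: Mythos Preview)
Your approach is essentially the paper's: augment the approximation by the higher-order correctors $u_{2,0},u_{3,0},u_{1,1},u_{2,1}$, verify that the cascade of Poisson/centering equations kills all contributions below order $\eps+\del$, represent the residual by Feynman--Kac, and absorb the polynomial growth in $(y,z)$ via moment bounds on $(Y,Z)$ under $\Pt$ (this is Lemma~\ref{thm:polynomial} in the paper). One internal inconsistency to fix: you first assert $\|F^{\eps,\del}\|_\infty=\O(\eps+\del)$ and derive the crude bound $|R^{\eps,\del}|\leq(1+t)(\|F\|_\infty+\|G\|_\infty)$, but then (correctly) note that the $y$-dependent factors $\phi,\eta,\ldots$ are only polynomially growing --- so the sup-norm bound is not available and the estimate must be carried out \emph{inside} the expectation, exactly as the paper does. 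Also, the paper pins down $u_{1,1}$ via one further centering condition (the $\O(\sqrt{\eps\del})$ level), whereas you leave it merely $y$-independent; either choice works for the $\O(\eps+\del)$ bound since $\sqrt{\eps\del}\leq\tfrac12(\eps+\del)$, but you should make a definite choice so that the regularity of the ensuing source terms is unambiguous.
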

\begin{proof}
See appendix \ref{sec:accuracy}.
\end{proof}
Theorem \ref{thm:accuracy} gives us information about how our pricing approximation behaves as $\eps \to 0$ and $\del \to 0$.  In practice, both $\eps$ and $\del$ are small, but fixed (they do not go to zero).  Without knowing what the constant $C$ is in theorem \ref{thm:accuracy}, it is difficult to gauge exactly how good our pricing approximation is.  As such, in the examples provided in section \ref{sec:examples}, we will compare the approximate prices of derivative-assets (calculated using Theorems \ref{thm:u00}, \ref{thm:u10} and \ref{thm:u01}) to their exact prices (calculated via Monte Carlo simulation).

%
%

\section{Examples} \label{sec:examples}
In this section we compute the approximate price of three derivative-assets: a double-barrier call option, a bond in a short-rate model, and a European call on a defaultable stock.


\subsection{Double-Barrier Call Option with Multiscale Stochastic Volatility}\label{sec:BS}
In our first example, we let $X$ represent the value of a non-dividend paying asset (e.g., a stock, index, etc.).  Often, $X$ is modeled as a GBM with constant volatility (e.g., Black-Scholes).  Here, we model $X$ as a GBM with multiscale stochastic volatility.  Specifically, the $\Pt$ dynamics of $X$ are given by
\begin{align}
dX_t		&=		r X_t \, dt + f\( Y_t, Z_t \) X_t \, d\Wt^x_t  , &
h(X_t)	&=		0 , \label{eq:X,BS}
\end{align}
where $r$ is the risk-free rate of interest and $Y$ and $Z$ are fast- and slow-varying factors of volatility, as described in \eqref{eq:RiskNeutral}.  Note that, as it should be, the discounted price of the asset $\( e^{-rt}X_t \)$ is a martingale under $\Pt$.  We will calculate the approximate price of a double-barrier call option written on $X$.
\par
To start, we use equations \eqref{eq:m} and \eqref{eq:<L00>} to write the operator $\<\L_2\>$ and its associated speed density $\m(x)$
\begin{align}
\< \L_2 \>			&=				\frac{1}{2} \sigb^2 x^2 \d^2_{xx} + r\, x \, \d_x -  r , &
\m(x)									&=				\frac{2}{\sigb^2 x^2 } \exp \( \frac{2 r }{ \sigb^2} \log x \) .\label{eq:LBS}
\end{align}
For a double-barrier call option with knock-out barriers at $L$ and $R$, the option payoff is
\begin{align}
H(X_t) \, \I_{\{ \tau>t \}} &=			\( X_t - K \)^{+} \, \I_{\{ \tau_I > t \}}, &
I														&=			(L,R) , &
0 < L < K < R , \label{eq:DBpayoff}
\end{align}
To calculate the value of this option we must first solve eigenvalue equation \eqref{eq:Eigen00} with $\< \L_2\>$ given by \eqref{eq:LBS} and with BCs
\begin{align}
\lim_{x \searrow L} \psi_n(x)					&=				0 ,	&
\lim_{x \nearrow R} \psi_n(x)					&=				0 .		\label{eq:BS,BC}
\end{align}
Note that we have imposed the regular killing BC at the endpoints $L$ and $R$.  The solution to \eqref{eq:Eigen00} with the above BCs can be found on page 262 of \citet*{linetskybook}
\begin{align}
\psi_n(x)
		&=		\frac{\sigb \sqrt{x}}{\sqrt{\log ( R / L )}} \exp \( \frac{-r}{\sigb^2} \log x\)
					\sin \( \frac{n \pi \log ( x / L )}{\log ( R / L ) } \) , &
n
		&=		1, 2, 3, \cdots , \\
\lam_n
		&=		\frac{1}{2}	\( \frac{n \pi \sigb}{\log (R / L )} \)^2 + \(\frac{\nu^2}{2} + r\) , &
\nu	
		&= 			\frac{r}{\sigb} - \frac{\sigb}{2} . \label{eq:QBS}
\end{align}
Next, we use expressions \eqref{eq:A10} and \eqref{eq:B} to write expressions for the operators $\A$ and $\B$ 
\begin{align}
\A													&=		- \Vc_3  x \, \d_x x^2 \d^2_{xx} - \Vc_2 x^2 \d^2_{xx} , &
\B													&=		- \Vc_1  x \, \d_x - \Vc_0 . \label{eq:A,BS}
\end{align}
Using \eqref{eq:A,BS} it is now straightforward to calculate inner products $\A_{k,n}$, $\B_{k,n}$ and $\Bt_{k,n}$.  For $k \neq n$ we find  
\begin{align}
\A_{k,n}	
		&=		- \Vc_3 \( \frac{\(-1+(-1)^{k+n}\) k n \(4 n^2 \pi^2 \sigb^4+\(-12 r^2+4 r \sigb^2+\sigb^4\) \log^2 (R/L) \)}
														{2 \(k^2-n^2\) \sigb^4 \log^3 (R/L)} \) \\ &\qquad			
					- \Vc_2 \( \frac{4 \(-1+(-1)^{k+n}\) k n r}{\(k^2-n^2\) \sigb^2 \log (R/L)} \) , \\
\B_{k,n}
		&=		\Vc_1 \frac{2 \(-1+(-1)^{k+n}\) k n }{(k-n) (k+n) \log(R/L)}, \\		
\Bt_{k,n}
		&=		- \Vc_1 \sigb' \( \Upsilon_{k,n} \) - \Vc_0 \sigb' 
					\( \frac{8 \(-1+(-1)^{k+n}\) k n r \log(R/L)}{\(k^2-n^2\)^2 \pi^2 \sigb^3} \) , \\
\Upsilon_{k,n}
		&:=		\frac{4 k n r \(\log(L)-(-1)^{k+n} \log(R)\)}{\(k^2-n^2\) \sigb^3 \log (R/L)} \\ &\qquad	
					- \frac{2 \(-1+(-1)^{k+n}\) k n \((k-n) (k+n) \pi^2 \sigb^4-2 r \(-2 r+\sigb^2\) \log^2 (R/L)\)}
									{\(k^2-n^2\)^2 \pi^2 \sigb^5 \log (R/L)} ,
\end{align}
and for $k = n$ we find
\begin{align}
\A_{n,n}	
		&=		- \Vc_3 \( \frac{1}{\sigb^3}\( \frac{3 n^2 \pi ^2 \nu }{\log^2(R/L)}-\nu^3 \) 
					- \frac{1}{\sigb^2} \( \nu^2-\frac{n^2 \pi ^2}{\log^2(R/L)} \) \) 
					- \Vc_2 \( \frac{1}{\sigb^2} \( \nu^2 - \frac{n^2 \pi^2}{\log^2(R/L)} \) + \frac{\nu}{\sigb} \) ,\\
\B_{n,n}
		&=		\Vc_1 \(\frac{2 r - \sigb^2}{2 \sigb^2}\)- \Vc_0 , \\
\Bt_{n,n}
		&=		- \Vc_1 \sigb' \( \frac{1}{2 \sigb} - \frac{r \nu \(\log^2(R)-\log^2(L)\)}{ \sigb^4 \log\(R/L\)} \)
					- \Vc_0 \sigb' \( \frac{1}{\sigb }+\frac{r \(\log^2(R)-\log^2(L)\)}{\sigb^3 \log\(R/L\)} \) .
\end{align}
The calculation of $c_n$ can be found on page 262 of \citet*{linetskybook}
\begin{align}
c_n 		&=		\( \psi_n(\cdot) , ( \cdot - K )^{+} \)	
								=				\frac{L^{\nu/\sigb}}{\log \( R / L\)} \Big( L \, \Phi_n(\nu+\sigb) - K \, \Phi_n(\nu) \Big) , \\
\Phi_n(z)				&:=			\frac{2}{\om_n^2 + z^2} \Big( \exp \( \k z \) \big( \om_n \cos \( \om_n \k\) - z \sin \( \om_n \k \) \big) - \exp \( \u z \)(-1)^n \om_n \Big) , \\
\om_n						&:=					\frac{n \pi}{\u} , \qquad
\k							:=						\frac{1}{\sigb} \log \( \frac{K}{L} \) , \qquad
\u							:=						\frac{1}{\sigb} \log \( \frac{R}{L} \) .
\end{align}
Approximate option prices can now be computed using Theorems \ref{thm:u00}, \ref{thm:u10} and \ref{thm:u01}.
\par
On the left side of figure \ref{fig:DBslow} we plot the approximate price $u_{0,0}+\sqrt{\eps}\,u_{1,0}$ of a double-barrier call option for a specific model that has only a fast-varying factor of volatility.  We suppose the dynamics of $Y$ and the volatility function $f$ are given by
\begin{align}
dY_t
		&=		\( - \frac{1}{\eps} \, Y_t - \frac{1}{\sqrt{\eps}}\beta \, \text{Erf}(Y_t) \) dt + \beta \, d\Wt_t^y , &
f(Y_t)
		&=		\frac{\sigma \, \exp \(Y_t\) }{ \exp\(- \beta^2 / 2\)} , \label{eq:Yexample} \\
\text{Erf}(y)
		&:=		\frac{2}{\sqrt{\pi}} \int_0^y e^{-t^2} dt . 
\end{align}
From comparison we also plot the full price $u^\eps$ (calculated by Monte Carlo simulation) and $u_{0,0}$, which corresponds to the Black-Scholes price with volatility $\sigb$.  On the right side of figure \ref{fig:DBslow} we plot the approximate price $u_{0,0}+\sqrt{\del}\,u_{0,1}$ of a double-barrier call option for a specific model that contains only a slow-varying factor of volatility.  We suppose the dynamics of $Z$ and the volatility function $f$ are given by
\begin{align}
dZ_t
		&=		\( - \del \, Z_t - \sqrt{\del} \, g \, \text{Erf}(Z_t) \) dt + g \, d\Wt_t^z , &
f(Z_t)
		&=		\frac{\sigma \, \exp \( Z_t \)}{\exp \( z \)} . \label{eq:Zexample}
\end{align}
For comparison, we also plot the full price $u^\del$ (calculated by Monte Carlo simulation) and the Black-Scholes price $u_{0,0}$.  As expected, as $\eps$ and $\del$ go to zero, the approximate price converges to the full price, which conveges to the Black-Scholes price.


\subsection{Vasicek Short-Rate with Multiscale Stochastic Volatility}
In our second example, we let $X$ represent the short-rate of interest.  One of the most widely known short-rate models is that of  \citet*{vasicek}, in which $X$ is modeled as an OU process.  Here, we model $X$ as an OU with multiscale stochastic volatility.  Specifically, the $\Pt$ dynamics of $X$ are given by
\begin{align}
dX_t		&=		\Big( \kappa \(\theta -  X_t \) - f(Y_t,Z_t) \Om(Y_t,Z_t) \Big) dt + f\( Y_t, Z_t \) \, d\Wt^x_t  , &
r(X_t)	&=		X_t , &
h(X_t)	&=		0 ,
\end{align}
where $Y$ and $Z$ are fast- and slow-varying factors of volatility, as described in \eqref{eq:RiskNeutral}.  We will calculate the approximate price of zero-coupon bond in this setting.
\footnote{We note that $r(X_t)=X_t$ may become negative when $X$ is described by an OU process.  As such, one may wish to impose a reflecting boundary condition at $x=0$, as carried out in \citet*{linetsky2004blackinterest}.  However, as an OU \emph{without} a reflecting boundary is most prevalent in literature, this is the case we treat here.}
\par
To start, we use equations \eqref{eq:m} and \eqref{eq:<L00>} to write the operator $\<\L_2\>$ and its associated speed density $\m(x)$
\begin{align}
\< \L_2 \>			&=				\frac{1}{2} \sigb^2 \d^2_{xx} + \kappa \( \thb - x \) \d_x -  x , &
\m(x)									&=				\frac{2}{\sigb^2} \exp \( \frac{-\kappa}{\sigb^2} \( \thb -x \)^2 \) , &
\thb										&=				\theta - \tfrac{1}{\kappa}\fOmb .	\label{eq:Lvasicek}
\end{align}
For a zero-coupon bond, the payoff at maturity is
\begin{align}
H(X_t) \, \I_{\left\{ \tau>t \right\}} &=	1 . \label{eq:payoffBond}
\end{align}
\par
In order to price a bond with payoff \eqref{eq:payoffBond}, we must solve eigenvalue equation \eqref{eq:Eigen00} on the interval $I=(-\infty,\infty)$ with $\< \L_2 \>$ given by \eqref{eq:Lvasicek}.  As both $-\infty$ and $\infty$ are natural boundaries, no BCs need to be specified.  The solution to this eigenvalue problem can be found in equation (4.6) of \citet*{linetsky2004blackinterest}
\begin{align}
\psi_n			&=				\Ncal_n \exp \( - A \, \xi - \frac{1}{2} A^2 \) H_n\( \xi + A \), &
\Ncal_n							&=				\( \sqrt{\frac{\kappa}{\pi}} \frac{\sigb}{2^{n+1}n!}\)^{1/2} , \\
A										&=				\frac{\sigb}{\kappa^{3/2}} , &
\xi									&=				\frac{\sqrt{\kappa}}{\sigb} \( x - \thb \) , \\
\lam_n			&=				\lam_n = \thb - \frac{\sigb^2}{2 \kappa^2} + \kappa \, n , &
n										&=				0, 1, 2, \cdots .
\end{align}
Here, $\{H_n\}$ are the (physicists') Hermite polynomials.  
Next, we use \eqref{eq:A10} and \eqref{eq:B} to write expressions for the operators $\A$ and $\B$ 
\begin{align}
\A				
		&=		- \Vc_3  \, \d^3_{xxx} - \( \Vc_2 + \Uc_2 \) \d^2_{xx} - \Uc_1 \, \d_x, &
\B
		&=		- \Vc_1 \, \d_x - \Vc_0 .
\end{align}
It is now straightforward to calculate inner products $\A_{k,n}$, $\B_{k,n}$ and $\Bt_{k,n}$.  Using the recursion relations
\begin{align}
\d_x H_n 						&= 				2 \,n H_{n-1} , &
2 \,x \, H_n				&=				H_{n+1} + \d_x H_n ,
\end{align}
we find
\begin{align}
\A_{k,n}
		&=	- \Vc_3 \left\{
								\sum_{m=0}^{3 \wedge n} \( \begin{array}{c} 3 \\ m \end{array} \) 
								\( \frac{-1}{\kappa} \)^{3-m} \( \frac{2 \sqrt{\kappa}}{\sigb} \)^m
								\frac{n! \, \Ncal_n}{(n-m)! \, \Ncal_{n-m}} \del_{k,n-m} \right\} \\ &\qquad
				- \( \Vc_2 + \Uc_2 \) \left\{
								\sum_{m=0}^{2 \wedge n} \( \begin{array}{c} 2 \\ m \end{array} \) \( \frac{-1}{\kappa} \)^{2-m} 
								\( \frac{2 \sqrt{\kappa}}{\sigb} \)^m
								\frac{n! \, \Ncal_n}{(n-m)! \, \Ncal_{n-m}} \del_{k,n-m} \right\} \\ &\qquad
				- \Uc_1  \left\{
								\( \frac{-1}{\kappa} \) \delta_{k,n} + \( \frac{2\sqrt{\kappa}}{\sigb} \) 
								\frac{ n! \, \Ncal_n}{(n-1)! \Ncal_{n-1}} \delta_{k,n-1} \right\} , \\			
\B_{k,n}
		&= - \Vc_1  \left\{
								\( \frac{-1}{\kappa} \) \delta_{k,n} + \( \frac{2\sqrt{\kappa}}{\sigb} \) 
								\frac{ n! \, \Ncal_n}{(n-1)! \Ncal_{n-1}} \delta_{k,n-1} \right\}
			 -	 \Vc_0 \, \delta_{k,n} , \\
\Bt_{k,n}
		&=	- \Vc_1 \sigb' \left\{
								\[ \( \frac{-1}{\kappa} \) \( \frac{1}{2 \sigb} - \frac{\sigb}{\kappa^3} - \frac{n}{\sigb} \)  \] 
								\del_{k,n} \right. \\ &\left. \qquad	 \qquad						
								+ \[ \( \frac{-1}{\kappa} \) \( \frac{4}{\kappa^{3/2}} \) 
								+ \(\frac{2 \sqrt{\kappa}}{\sigb} \) \( \frac{1}{2 \sigb} - \frac{\sigb}{\kappa^3} - \frac{n}{\sigb} \) \] 
								\frac{n! \Ncal_n}{(n-1)!\Ncal_{n-1}} \del_{k,n-1} \right. \\ &\left. \qquad		\qquad	
								+ \[ \( \frac{-1}{\kappa} \) \( \frac{-2}{\sigb} \) + \(\frac{2 \sqrt{\kappa}}{\sigb} \)  
								\( \frac{4}{\kappa^{3/2}} \) \] \frac{n!\Ncal_n}{(n-2)! \Ncal_{n-2}}\del_{k,n-2} \right. \\ &\left. \qquad \qquad		
								+ \[ \(\frac{2 \sqrt{\kappa}}{\sigb} \) \( \frac{-2}{\sigb} \) \] 
								\frac{n! \Ncal_n}{(n-3)!\Ncal_{n-3}}\del_{k,n-3} \right\} \\ &\qquad																						- 			\Vc_0 \sigb' \left\{ 
								\( \frac{1}{2 \sigb} - \frac{\sigb}{\kappa^3} - \frac{n}{\sigb} \) \del_{k,n}
								+ \( \frac{4}{\kappa^{3/2}} \) \frac{ n! \, \Ncal_{n} }{ (n-1)!\Ncal_{n-1} } \del_{k,n-1} \right.
								\\&\left. \qquad	 \qquad	
								+ \( \frac{-2}{\sigb} \) \frac{ n ! \Ncal_n }{ (n-2)! \Ncal_{n-2} }\del_{k,n-2} \right\} \\ &\qquad
			- \Vc_1 \fOmb' \left\{
								\( \frac{1}{\kappa^3} \) \delta_{k,n} + 
								\( \frac{-4}{\sigb \kappa^{3/2}} \) \frac{n! \Ncal_n}{(n-1)!\Ncal_{n-1}} \delta_{k,n-1} + 
								\( \frac{4}{\sigb^2} \) \frac{n! \Ncal_n}{(n-2)!\Ncal_{n-2}} \delta_{k,n-2} 
								\right\} \\ &\qquad																				
			- \Vc_0 \fOmb' \left\{ 
								\( \frac{-1}{\kappa^2} \) \delta_{k,n} + \( \frac{2}{\sigb \sqrt{\kappa}} \)
								\frac{n! \Ncal_n}{(n-1)!\Ncal_{n-1}} \delta_{k,n-1} \right\} .
\end{align}
The computation of $c_n$ be found on page 63 of in \citet*{linetsky2004blackinterest}
\begin{align}
c_n			&=			\( \psi_n , 1 \) 
				=			\frac{2}{\sigb} \sqrt{ \frac{\pi}{\kappa} } \Ncal_n A^n e^{-A^2 /4} .
\end{align}
The approximate price of a bond can now be calculated using Theorems \ref{thm:u00}, \ref{thm:u10} and \ref{thm:u01}.\\

\subsubsection*{Yield Curve}
For a zero-coupon bond, it is often the yield curve, rather than the bond price itself, that is of fundamental importance.  The yield $R^{\eps,\del}$ of a zero-coupon bond that pays one dollar at time $t$ is defined via the relation
\begin{align}
u^{\eps,\del}			&=			\exp \( -R^{\eps,\del} t \) .
\end{align}
We can obtain an approximation for the yield of a zero-coupon bond by expanding both the bond price $u^{\eps,\del}$ and yield $R^{\eps,\del}$ in powers of $\sqrt{\eps}$ and $\sqrt{\del}$ as follows
\begin{align}
u_{0,0} + \sqrt{\eps} \, u_{1,0}  + \sqrt{\del} \, u_{0,1} + \cdots
												&=			e^{ - \( R_{0,0} + \sqrt{\eps} \, R_{1,0} + \sqrt{\del} \, R_{0,1} + \cdots \) \, t } \\
												&=			e^{ -R_{0,0} t } +  \sqrt{\eps} \( - R_{1,0} \, t \) e^{ -R_{0,0} t } + \sqrt{\del} \( - R_{0,1} \, t \) e^{ -R_{0,0} t } + \cdots .
\end{align}
Matching terms of like-powers of $\sqrt{\eps}$ and $\sqrt{\del}$ we obtain 
\begin{align}
R^{\eps,\del} 			&\approx R_{0,0} + \sqrt{\eps} \, R_{1,0}  + \sqrt{\del} \, R_{0,1}, \\
R_{0,0}					&=			- \tfrac{1}{t}\log \( u_{0,0} \) , \qquad
R_{1,0}						=			\frac{ - u_{1,0} }{ t \, u_{0,0} } , \qquad
R_{0,1}						=			\frac{ - u_{0,1} }{ t \, u_{0,0} } ,
\end{align}
\par
On the left side of figure \ref{fig:Vasicekslow} we plot the approximate yield $R_{0,0}+\sqrt{\eps}\,R_{1,0}$ of a zero coupon bond for a specific model that has only a fast-varying factor of volatility.  We suppose the dynamics of $Y$ and the volatility function $f$ are given by \eqref{eq:Yexample}.  For comparison, we also plot the full yield $R^\eps$ (calculated by Monte Carlo simulation) and the Vasicek yield $R_{0,0}$.  On the right side of figure \ref{fig:Vasicekslow} we plot the approximate yield $R_{0,0}+\sqrt{\del}\,R_{0,1}$ of a zero coupon bond for a specific model that has only a slow-varying factor of volatility.  We suppose the dynamics of $Z$ and the volatility function $f$ are given by \eqref{eq:Zexample}.  For comparison, we also plot the full yield $R^\del$ (calculated by Monte Carlo simulation) and the Vasicek yield $R_{0,0}$.  As expected, as $\eps$ and $\del$ go to zero, the approximate yield converges to the full yield, which converges to the Vasicek yield.


\subsection{Jump to Default CEV with Multiscale Stochastic Volatility}\label{sec:JDCEV}
In our final example, we consider a non-dividend-paying, defaultable asset $S_t = \I_{\{\tau>t\}}X_t$.  As $S$ must be non-negative, we let the state space of $X$ be $(e_1,e_2) = (0, \infty)$.  We base our multiscale diffusion on the jump to default constant elastic variance model (JDCEV) of \citet*{JDCEV}.  Specifically, the $\Pt$ dynamics of $X$ prior to default are given by
\begin{align}
dX_t						&=				\( \mu + c \, X_t^{2\eta} \) X_t \, dt + \( f(Y_t,Z_t) \, X_t^\eta \) X_t \, d\Wt_t^x , &
h(X_t)					&=		 		\mu + c \, X_t^{2 \eta} .
\end{align}
For computational convenience we have set the risk-free interest rate to zero: $r=0$.
The constants $\mu$ and $c$ are assumed to be strictly positive.  As always, $Y$ and $Z$ are fast- and slow-varying factors of volatility, as described in \eqref{eq:RiskNeutral}.  Note that the volatility of $X$ has both a local component $X_t^\eta$ and a nonlocal multiscale component $f(Y_t,Z_t)$.  We assume $\eta<0$ so that the local component of volatility $X_t^\eta$ \emph{increases} as $X_t$ decreases, reflecting the fact that price and volatility are negatively correlated.  The stochastic hazard rate $h(X_t)$ also increases as $X$ decreases, capturing the idea that the probability of default increases as $X$ tends to zero.  Note that $S$ is a $\Pt$-martingale, as it should be.  We will calculate the approximate price of a European put option written on $S$.  The price of a European call option can be obtained through put-call parity.
\par
To begin, we use \eqref{eq:m} and \eqref{eq:<L00>} to write the operator $\<\L_2\>$ and its associated speed density $\m(x)$
\begin{align}
\< \L_2 \>				&=				\frac{1}{2} \sigb^2 x^{2 \eta + 2} \d^2_{xx} + \( \mu + c \, x^{2 \eta} \) x \, \d_x - \( \mu + c \, x^{2 \eta} \),  \label{eq:LJDCEV} \\
\m(x)										&=				\frac{2}{\sigb^2} x^{2c/\sigb^2-2-2\eta} \exp \( A \, x^{-2 \eta} \) , &
A												&=				\frac{\mu}{\sigb^2 | \eta |} 
\end{align}
For the diffusion associated with infinitesimal generator \eqref{eq:LJDCEV} the endpoint $e_2 = \infty$ is a natural boundary.  However, the classification of endpoint $e_1=0$ depends on the values of $\eta$ and $c/\sigb^2$.  The classification is as follows
\begin{align}
c/\sigb^2			&\geq 		1/2					&&\text{and}			&\eta &< 0											,						&&\text{$e_1=0$ is natural,} \\
c/\sigb^2 		&\in			(0,1/2) 		&&\text{and}			&\eta &\in [c/\sigb^2-1/2,0)		,						&&\text{$e_1=0$ is exit,} \\
c/\sigb^2 		&\in			(0,1/2)			&&\text{and}			&\eta &<  c/\sigb^2-1/2				,						&&\text{$e_1=0$ is regular.}\
\end{align}
If the parameters ($c$, $\sigb$, $\eta$) are chosen such that $e_1=0$ is regular, then we specify $e_1=0$ as a killing boundary.
To calculate the approximate price of a European put we must solve the eigenvalue equation \eqref{eq:Eigen00} on the interval $(0,\infty)$ with $\<\L_2 \>$ given by \eqref{eq:LJDCEV} and with the BC
\begin{align}
\lim_{x \searrow 0} \psi_n(x)					&=				0 , &
&\text{if} & 
c/\sigb^2 &\in (0,1/2) .
\end{align}
The solution is given in equation (8.11) of Theorem 8.2 in \citet*{carr}
\begin{align}
\psi_n
		&=	A^{\nu/2} \sqrt{\frac{(n-1)! \,\mu}{\Gam(\nu + n)}}\, x\, \exp\(-A \, x^{-2\eta}\)L_{n-1}^{(\nu)}\( A \, x^{-2 \eta} \)	, &
n		
		&=	1, 2, 3, \cdots ,\label{eq:psi,JDCEV} \\
\lam_n
		&=	2 \mu | \eta | ( n + \nu ) , &
\nu
		&=	\frac{1 + 2 \, (c/\sigb^2)}{2 |\eta|} ,
\end{align}
where $\{L_{n}^{(\nu)}\}$  are the generalized Laguerre polynomials. 
Next, we use \eqref{eq:A10} and \eqref{eq:B} to write expressions for the operators $\A$ and $\B$
\begin{align}
\A		
		&=		- \Vc_3 \, x^{ \eta+1} \d_x x^{2 \eta + 2} \d^2_{xx} - \Vc_2 \, x^{2 \eta + 2} \d^2_{xx}, &
\B
		&=		- \Vc_1 \, x^{ \eta + 1} \d_x	-  \Vc_0 .
\end{align}
Analytic expressions for $\A_{k,n}$, $\B_{k,n}$ and $\Bt_{k,n}$ are easily derived by making the change of variables $A \, x^{-2\eta} \to y$, using $\d_y L_n^{\nu}(y) = - L_{n-1}^{(\nu+1)}(y)$ and
\begin{align}
&\int_0^\infty y^\gamma e^{-y} L_n^{(\alpha)}(y) L_m^{(\beta)}(y) \, dy \\
&\qquad		=		\frac{\Gamma(\alpha-\gamma+n)\Gamma(\beta+1+m)\Gamma(\gamma+1)}{\Gamma(\alpha-\gamma)\Gamma(\beta+1)\,n!\,m!} \,
					{}_3 F_2  \( \begin{array}{ccc}  \gamma+1, & -m, & \gamma+1-\alpha \\ \beta+1, & \gamma+1-\alpha-n , & \end{array} ; 1 \) ,
\end{align}
where where ${}_p F_q$ is a generalized hypergeometric function (the above formula is given in equation (14) of \citet*{laguerre}).
As the formulas for $\A_{k,n}$, $\B_{k,n}$ and $\Bt_{k,n}$ are quite long, for the sake of brevity, we do not provide them here.
\par
The payoff of a European put option with strike price $K>0$ can be decomposed as follows
\begin{align}
(K - S_t)^+		
&=		(K-X_t)^{+} \, \I_{ \left\{ \tau > t \right\}} + K \( 1 - \I_{ \left\{ \tau > t \right\}} \) . \label{eq:PutDecomp}
\end{align}
The first term on the RHS of \eqref{eq:PutDecomp} represents the payoff of a put given no default prior to time $t$.  The second term represents the payoff of a put option given a default occurs prior to time $t$.  Thus, the value of a put option with strike price $K$ -- denoted $u^{\eps,\del}(t,x;K)$ -- can be expressed as the sum of two parts
\begin{align}
u^{\eps,\del}(t,x;K)
																&=		u_0^{\eps,\del}(t,x;K) + u_D^{\eps,\del}(t,x;K) , \label{eq:Put,temp}
\end{align}
where 
\begin{align}
u_0^{\eps,\del}(t,x;K) 		&=	\Et_{x,y,z} \[ (K - X_t)^+ \I_{ \left\{ \tau > t \right\} }\] , \label{eq:u0def} \\
u_D^{\eps,\del}(t,x;K) 		&=	K - K \, \Et_{x,y,z} \[ \I_{ \left\{ \tau > t \right\} } \] \\
													&=	K - K \int_0^\infty  \Et_{x,y,z} \[ \delta_{x'}(X_t) \, \I_{ \left\{ \tau > t \right\} } \] dx' \\
													&=	K - K \, \int_0^\infty u_1^{\eps,\del}(t,x;x') \, dx' , \label{eq:u1integral} \\
u_1^{\eps,\del}(t,x;x')		&=	\Et_{x,y,z} \[ \delta_{x'}(X_t) \, \I_{ \left\{ \tau > t \right\} } \] .	\label{eq:u1def}
\end{align}
Note, because $1 \notin L^2(\R^+,\m)$, we have used the fact that $1 = \int_0^\infty \delta_{x'}(X_t) \, dx'$ on the set $\left\{ \tau > t \right\}$.  This substitution comes at a cost; the integral in \eqref{eq:u1integral} must be computed numerically.  However, numerical evaluation of \eqref{eq:u1integral} is not computationally intensive and does not pose any major difficulties.
\par
Since the payoff functions $H_0(x) = (K-x)^+$ and $H_1(x) = \del_{x'}(x)$ belong to $L^2(\R^+,\m)$, we may calculate
\begin{align}
c_{0,n}									&= 				\( \psi_n(\cdot) , (k-\cdot)^+ \) , &
c_{1,n}										&= 				\( \psi_n , \del_{x'} \) .
\end{align}
The expression for $c_{0,n}$ can be found in equation (8.15) of Theorem 8.4 in \citet*{carr}.  The expression for $c_{1,n}$ is computed trivially.  We have
\begin{align}
c_{0,n}
		&=				\frac{ A^{\nu /2+1} K^{2c/\sigb^2+1-2\eta} \sqrt{ \Gam (\nu+n) } }{ \Gam(\nu +1) \sqrt{ \mu (n-1)! } } \, \times \\
		&					\qquad \[
							\frac{|\eta|}{(c/\sigb^2)+|\eta|} \,{}_2 F_2 \( \begin{array}{cc} 1-n, & \tfrac{c/\sigb^2}{|\eta|}+1 \\ \nu+1, & \tfrac{c/\sigb^2}{|\eta|}+2 \end{array} ; A \, K^{-2\eta }\)
							- \frac{ \Gam(\nu+1)(n-1)! }{ \Gam(\nu + n + 1) } L_{n-1}^{(\nu+1)}(A \, K^{-2\eta})
							\] , \label{eq:c0n} \\
c_{1,n}	
		&= 				\psi_n(x') \, \m(x') .
\end{align} 
The approximate price of a European put option can now be computed using Theorems \ref{thm:u00}, \ref{thm:u10} and \ref{thm:u01}.
\par
For European options, it is often the implied volatility induced by an option price, rather than the option price itself that is of primary interest.  Recall that the implied volatility $I^{\eps,\del}$ of a put option with price $u^{\eps,\del}(t,x;K)$ is defined implicitly through 
\begin{align}
u^{\eps,\del}(t,x;K)=u^{\text{BS}}(t,x,I^{\eps,\del};K)
\end{align}
where $u^{\text{BS}}(t,x,I^{\eps,\del};K)$ is the Black-Scholes price of a put as calculated with volatility $I^{\eps,\del}$.
\par
On the left side of figure \ref{fig:JDCEVslow} we plot the implied volatility induced by the approximate price $u_{0,0}+\sqrt{\eps}\,u_{1,0}$ of a put option for a specific model that has only a fast-varying factor of volatility.  We suppose the dynamics of $Y$ and the volatility function $f$ are given by \eqref{eq:Yexample}.  For comparison, we also plot the implied volatility induced by the full price $u^\eps$ (calculated by Monte Carlo simulation) and the implied volatility induced by the JDCEV price $u_{0,0}$.  On the right side of figure \ref{fig:JDCEVslow} we plot the implied volatility induced by the approximate price $u_{0,0}+\sqrt{\del}\,u_{0,1}$ of a put option for a specific model that has only a slow-varying factor of volatility.  We suppose the dynamics of $Z$ and the volatility function $f$ are given by \eqref{eq:Zexample}.  For comparison, we also plot the implied volatility induced by the full price $u^\del$ (calculated by Monte Carlo simulation) and the implied volatility induced by the JDCEV price $u_{0,0}$.  As expected, as $\eps$ and $\del$ go to zero, the implied volatility induced by the approximate price converges to the implied volatility induced by the full price, which converges to the implied volatility induced by the JDCEV price.

%
%

\section{Review and Conclusions}
This paper develops a general method for obtaining the approximate price for a large class of derivative-assets.  The payoff of the derivatives may be path-dependent and the process underlying the derivative-assets may exhibit jump to default as well as combined local/nonlocal stochastic volatility.  The intensity of the jump to default event may be state-dependent and the nonlocal component of volatility may be multiscale, driven by one fast-varying and one slow-varying factor.
\par
One key advantage of our pricing methodology is that, by combining techniques from spectral theory, singular perturbation theory and regular perturbation theory, we reduce the derivative pricing problem to that of solving a single eigenvalue equation.  Once this equation is solved, the approximate price of a derivative-asset may be calculated formulaically.  We have illustrated the simplicity and flexibility of our method by calculating the approximate prices of thre derivative assets: a double-barrier option on a non-defaultable stock, a European option on a defaultable stock, and a non-defaultable bond in a short-rate model.
\par
We believe that the flexibility of our framework, as well as the analytic tractability that our pricing methodology provides merit further research in this area.  A logical next step, for example, would be to extend the results of this paper to include cases where the eigenvalue equation \eqref{eq:Eigen00} does not have a purely discrete spectrum.

\subsection*{Thanks}
The authors of this paper would like to thank Stephan Sturm, Ronnie Sircar and Jean-Pierre Fouque for helpful conversations.  Additionally, the authors would like to thank two anonymous referees, whose comments vastly improved both the quality and readability of this manuscript.

%
%

\appendix
\section{Appendix}


\subsection{Self-Adjoint Operators acting on a Hilbert Space} \label{sec:Hilbert}
In this appendix we summarize some basic properties of self-adjoint operators acting on a Hilbert space.  A detailed exposition on this topic (including proofs) can be found in \citet*{reedsimon}.  We shall closely follow \citet*{linetskybook}, who provides a more streamlined review.
\par
Let $\H$ be a real, separable
\footnote{A Hilbert space is separable if and only if it admits a countable orthonormal basis (i.e., Schauder basis).}
Hilbert space with inner product $(\cdot,\cdot)$.  A \emph{linear operator}  is a pair $(\text{dom}(\L),\L)$ where $\text{dom}(\L) $ is a linear subset of $\H$ and $\L$ is a linear map $\L:\text{dom}(\L) \to \H$.  
The \emph{adjoint} of an operator $\L$ is an operator $\L^{*}$ such that $(\L f,g)	=	(f, \L^{*} g), \forall \, f \in \text{dom}(\L), g \in \text{dom}(\L^{*})$, where
\begin{align}
\text{dom}(\L^{*}):=\{ g \in \H : \exists \, h \in \H \text{ such that } (\L f, g) = (f,h) \,\, \forall \, f \in \text{dom}(\L) \} .
\end{align}
An operator $(\text{dom}(\L),\L)$ is said to be \emph{self-adjoint} in $\H$ if
\begin{align}
\text{dom}{(\L)}	&=	\text{dom}(\L^*) , &
(\L f,g)					&=	(f,\L g) & \forall \, f,g \in \text{dom}(\L).
\end{align}
Throughout this appendix, for any self-adjoint operator $\L$, we will assume that $\text{dom}(\L)$ is a dense subset of $\H$.
\par
Given a linear operator $\L$, the \emph{resolvent set} $\rho(\L)$ is defined as the set of $\lam \in \mathbb{C}$ such that the mapping $(\L - \text{Id} \, \lam)$ is one-to-one and $R_\lam:=(\L - \text{Id} \, \lam)^{-1}$ is continuous with $\text{dom}( R_\lam ) = \H$.
The operator $R_\lam:\H \to \H$ is called the \emph{resolvent}.  The \emph{spectrum} $\sig(\L)$ of an operator $\L$ is defined as $\sig(\L):= \mathbb{C}\setminus\rho(\L)$.  If $\L$ is self-adjoint, its spectrum is non-empty and real.  We say that $\lam \in \sig(\L)$ is an \emph{eigenvalue} of $\L$ if there exists $\psi \in \text{dom}(\L)$ such that the \emph{eigenvalue equation} is satisfied
\begin{align}
\L \, \psi &= \lam \, \psi . \label{eq:EigenvalueEquation}
\end{align}
A function $\psi$ that solves \eqref{eq:EigenvalueEquation} is called an \emph{eigenfunction} of $\L$ corresponding to $\lam$.  The \emph{multiplicity} of an eigenvalue $\lam$ is the number of linearly independent eigenfunctions for which equation \eqref{eq:EigenvalueEquation} is satisfied.  The spectrum of an operator $\L$ can be decomposed into two disjoint sets called the \emph{discrete} and \emph{essential} spectrum $\sig(\L)=\sig_d(\L) \cup \sig_e(\L)$.  For a self-adjoint operator $\L$, a number $\lam \in \R$ belongs to $\sig_d(\L)$ if and only if $\lam$ is an isolated point of $\sig(\L)$ and $\lam$ is an eigenvalue of finite multiplicity.
\par
The \emph{spectral representation Theorem} is an important tool for analysing self-adjoint operators acting on a Hilbert space.  We state this theorem below in a form which is convenient for the computations in this paper.
\begin{theorem}\label{thm:spectral}
Assume $\L$ is a self-adjoint operator in $\H$ and assume $\L$ has a purely discrete spectrum (i.e., $\sig_e(\L) = \{\emptyset\}$).  The \emph{Spectral Representation Theorem} states that $\L f$ has an eigenfunction expansion
\begin{align}
\L f
		&=		\sum_n \lam_n \, ( \psi_n ,f ) \, \psi_n , &
\forall \, f
		&\in 	\text{\emph{dom}}(\L) ,
\end{align}
where the sum runs over all solutions $\{\lam_n ,\psi_n \}$ of the eigenvalue equation \eqref{eq:EigenvalueEquation}.  Furthermore, for any real-valued Borel-measurable function on $\R$ one can define an operator $\phi(\L)$ using \emph{functional calculus}
\begin{align}
\phi(\L) f
		&:=		\sum_n \phi(\lam_n) \, ( \psi_n ,f ) \, \psi_n , &
\forall \, f
		&\in 	\text{\emph{dom}}(\phi(\L)) , \label{eq:functional} \\
\text{\emph{dom}}(\phi(\L))
		&:= \{ f \in \text{\emph{dom}}(\L) : \sum_n \phi^2(\lam_n) \( \psi_n, f \)^2 < \infty \} .
\end{align}
The operator $\phi(\L)$ is self-adjoint in $\H$ and $\text{\emph{dom}}(\phi(\L)) \subseteq \text{\emph{dom}}(\L)$. 
\end{theorem}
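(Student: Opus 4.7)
The plan is to derive the eigenfunction expansion from the classical spectral theorem for compact self-adjoint operators (the Hilbert--Schmidt theorem), applied to the resolvent of $\L$ rather than to $\L$ itself. The key observation is that the hypothesis $\sigma_e(\L)=\emptyset$ is exactly what forces the resolvent to be compact, after which the expansion for $\L$ follows by a spectral-mapping argument, and the functional calculus is defined termwise on the resulting orthonormal basis.

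First I would record the preliminary orthogonality step. For two eigenvalues with $\lam_m \neq \lam_n$, self-adjointness of $\L$ gives
\begin{align}
\lam_m (\psi_m,\psi_n) = (\L\psi_m,\psi_n) = (\psi_m,\L\psi_n) = \lam_n(\psi_m,\psi_n),
\end{align}
forcing $(\psi_m,\psi_n)=0$. Inside each eigenspace (finite-dimensional since $\lam_n \in \sig_d(\L)$) apply Gram--Schmidt, so we may assume $\{\psi_n\}$ is orthonormal. Next, since $\L$ is self-adjoint one has $\mathbb{C}\setminus\R \subset \rho(\L)$, so we may fix $\lam_0 \in \rho(\L)$ and set $R := (\L - \lam_0\,\text{Id})^{-1}$, which is a bounded self-adjoint operator on $\H$. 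By the spectral-mapping theorem, $\sigma(R) = \{1/(\lam_n - \lam_0) : \lam_n \in \sigma(\L)\} \cup \{0\}$, and under the purely-discrete hypothesis this is a countable set whose only possible accumulation point is $0$, with each nonzero element having finite multiplicity. This is precisely the spectral signature of a compact self-adjoint operator.

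With compactness of $R$ established, the Hilbert--Schmidt theorem furnishes an orthonormal basis $\{\psi_n\}$ of $\H$ consisting of eigenvectors of $R$, and by bijective correspondence these are exactly the eigenfunctions of $\L$ with eigenvalues $\lam_n = \lam_0 + 1/\mu_n$. Parseval then gives $f = \sum_n (\psi_n,f)\psi_n$ for every $f \in \H$; for $f \in \text{dom}(\L)$, the closedness of $\L$ together with the fact that $\L\psi_n = \lam_n\psi_n$ legitimizes the termwise application
\begin{align}
\L f = \sum_n \lam_n (\psi_n,f)\psi_n.
\end{align}
For the functional-calculus part, define $\phi(\L)f$ by \eqref{eq:functional} on the stated domain. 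Symmetry of $\phi(\L)$ follows from $\phi$ being real-valued; density of $\text{dom}(\phi(\L))$ is immediate because it contains all finite linear combinations of the $\{\psi_n\}$; and essential self-adjointness on this domain follows from the fact that any $g$ in the adjoint domain satisfies $\sum_n \phi^2(\lam_n)|(\psi_n,g)|^2 < \infty$, so $g \in \text{dom}(\phi(\L))$.

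The main obstacle is the compactness step: one has to verify that a bounded self-adjoint operator whose spectrum consists of isolated eigenvalues of finite multiplicity accumulating only at $0$ is in fact compact. This is not purely formal, and the cleanest justification uses the bounded spectral theorem to decompose $R = \sum_n \mu_n P_n$ (convergent in operator norm), where $P_n$ are the finite-rank spectral projections associated with $\mu_n \neq 0$; compactness follows since $R$ is the norm-limit of the finite-rank operators $\sum_{|\mu_n|>\eps} \mu_n P_n$. Once this is in hand the remainder of the proof is routine verification. Since the result is entirely classical, for a complete and detailed treatment one may appeal to \citet*{reedsimon}.
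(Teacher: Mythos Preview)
Your proposal is substantially more detailed than what the paper actually does: the paper's entire proof of this theorem is the one-line citation ``See \cite{reedsimon} Theorem VIII.6.'' Since you also close by deferring to Reed--Simon, the two are ultimately aligned, but you have supplied a genuine sketch of the argument where the paper gives only a bare reference.

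The sketch itself follows the standard route (discreteness of $\sigma(\L)$ $\Rightarrow$ compactness of the resolvent $\Rightarrow$ Hilbert--Schmidt theorem $\Rightarrow$ orthonormal eigenbasis, then termwise functional calculus) and is essentially correct. One small slip: after noting $\mathbb{C}\setminus\R\subset\rho(\L)$ you immediately declare $R=(\L-\lam_0\,\text{Id})^{-1}$ to be self-adjoint, but this requires $\lam_0\in\R$; for non-real $\lam_0$ one has $R^*=(\L-\overline{\lam_0}\,\text{Id})^{-1}\neq R$. The repair is trivial---since $\sigma(\L)$ is a discrete subset of $\R$ there are real points in $\rho(\L)$, so just choose one---but it would be cleaner to say so. The rest (spectral mapping for $R$, compactness via norm-limit of finite-rank spectral partial sums, termwise application of $\L$ using closedness, and the self-adjointness check for $\phi(\L)$) is fine.
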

\begin{proof}
See \cite{reedsimon} Theorem VIII.6.
\end{proof}
\noindent
Note that setting $\phi(\lam)=\text{Id}$ yields 
\begin{align}
\text{Id} \, f
		&=		\sum_n ( \psi_n ,f ) \, \psi_n , &
\forall \, f
		&\in  \H , \label{eq:Identity}
\end{align}
which is equivalent to saying that the eigenfunctions $\{ \psi_n \}$ of a densely defined self-adjoint operator in $\H$ form a Schauder basis.  In fact, the basis can be chosen to be orthonormal $(\psi_n,\psi_m)=\delta_{n,m}$.  Also note, setting $\phi(\lam) = R_\lam$ yields an eigenfunction representation of the resolvent operator
\begin{align}
R_\lam \, f
		&=			\sum_n \frac{( \psi_n ,f )}{\lam_n -\lam} \, \psi_n , &
\forall \, f
		&\in  \H , \, \lam \in \rho(\L) .
\end{align}


\subsection{Boundary Conditions} \label{sec:BCs}
According to \citet*{feller1954}, the endpoints $e_1$ and $e_2$ of a one-dimensional diffusion in an interval $I$ can be classified as either \emph{natural}, \emph{exit}, \emph{entrance} or \emph{regular}.  The classification, which can be found in \citet*{borodin,linetskybook}, is done as follows.  For a general infinitesimal generator $\L$ of the form \eqref{eq:L} one can associate a \emph{scale density}
\begin{align}
\s(x)				&:=			\exp \( - \int_{x_0}^x \frac{2 b(y)}{a^2(y)} dy \) ,	& &(\text{scale density)} \label{eq:s}
\end{align}
where the lower limit of integration $x_0 \in (e_1,e_2)$ may be chosen arbitrarily.  From $\s$ one can define a \emph{scale function} $\S$
\begin{align}
\S \( \[ x, y \] \)			&:=			\int_x^y \s(z) \, dz , & 
x,y 										&\in		\( e_1, e_2 \) , \\ 
\S \( \(e_1, y \] \)		&:=			\lim_{x \searrow e_1}	\S \( \[ x, y \] \)	,				&
\S \( \[x, e_2 \) \)		&:=			\lim_{y \nearrow e_2} \S \( \[ x, y \] \) .
\end{align}
Note that the above limits may be infinite.  For some arbitrary $y \in \( e_1, e_2 \)$ we define
\begin{align}
I_1				&:=					\int_{e_1}^y \S \( \( e_1, x \] \) \( 1 + k(x) \) \m(x) \, dx , &
I_2				&:=					\int_y^{e_2} \S \( \[ x, e_2 \) \) \( 1 + k(x) \) \m(x) \, dx , \\
J_1				&:=					\int_{e_1}^y \S \( \[ x, y \] \) \( 1 + k(x) \) \m(x) \, dx , &
J_2				&:=					\int_y^{e_2} \S \( \[ y, x \] \) \( 1 + k(x) \) \m(x) \, dx .
\end{align}
An endpoint $e_i$ is classified as
\begin{itemize}
\item \textbf{Natural} if $I_i = \infty$ and $J_i = \infty$.
No BC needs to be specified at a natural boundary.
The interval $I$ is taken to be open at a natural boundary.
\item \textbf{Exit} if $I_i < \infty$ and $J_i = \infty$.
The appropriate BC at an exit boundary is
\begin{align}
\lim_{x \rightarrow e_i} \psi(x) 														&= 0 .
\end{align}
The interval $I$ is taken to be open at an exit boundary.  
\item \textbf{Entrance} if $I_i = \infty$ and $J_i < \infty$.
The appropriate BC at an entrance boundary is
\begin{align}
 \lim_{x \rightarrow e_i} \frac{\d_x \psi(x)}{\s(x)} 	&= 0 .
\end{align}
The interval $I$ is taken to be open at an entrance boundary.  
\item \textbf{Regular} if $I_i < \infty$ and $J_i < \infty$.
We must specify the behavior of a diffusion at a regular boundary.
Here, we consider only \emph{killing} and \emph{instantaneously reflecting} behavior, for which the appropriate BCs are
\begin{align}
\lim_{x \rightarrow e_i} \psi(x) 														&= 0 \qquad \text{(killing BC)} , &
\lim_{x \rightarrow e_i} \frac{\d_x \psi(x)}{\s(x)} 	&= 0 \qquad \text{(instantaneously reflecting BC)}
\end{align}
The interval $I$ is taken to be open at a regular boundary specified as a killing boundary and closed at a regular boundary specified as instantaneously reflecting.
\end{itemize}
The domain of $\L$ is then
\begin{align}
\text{Dom}\( \L \) 		&=			\left\{ f \in L^2\( I, \m \) : f, \d_x f \in AC_{\text{loc}}(I), \L f \in L^2\( I, \m \) , \text{BCs at $e_1$ and $e_2$} \right\} , \label{eq:domainL}
\end{align}
where $AC_{\text{loc}}(I)$ is the space of functions that are absolutely continuous over each compact subinterval of $I$  (see \citet*{linetskybook}, p. 242).  The BCs at $e_1$ and $e_2$ correspond to the BCs specified above for natural, exit, entrance and regular boundaries.


\subsection{Specific Model Assumptions}\label{sec:assumptions}
\begin{enumerate}
\item We assume existence and uniqueness of $(X,Y,Z)$ as the strong solution to \eqref{eq:Physical}.
\item We assume existence and uniqueness of $(X,Y,Z)$ as the strong solution to \eqref{eq:RiskNeutral}.
\item There exist positive constants $C_\Lam < \infty $ and $C_\Gam < \infty $ such that $|| \Lam ||_\infty <C_\Lam$ and $|| \Gam ||_\infty < C_\Gam$.  
\label{item:GammaLambdaBound}
\item 
\label{item:Y}
Define the time-rescaled process $Y^{(1)}_t:=Y_{\eps \, t}$.  Under $\P$, the process $Y^{(1)}$ has infinitesimal generator $\L_0$.
Under $\P$ we assume:
\begin{enumerate}
\item The process $Y^{(1)}$ is ergodic and has a unique invariant distribution $\Pi$ with density $\pi$.
\item The operator $\L_0$ has a strictly positive spectral gap -- meaning the smallest non-zero eigenvalue $\lam_{min}$ of $\( - \L_0 \)$ is strictly positive.
\item The process $Y^{(1)}$ is reversible -- meaning $\L_0$ is self-adjoint acting on $L^2\( \R, \pi \)$.
\end{enumerate}
These assumptions guarantee (see \citet*{fpss}, p. 93) exponential convergence of $Y^{(1)}$ to its invariant distribution
\begin{align}
\left| \E \[ g\(Y_t^{(1)} \) \] - \<g\> \right| 		&\leq 			C \, \exp \( -\lam_{min} t \) , &
\forall \, g																				&\in				L^2\( \R, \pi \) . &
\end{align}
The above assumptions also ensure (see \citet*{fpss}, p. 139) that for all $k \in \mathbb{N}$ there exists $C(k)<\infty$ such that
\begin{align}
\sup_t \E \[ \left| Y_t^{(1)} \right|^k \] 		&\leq 		C(k)	.			\label{eq:Ybound}
\end{align}
\item Define the time-rescaled process $Z^{(1)}_{t} := Z_{t/\del}$.  Under $\P$, the process $Z^{(1)}$ has infinitesimal generator $\M_2$.
Under $\P$ we assume the process $Z^{(1)}$ admits moments that are uniformly bounded in $s<t$.  That is, for all $k \in \mathbb{N}$ there exists $C(t,k)<\infty$ such that
\begin{align}
\sup_{s \leq t} \E \[ \left| Z_s^{(1)} \right|^k \] 		&\leq 		C(t,k)	.\label{eq:Zbound}
\end{align}
\item \label{item:sig,fOm}
We assume that the functions $f(y,z)$ and $\Om(y,z)$ satisfy $\sigb^2(z)  	<\infty$, $\fOmb(z) 	<\infty$ and
the solutions $\phi(y,z)$ and $\eta(y,z)$ to Poisson equations \eqref{eq:eta} 
are at most polynomially growing.
\item The functions $a(x)$ $b(x)$, $r(x)$ and $h(x)$ satisfy $a>0$, $a \in C^2( I )$, $b \in C^1( I )$, $r \geq 0$, $r \in C(I)$, $h \geq 0$, and $h \in C(I)$.
\item 
\label{item:L}
The spectrum of the operator $\<\L_2\>$, defined in \eqref{eq:<L00>}, 
is simple and purely discrete. 
\end{enumerate}
We note that two of the processes that are most commonly used to model volatility -- the Cox-Ingersoll-Ross (CIR) and Ornstein-Uhlenbeck (OU) processes -- satisfy the assumptions placed on both $Y^{(1)}$ and $Z^{(1)}$.


\subsection{Poisson Equations and the Fredholm Alternative} \label{sec:Poisson}
In this appendix we review the existence and uniqueness of solutions to Poisson equations.  Central to this discussion will be a statement of the Fredholm alternative.  Our presentation follows page 93 of \citet*{fpss}, as well as page 124 of \citet*{fouque2007wave}.
\par
Let $\L$ be a self-adjoint operator densely defined on a real separable Hilbert space $\H$, and let $\{\psi_n,\lam_n\}$ be the complete set of solutions to eigenvalue equation $\L \psi_n = \lam_n \psi_n$.  Consider the following Poisson problem: find, $\psi \in \H$ such that
\begin{align}
\( \L - \lam \) \psi &= \chi , \label{eq:PoissonB}
\end{align}
where the function $\chi \in \H$ and the constant $\lam$ are given.
\begin{theorem}\label{thm:fredholm}
The \emph{Fredholm Alternative} states that one of the following is true:
\begin{enumerate}
\item Either $\lam$ is not an eigenvalue of $\L$, in which case equation \eqref{eq:PoissonB} has a unique solution
\begin{align}
\psi
		&=		R_\lam \, \chi = \sum_n \frac{( \psi_n ,\chi )}{\lam_n -\lam} \, \psi_n .
\end{align}
\item Or, $\lam$ is an eigenvalue of $\L$.  Suppose this is the case.  Let $\lam = \lam_1 = \lam_2 = \cdots = \lam_m$ (i.e., the eigenvalue $\lam$ has multiplicity $m$).  Then \eqref{eq:PoissonB} has a solution if and only if $\( \psi_n, \chi \) = 0$ for all $n \leq m$.  Assuming $\( \psi_n, \chi \) = 0$ for all $n \leq m$, a solution to \eqref{eq:PoissonB} has the form
\begin{align}
\psi
		&=		\sum_{n > m} \frac{( \psi_n ,\chi )}{\lam_n -\lam_k}\, \psi_n
					+ \sum_{n \leq m } c_n \psi_n, &
c_n
		&\in \R .
\end{align}
\end{enumerate}
\end{theorem}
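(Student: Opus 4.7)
The plan is to reduce the operator equation \eqref{eq:PoissonB} to a countable diagonal system of scalar equations by expanding both sides in the orthonormal eigenbasis $\{\psi_n\}$ provided by Theorem \ref{thm:spectral}, then to solve that system coefficient by coefficient in each of the two cases.

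First, I would use \eqref{eq:Identity} to write $\chi = \sum_n \chi_n \psi_n$ with $\chi_n := (\psi_n,\chi)$ and Parseval gives $\sum_n \chi_n^2 = \|\chi\|^2 < \infty$. Any candidate solution $\psi \in \text{dom}(\L)$ likewise expands as $\psi = \sum_n a_n \psi_n$, and functional calculus from Theorem \ref{thm:spectral} yields
\begin{align}
(\L - \lam)\psi = \sum_n (\lam_n - \lam)\, a_n \, \psi_n.
\end{align}
Matching coefficients against $\chi$ in the orthonormal basis reduces \eqref{eq:PoissonB} to the diagonal system $(\lam_n - \lam)\, a_n = \chi_n$ for every $n$.

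For Case 1, $\lam \notin \sig(\L) = \sig_d(\L)$ (the spectrum being purely discrete by hypothesis), so $\lam_n - \lam \neq 0$ for all $n$ and the coefficients $a_n = \chi_n / (\lam_n - \lam)$ are uniquely determined. Since the eigenvalues of a self-adjoint operator with purely discrete spectrum can accumulate only at infinity and $\sig(\L)$ is closed, $d := \inf_n|\lam_n - \lam| > 0$, so
\begin{align}
\sum_n a_n^2 \leq d^{-2}\sum_n \chi_n^2 < \infty,
\end{align}
placing $\psi$ in $\H$. To verify $\psi \in \text{dom}(\L)$ I would invoke the characterization $\text{dom}(\L) = \{f \in \H : \sum_n \lam_n^2(\psi_n,f)^2 < \infty\}$ from functional calculus; the required sum equals $\sum_n \lam_n^2 \chi_n^2/(\lam_n-\lam)^2$, which is finite because $\lam_n^2/(\lam_n-\lam)^2$ is bounded in $n$. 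Uniqueness follows because any two solutions have identical expansion coefficients.

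For Case 2, list the eigenvalue $\lam$ as $\lam = \lam_1 = \cdots = \lam_m$ with $\lam_n \neq \lam$ for $n > m$. The diagonal system splits: for $n \leq m$ it reads $0 \cdot a_n = \chi_n$, which is solvable if and only if $\chi_n = (\psi_n, \chi) = 0$ for all $n \leq m$, yielding the orthogonality (centering) condition; for $n > m$ it yields the unique values $a_n = \chi_n/(\lam_n - \lam)$, while the coefficients $a_n$ for $n \leq m$ remain free and play the role of the arbitrary constants $c_n$. Convergence in $\H$ and membership in $\text{dom}(\L)$ for the tail $\sum_{n > m}$ follow by the same estimate as in Case 1, now using $\inf_{n > m}|\lam_n - \lam| > 0$, which holds because $\lam$ is an isolated point of $\sig_d(\L)$.

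The main subtlety is the uniform lower bound $\inf_n |\lam_n - \lam| > 0$ needed for $\H$-convergence of the resolvent series; this is precisely where the hypothesis of a purely discrete spectrum (so eigenvalues accumulate only at infinity and isolated eigenvalues have a spectral gap) is essential. Everything else is bookkeeping in the orthonormal basis guaranteed by Theorem \ref{thm:spectral}.
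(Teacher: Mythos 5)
Your proof is correct, but it is genuinely different from what the paper does: the paper offers no argument at all for this theorem, simply citing Reed--Simon (Theorem VI.14 and its corollary) and remarking separately that the classical compact-operator Fredholm alternative extends to Sturm--Liouville operators of the form \eqref{eq:L} via the reference to Haberman. You instead give a self-contained diagonalization proof, expanding both $\chi$ and the unknown $\psi$ in the orthonormal eigenbasis and reducing \eqref{eq:PoissonB} to the scalar system $(\lam_n-\lam)a_n=\chi_n$. The argument is sound, and you correctly identify the one real analytic point, namely the spectral-gap bound $\inf_n|\lam_n-\lam|>0$ (over $n>m$ in Case 2), which is exactly where the purely-discrete-spectrum hypothesis enters; you also implicitly correct the typo $\lam_n-\lam_k$ in the displayed solution of Case 2, which should read $\lam_n-\lam$. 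What your route buys is transparency and elementarity -- everything follows from Theorem \ref{thm:spectral} and Parseval, with no appeal to compactness. What the paper's citation buys is generality and the avoidance of domain subtleties: your verification that $\psi\in\text{dom}(\L)$ uses the characterization $\text{dom}(\L)=\{f\in\H:\sum_n\lam_n^2(\psi_n,f)^2<\infty\}$, which is standard for self-adjoint operators with discrete spectrum but is not quite what the paper's Theorem \ref{thm:spectral} states (there $\text{dom}(\phi(\L))$ is defined as a subset of $\text{dom}(\L)$, which is circular for $\phi(\lam)=\lam$); if you wanted a version of your proof resting only on the paper's stated tools, you would either need to add that characterization as a lemma or fall back on the resolvent representation $R_\lam\chi=\sum_n(\lam_n-\lam)^{-1}(\psi_n,\chi)\psi_n$ already recorded after Theorem \ref{thm:spectral}.
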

\begin{proof}
See \citet*{reedsimon}, Theorem VI.14 and the ensuing corollary.
\end{proof}
\noindent
Classically, the Fredholm alternative Theorem holds for compact operators on a Hilbert space.  However, the Theorem also holds true for differential operators $\L$ of the form \eqref{eq:L}, with domain \eqref{eq:domainL} acting on the Hilbert space $\H = L^2(I,\m)$, where $\m$ is the speed density corresponding to $\L$ (see section 9.4.2 of \citet*{haberman}).
\par
In particular, we note that $\lam=0$ is an eigenvalue of $\L_0$, which is a self-adjoint operator in $L^2(\R,\pi)$.  The corresponding (normalized) eigenfunction is the constant $\psi_\lam=1$.  Thus, in order for $\L_0 u = \chi$ to have a solution $u \in L^2(\R,\pi)$ we must have $\(1,\chi\)=\int \chi \pi dy =: \< \chi \> = 0$, which is the centering condition \eqref{eq:center}.


\subsection{Proof of Theorem \ref{thm:u10}} \label{sec:u10proof}
We must show that $u_{1,0}$, given by \eqref{eq:u10,long} satisfies PDE and BC \eqref{eq:EigenB}.  It is obvious that the BC $u_{1,0}(0,x,z)=0$ is satisfied.  To show that $u_{1,0}$ satisfies PDE \eqref{eq:EigenB} we note that
\begin{align}
\A u_{0,0}
		&=		\sum_n c_n \( \A \psi_n \) T_n 
		=			\sum_n \sum_k c_n \A_{k,n} \psi_k T_n, \label{eq:A10u00}
\end{align}
where we have used \eqref{eq:Identity} in the second equality.
Now, using \eqref{eq:Eigen00} and 
\begin{align}
\( -\d_t - \lam_k \) U_{k,n} 
		&=		T_n , &
\( -\d_t - \lam_n \) t \, T_n 
		&=		- T_n , \label{eq:dt}
\end{align}
it is easy to show that
\begin{align}
\( -\d_t + \< \L_2 \> \) u_{1,0} = \eqref{eq:A10u00}.
\end{align}


\subsection{Proof of Theorem \ref{thm:u01}} \label{sec:u01proof}
We must show that $u_{0,1}$, given by \eqref{eq:u01,long} satisfies PDE and BC \eqref{eq:EigenC}.  It is obvious that the BC $u_{0,1}(0,x,z)=0$ is satisfied.  To show that $u_{0,1}$ satisfies PDE \eqref{eq:EigenB} we note that
\begin{align}
\B\d_z u_{0,0}
					&= \sum_n c_n \( \B\d_z \psi_n \) T_n 
					+ \sum_n \( \d_z c_n \) \( \B \psi_n \) T_n 
					+ \sum_n	c_n \( \B \psi_n \) \( \d_z T_n \) \\
					&= \sum_n \sum_k c_n \Bt_{k,n} \psi_k T_n 
					+ \sum_n \sum_k ( \d_z c_n ) \B_{k,n} \psi_k T_n 
					- \sum_n	\sum_k c_n \B_{k,n} \psi_k ( \d_z \lam_n ) t \, T_n ,
					\label{eq:Mdzu00}
\end{align}
where we have used \eqref{eq:Identity} in the second equality.  Now, using \eqref{eq:Eigen00}, \eqref{eq:dt} and
\begin{align}
\(-\d_t - \lam_k\)V_{k,n}
		&=	- t \, T_n , &
\(-\d_t - \lam_n\) \tfrac{1}{2} t^2\, T_n
		&=	- t \, T_n
\end{align}
it is easy to show that
\begin{align}
\( -\d_t + \< \L_2 \> \) u_{0,1}
		&=		\eqref{eq:Mdzu00}.
\end{align}


\subsection{Proof of accuracy} \label{sec:accuracy}
Before establishing our main accuracy result -- Theorem \ref{thm:accuracy} -- we shall need the following lemma. 
\begin{lemma}\label{thm:polynomial}
Suppose $J(y,z)$ is at most polynomially growing.  Then, for every $(y,z)$ and $s<t$, there exists a positive constant $C<\infty$ such that for any $\eps \leq 1$ and $\del \leq 1$, we have the following inequality
\begin{align}
\Et_{y,z} \[ \, |  J(Y_s, Z_s) | \, \] &\leq C .
\end{align}
\end{lemma}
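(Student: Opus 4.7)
My plan is to reduce the $\Pt$-expectation to a $\P$-expectation via Girsanov, then exploit the uniform moment bounds on the time-rescaled processes $Y^{(1)}$ and $Z^{(1)}$ together with the polynomial growth of $J$. The boundedness of the market prices of risk $\Lambda$ and $\Gamma$ (assumption \ref{item:GammaLambdaBound}) will do all the heavy lifting for the change of measure.

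First I would define an auxiliary measure $\bar{\P}$ via the density
\begin{align}
\frac{d\bar{\P}}{d\P}\bigg|_{\F_s}
&= \exp\Big( -\int_0^s \Lam(Y_u,Z_u)\,dW^y_u - \int_0^s \Gam(Y_u,Z_u)\,dW^z_u \\
&\qquad\qquad\quad -\tfrac{1}{2}\int_0^s \big(\Lam^2 + 2\rho_{yz}\Lam\Gam + \Gam^2\big)(Y_u,Z_u)\,du\Big).
\end{align}
Because $\Lam$ and $\Gam$ are bounded, Novikov's condition is satisfied and this is a true martingale. Under $\bar{\P}$, $\bar{W}^y := W^y + \int_0^\cdot \Lam\,du$ and $\bar{W}^z := W^z + \int_0^\cdot \Gam\,du$ are correlated Brownian motions, and plugging these into the $\P$-dynamics \eqref{eq:Physical} of $(Y,Z)$ reproduces the $\Pt$-dynamics \eqref{eq:RiskNeutral} for $(Y,Z)$ verbatim. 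Since $(Y,Z)$ depends only on $(W^y,W^z)$, the joint law of $(Y_s,Z_s)$ under $\bar{\P}$ coincides with its law under $\Pt$, so $\Et_{y,z}[|J(Y_s,Z_s)|]=\bar{\E}_{y,z}[|J(Y_s,Z_s)|]$.

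Next I would apply Cauchy--Schwarz:
\begin{align}
\bar{\E}_{y,z}\big[|J(Y_s,Z_s)|\big]
= \E_{y,z}\Big[ |J(Y_s,Z_s)| \cdot \tfrac{d\bar{\P}}{d\P}\big|_{\F_s}\Big]
\leq \Big(\E_{y,z}\big[|J(Y_s,Z_s)|^2\big]\Big)^{1/2}\Big(\E_{y,z}\big[(d\bar{\P}/d\P)^2\big]\Big)^{1/2}.
\end{align}
For the density factor, writing $d\bar{\P}/d\P = \exp(M_s - \tfrac{1}{2}\langle M\rangle_s)$ and using $\langle M\rangle_s \leq C_*\,s \leq C_*\,t$ (by boundedness of $\Lam,\Gam$), the standard identity $\E[\exp(2M_s - \langle M\rangle_s)]\cdot e^{\langle M\rangle_s}$ gives $\E_{y,z}[(d\bar{\P}/d\P)^2]\leq e^{C_* t}$, which is independent of $\eps$ and $\del$. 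For the $|J|^2$ factor, polynomial growth $|J(y,z)|\leq c(1+|y|^n+|z|^m)$ reduces the problem to bounding $\E_{y,z}[|Y_s|^{2n}]$ and $\E_{y,z}[|Z_s|^{2m}]$. Rescaling time via $Y_s = Y^{(1)}_{s/\eps}$ and $Z_s = Z^{(1)}_{\del s}$, assumption \eqref{eq:Ybound} gives $\E_{y,z}[|Y_s|^{2n}]\leq C(2n)$ uniformly in $s$ and $\eps$, and assumption \eqref{eq:Zbound} together with $\del s \leq t$ (using $\del\leq 1$) gives $\E_{y,z}[|Z_s|^{2m}]\leq C(t,2m)$ uniformly in $\del$.

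Combining these two bounds yields the desired constant $C$, which depends on $(y,z,t)$ and the polynomial growth rate of $J$ but not on $\eps$ or $\del$ in $(0,1]$. The main obstacle, and the reason assumption \ref{item:GammaLambdaBound} is imposed, is controlling the exponential density uniformly; without the $L^\infty$ bound on $(\Lam,\Gam)$ one would need to either invoke more delicate moment bounds on $(Y,Z)$ under $\Pt$ directly, or restrict the growth of the market prices of risk through additional assumptions.
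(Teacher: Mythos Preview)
Your proof is correct and follows essentially the same approach as the paper: pass from $\Pt$ to $\P$ via Girsanov (using the boundedness of $\Lambda,\Gamma$ to control the exponential density), apply Cauchy--Schwarz, and then invoke the uniform moment bounds \eqref{eq:Ybound}--\eqref{eq:Zbound} on the time-rescaled processes $Y^{(1)}$ and $Z^{(1)}$. The only cosmetic difference is that the paper reduces to monomials $y^k$ and $z^k$ and changes measure for $Y$ and $Z$ separately (with densities $M^{(\Lambda)}$ and $M^{(\Gamma)}$), whereas you treat $(Y,Z)$ jointly with a single combined density---arguably the tidier implementation of the same idea.
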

\begin{proof}[Proof of Lemma \ref{thm:polynomial}]
It is enough to prove the result for $J(y,z)=y^k$ and $J(y,z)=z^k$ for any $k \in \mathbb{N}$.  We begin with $J(y,z) = z^k$.  Under the physical measure $\P$ we have
\begin{align}
\E \[ | Z_s |^k \] 				&= \E \[ | Z_{\del s}^{(1)} |^k \] 							\leq 				\sup_{\del \leq 1} \E \[ | Z_{\del s}^{(1)} |^k \]  		\leq 				C(s,k)				\leq				C(t,k) ,
\end{align}
by \eqref{eq:Zbound}.  Now define an exponential martingale $M_t^{(\Gam)}$, which relates the dynamics of $Z$ under the risk-neutral measure $\Pt$ to its dynamics under the physical measure $\P$.  We have
\begin{align}
M_t^{(\Gam)}		&:=		\exp \(  -\int_0^t \Gam(Y_s,Z_s) \, dW_s^z - \frac{1}{2} \int_0^t  \Gam^2(Y_s,Z_s) \, ds \) 
								= 		\left. \frac{d\Pt}{d\P} \right|_{\F_t}.
\end{align}
The $\Pt$-expectation of $\left| Z_s \right|^k$ can be found as follows:
\begin{align}
\Et \[ \left| Z_s \right|^k \] 
				&=						\E \[ | Z_s |^k M_s^{(\Gam)} \]	\\
				&=						\E \[ | Z_s |^k \exp \( \frac{1}{2} \int_0^s  \Gam^2(Y_u,Z_u) \, du \) \(M_s^{(2 \Gam)} \)^{1/2} \] \\
				&\leq				\( \E \[ | Z_s |^{2k} \exp \( \int_0^s  \Gam^2(Y_u,Z_u) \, du \) \] \)^{1/2} \( \E \[ M_s^{(2 \Gam)} \] \)^{1/2}								& &\text{(by Cuachy-Schwarz)} \\
				&=						\( \E \[ | Z_s |^{2k} \exp \( \int_0^s  \Gam^2(Y_u,Z_u) \, du \) \] \)^{1/2} 																															& &\text{($M^{(2\Gam)}$ is a $\P$-martingale)} \\
				&\leq				\( \E \[ | Z_{\del s}^{(1)} |^{2k} \] \exp \( s \,  || \Gam ||_\infty^2) \) \)^{1/2} 			\leq 	C ,
\end{align}
where we have used assumption \ref{item:GammaLambdaBound} of section \ref{sec:assumptions} in the last line.  We now examine the case $J(y,z)=y^k$.  We have
\begin{align}
\E \[ | Y_s |^k \] 				&= \E \[ | Y_{s / \eps}^{(1)} |^k \] 							\leq 				\sup_{\eps \leq 1} \E \[ | Y_{s / \eps}^{(1)} |^k \]  		\leq 				C(k)		,
\end{align}
by \eqref{eq:Ybound}.  Using the same argument as above, one can easily show
\begin{align}
\Et \[ \left| Y_s \right|^k \] 
				&=						\E \[ | Y_s |^k M_s^{(\Lam)} \]
				\leq						\( \E \[ | Y_{s / \eps}^{(1)} |^{2k} \] \exp \( s \,  || \Lam ||_\infty^2) \) \)^{1/2} 			\leq 	C ,
\end{align}
which proves the lemma.
\end{proof}
\noindent
We are now in a position to prove Theorem \ref{thm:accuracy}.  We begin by defining a remainder term $R^{\eps,\del}$ by
\begin{align}
u^{\eps,\del} 
		&= u_{0,0} + \sqrt{\eps} \, u_{0,0} + \sqrt{\del} \, u_{0,1} + \eps \( u_{2,0} + \sqrt{\eps} \, u_{3,0} \) + \sqrt{\del} \( \sqrt{\eps} \, u_{1,1} + \eps \, u_{2,1} \) + R^{\eps,\del} .
\end{align}
The functions $u_{0,0}$, $u_{1,0}$ and $u_{0,1}$ are the unique solutions to \eqref{eq:EigenA}, \eqref{eq:EigenB} and \eqref{eq:EigenC} respectively.  The function $u_{2,0}$ is given by \eqref{eq:psi20}.  And $u_{2,0}$ is the solution to Poisson equation \eqref{eq:PoissonF}.  To characterize $u_{1,1}$ and $u_{2,1}$ we must continue the singular perturbation analysis of equation \eqref{eq:orderhalf} a bit further.  The $\O(\sqrt{\eps \, \del})$ equation that results from continuing the asymptotic analysis is
\begin{align}
\O( \sqrt{\eps \, \del} ) :& &
0
		&=		\L_0 u_{3,1} + \L_1 u_{2,1} + \(-\d_t + \L_2\) u_{1,1} + \M_3 u_{2,0} + \M_1 u_{1,0} \label{eq:PoissonD}
\end{align}
Equation \eqref{eq:PoissonD} is a Poisson equation of the form \eqref{eq:PoissonGeneric}.  In order for \eqref{eq:PoissonD} to admit a solution $u_{3,1}$ in $L^2(\R,\pi)$, centering condition \eqref{eq:center} must in satisfied.  In \eqref{eq:PoissonD} the centering condition corresponds to
\begin{align}
0
		&=		\< \L_1 u_{2,1} \> + \(-\d_t + \< \L_2 \> \) u_{1,1} + \< \M_3 u_{2,0} \> + \< \M_1 \> u_{1,0} . \label{eq:RandomName}
\end{align}
Now, by introducing $\xi(y,z)$ and $\zeta(y,z)$ as solutions to
\begin{align}
\L_0 \xi					&=				f - \< f \> ,  &
\L_0 \zeta				&=				\Gam - \< \Gam \> . \label{eq:zeta} 
\end{align}
and by subtracting \eqref{eq:EigenTemp2} from \eqref{eq:PoissonC}, we can express $u_{2,1}$ as
\begin{align}
u_{2,1}	
		&=	- \( \frac{1}{2} a^2\, \phi\, \d^2_{xx} - a\, \eta \, \d_x \) u_{0,1} 
				- g \, \Big( \rho_{xz}\, a \, \xi \, \d_x - \zeta \, \Big) \d_z u_{0,0} + D , \label{eq:Psi21}
\end{align}
where $D(x,z)$ is a constant which is independent of $y$.  Substituting \eqref{eq:Psi21} into \eqref{eq:RandomName} characterizes $u_{1,1}$ in terms of $u_{0,0}$, $u_{1,0}$, $u_{1,0}$ and $u_{0,1}$.  We choose $u_{1,1}$ as the solution to \eqref{eq:RandomName} with BC $u(0,x,z)=0$.
\par
Now, we compute
\begin{align}
0			&=			\( -\d_t + \L^{\eps,\del} \) u^{\eps,\del} \\
			&=			\( -\d_t + \L^{\eps,\del} \) R^{\eps,\del}
							+ \frac{1}{\eps} F_0 + \frac{1}{\sqrt{\eps}} F_1 + F_2
							+ \sqrt{\del} \( \frac{1}{\eps} F_3 + \frac{1}{\sqrt{\eps}} F_4	 + F_5 \) \\
			&\qquad	+ \eps \, R_1^\eps + \sqrt{\eps \, \del} \, R_2^\eps + \del R_3^\eps , \label{eq:RandF}
\end{align}
where
\begin{align}	
F_0		&=		\L_0 u_{0,0} , \\		
F_1		&=		\L_0 u_{1,0} + \L_1 u_{0,0} , \\	
F_2		&=		\L_0 u_{2,0} + \L_1 u_{1,0} + \( -\d_t + \L_2 \) u_{0,0} , \\		
F_3		&= 		\L_0 u_{0,1} , \\		
F_4		&= 		\L_0 u_{1,1} + \L_1 u_{0,1} + \M_3 u_{0,0},  \\		
F_5		&= 		\L_0 u_{2,1} + \L_1 u_{1,1} + \M_3 u_{1,0} + \M_1 u_{0,0} +  \( -\d_t + \L_2 \) u_{0,1} ,
\end{align}
and
\begin{align}
R_1^\eps 	&=						\( -\d_t + \L_2 \) u_{2,0} + \L_1 u_{3,0} + \sqrt{\eps} \( -\d_t + \L_2 \) u_{3,0}, \\
R_2^\eps 	&=						\( -\d_t + \L_2 \) u_{1,1} + \L_1 u_{2,1} + \M_1 u_{1,0} + \M_3 u_{2,0} \\
								&\qquad 		+ \, \sqrt{\eps} \(  \( -\d_t + \L_2 \) u_{2,1} + \M_1 u_{2,0} + \M_3 u_{3,0} \) + \eps \, \M_1 u_{3,0}, \\
R_3^\eps 	&=						\M_1	u_{0,1} + \M_2	u_{0,0} + \M_3	u_{1,1} + \sqrt{\eps} \( \M_1	u_{1,1} + \M_2	u_{1,0} + \M_3	u_{2,1} \)	\\
								&\qquad		+ \, \eps \( \M_1	u_{2,1} + \M_2	u_{2,0} \).
\end{align}
From the choices made in section \ref{sec:asymptotics}, it is straightforward to show 
$F_0	=	F_1	=	F_2	= F_3	= F_4	= F_5	=	0$.
Hence, from \eqref{eq:RandF} we have
\begin{align}
0		
		&=	\( -\d_t + \L^{\eps,\del} \) R^{\eps,\del} + \eps \, R_1^\eps + \sqrt{\eps \, \del} \, R_2^\eps 
				+ \del R_3^\eps , \label{eq:RPDE} \\
R(0,x,y,z)
		&=	\eps \, G_1^\eps(x,y,z) + \sqrt{\eps \, \del} \, G_2^\eps(x,y,z)	,  \label{eq:RBC}
\end{align}
where
\begin{align}
G_1^\eps(x,y,z)
		&:=		- u_{2,0}(0,x,y,z) - \sqrt{\eps} \, u_{3,0}(0,x,y,z) ,  \\
G_2^\eps(x,y,z)		
		&:=		- u_{1,1}(0,x,y,z) - \sqrt{\eps} \, u_{2,1}(0,x,y,z)	.
\end{align}
Using the Feynman-Kac formula, we can express $R^{\eps,\del}(t,x,y,z)$, which is the solution to PDE \eqref{eq:RPDE} with BC \eqref{eq:RBC}, as an expectation
\begin{align}
R^{\eps,\del}(t,x,y,z)	
			&=						\eps \, \Et_{x,y,z} \[ e^{-\int_0^t k(X_s) ds} G_1^\eps(X_t,Y_t,Z_t) + \int_0^t e^{-\int_0^s k(X_u) du} R_1^\eps(s, X_s,Y_s,Z_s) \, ds \] \\
			&\qquad 		+ \, \sqrt{\eps\,\del} \, \Et_{x,y,z} \[ e^{-\int_0^t k(X_s) ds} G_2^\eps(X_t,Y_t,Z_t) + \int_0^t e^{-\int_0^s k(X_u) du} R_2^\eps(s, X_s,Y_s,Z_s) \, ds \] \\
			&\qquad 		+ \, \del \, \Et_{x,y,z} \[  \int_0^t e^{-\int_0^s k(X_u) du} R_3^\eps(s, X_s,Y_s,Z_s) \, ds \] .
\end{align}
From the assumptions of section \ref{sec:assumptions} one can deduce that the functions $\( R_1^\eps, R_2^\eps, R_3^\eps, G_1^\eps, G_2^\eps \)$ are bounded in $x$ and at most polynomially growing in $(y,z)$ (see \citet*{fpss}).  Hence, by Lemma \ref{thm:polynomial} we have
\begin{align}
\left| R^{\eps,\del} \right|			&\leq			\eps \, C_1 + \sqrt{\eps\,\del} \, C_2 + \del \, C_3 \leq \( \eps + \del \) \, C_4 .
\end{align}
Finally
\begin{align}
&\left| u^{\eps,\del} - \( u_{0,0} + \sqrt{\eps} \, u_{1,0}+ \sqrt{\del} \, u_{0,1}\)\right| \\
			&\qquad \leq				\left| R^{\eps,\del} \right| +  \left| \eps \, u_{2,0} + \eps^{3/2} u_{3,0} + \sqrt{\eps \, \del} \, u_{1,1} + \eps \sqrt{\del} \, u_{2,1} \right| \\
			&\qquad \leq				\( \eps + \del \) C_4 + \eps \, \left| u_{2,0} + \sqrt{\eps} \, u_{3,0}  \right| + \sqrt{\eps \, \del} \, \left| u_{1,1} + \sqrt{\eps} \, u_{2,1}  \right| \\
			&\qquad \leq				\( \eps + \del \) C ,
\end{align}
which is the claimed accuracy result.

%
%

\clearpage
\bibliographystyle{chicago}
\bibliography{Bibtex-Master-Multiscale}

\begin{thebibliography}{}

\bibitem[\protect\citeauthoryear{Albanese, Campolieti, Carr, and
  Lipton}{Albanese et~al.}{2001}]{albanese2001black}
Albanese, C., G.~Campolieti, P.~Carr, and A.~Lipton (2001).
\newblock Black-scholes goes hypergeometric.
\newblock {\em Risk\/}~{\em 14\/}(12), 99--103.

\bibitem[\protect\citeauthoryear{Albanese and Kuznetsov}{Albanese and
  Kuznetsov}{2004}]{albanese2004unifying}
Albanese, C. and A.~Kuznetsov (2004).
\newblock Unifying the three volatility models.
\newblock {\em Risk Magazine\/}~{\em 17\/}(3), 94--98.

\bibitem[\protect\citeauthoryear{Albanese and Lawi}{Albanese and
  Lawi}{2005}]{AlbaneseLawi2005}
Albanese, C. and S.~Lawi (2005).
\newblock Laplace transforms for integrals of markov processes.
\newblock {\em Markov Processes Related Fields\/}~(11), 677--724.

\bibitem[\protect\citeauthoryear{Borodin and Salminen}{Borodin and
  Salminen}{2002}]{borodin}
Borodin, A. and P.~Salminen (2002).
\newblock {\em Handbook of Brownian motion: facts and formulae}.
\newblock Birkhauser.

\bibitem[\protect\citeauthoryear{Carr and Linetsky}{Carr and
  Linetsky}{2006}]{JDCEV}
Carr, P. and V.~Linetsky (2006).
\newblock A jump to default extended cev model: An application of bessel
  processes.
\newblock {\em Finance and Stochastics\/}~{\em 10\/}(3), 303--330.

\bibitem[\protect\citeauthoryear{Choi, Fouque, and Kim}{Choi
  et~al.}{2010}]{fouqueCEV}
Choi, S.-Y., J.-P. Fouque, and J.-H. Kim (2010).
\newblock Option pricing under hybrid stochastic and local volatility.
\newblock {\em Submitted\/}.

\bibitem[\protect\citeauthoryear{Cotton, Fouque, Papanicolaou, and
  Sircar}{Cotton et~al.}{2004}]{cfps}
Cotton, P., J.-P. Fouque, G.~Papanicolaou, and R.~Sircar (2004).
\newblock Stochastic volatility corrections for interest rate derivatives.
\newblock {\em Mathematical Finance\/}~{\em 14\/}(2).

\bibitem[\protect\citeauthoryear{Davydov and Linetsky}{Davydov and
  Linetsky}{2001}]{linetsky2001}
Davydov, D. and V.~Linetsky (2001).
\newblock Structuring, pricing and hedging double-barrier step options.
\newblock {\em Journal of Computational Finance\/}~{\em 5}, 55--88.

\bibitem[\protect\citeauthoryear{Davydov and Linetsky}{Davydov and
  Linetsky}{2003}]{linetsky2003}
Davydov, D. and V.~Linetsky (2003).
\newblock Pricing options on scalar diffusions: An eigenfunction expansion
  approach.
\newblock {\em Operations Research\/}~{\em 51\/}(2), 185--209.

\bibitem[\protect\citeauthoryear{DeSantiago, Fouque, and S{\o}lna}{DeSantiago
  et~al.}{2008}]{foque2008bond}
DeSantiago, R., J.~Fouque, and K.~S{\o}lna (2008).
\newblock Bond markets with stochastic volatility.
\newblock {\em Advances in Econometrics\/}~{\em 22}, 215--242.

\bibitem[\protect\citeauthoryear{Elliott, Jeanblanc, and Yor}{Elliott
  et~al.}{2000}]{elliot2000}
Elliott, R.~J., M.~Jeanblanc, and M.~Yor (2000).
\newblock On models of default risk.
\newblock {\em Mathematical Finance\/}~{\em 10\/}(2), 179--195.

\bibitem[\protect\citeauthoryear{Feller}{Feller}{1954}]{feller1954}
Feller, W. (1954).
\newblock Diffusion processes in one dimension.
\newblock {\em Transactions of the American Mathematical Society\/}~{\em
  77\/}(1), pp. 1--31.

\bibitem[\protect\citeauthoryear{Fouque, Wignall, and Zhou}{Fouque
  et~al.}{2008}]{fouque2008modeling}
Fouque, J., B.~Wignall, and X.~Zhou (2008).
\newblock Modeling correlated defaults: First passage model under stochastic
  volatility.
\newblock {\em Journal of Computational Finance\/}~{\em 11\/}(3), 43.

\bibitem[\protect\citeauthoryear{Fouque, Garnier, Papanicolaou, and
  S{\o}lna}{Fouque et~al.}{2007}]{fouque2007wave}
Fouque, J.-P., J.~Garnier, G.~Papanicolaou, and K.~S{\o}lna (2007).
\newblock {\em {Wave propagation and time reversal in randomly layered media}}.
\newblock Springer Verlag.

\bibitem[\protect\citeauthoryear{Fouque, Jaimungal, and Lorig}{Fouque
  et~al.}{2011}]{lorig2}
Fouque, J.-P., S.~Jaimungal, and M.~Lorig (2011).
\newblock Spectral decomposition of option prices in fast mean-reverting
  stochastic volatility models.
\newblock {\em SIAM Journal on Financial Mathematics\/}.

\bibitem[\protect\citeauthoryear{Fouque, Papanicolaou, Sircar, and
  S{\o}lna}{Fouque et~al.}{2003}]{fouque2003proof}
Fouque, J.-P., G.~Papanicolaou, R.~Sircar, and K.~S{\o}lna (2003).
\newblock Singular perturbations in option pricing.
\newblock {\em SIAM J. Applied Mathematics\/}~{\em 63\/}(5), 1648--1665.

\bibitem[\protect\citeauthoryear{Fouque, Papanicolaou, Sircar, and
  S{\o}lna}{Fouque et~al.}{2004}]{fouque2004multiscale}
Fouque, J.-P., G.~Papanicolaou, R.~Sircar, and K.~S{\o}lna (2004).
\newblock {Multiscale stochastic volatility asymptotics}.
\newblock {\em Multiscale Modeling and Simulation\/}~{\em 2}, 22--42.

\bibitem[\protect\citeauthoryear{Fouque, Papanicolaou, Sircar, and
  Solna}{Fouque et~al.}{2011}]{fpss}
Fouque, J.-P., G.~Papanicolaou, R.~Sircar, and K.~Solna (2011).
\newblock {\em Multiscale Stochastic Volatility for Equity, Interest-Rate and
  Credit Derivatives}.
\newblock Cambridge University Press.

\bibitem[\protect\citeauthoryear{Fouque, Sircar, and S{\o}lna}{Fouque
  et~al.}{2006}]{fouque2006stochastic}
Fouque, J.-P., R.~Sircar, and K.~S{\o}lna (2006).
\newblock {Stochastic volatility effects on defaultable bonds}.
\newblock {\em Applied Mathematical Finance\/}~{\em 13\/}(3), 215--244.

\bibitem[\protect\citeauthoryear{Goldstein and Keirstead}{Goldstein and
  Keirstead}{1997}]{Goldstein1997}
Goldstein, R.~S. and W.~P. Keirstead (1997).
\newblock On the term structure of interest rates in the presence of reflecting
  and absorbing boundaries.
\newblock {\em SSRN eLibrary\/}.

\bibitem[\protect\citeauthoryear{Gorovoi and Linetsky}{Gorovoi and
  Linetsky}{2004}]{linetsky2004blackinterest}
Gorovoi, V. and V.~Linetsky (2004).
\newblock Black's model of interest rates as options, eigenfunction expansions
  and japanese interest rates.
\newblock {\em Mathematical finance\/}~{\em 14\/}(1), 49--78.

\bibitem[\protect\citeauthoryear{Gorovoy and Linetsky}{Gorovoy and
  Linetsky}{2007}]{linetsky2007intensity}
Gorovoy, V. and V.~Linetsky (2007).
\newblock Intensityqbased valuation of residential mortgages: an analytically
  tractable model.
\newblock {\em Mathematical Finance\/}~{\em 17\/}(4), 541--573.

\bibitem[\protect\citeauthoryear{Haberman}{Haberman}{2004}]{haberman}
Haberman, R. (2004).
\newblock {\em Applied Partial Differential Equations with Fourier Series and
  Boundary Value Problems\/} (4 ed.).
\newblock Prentice Hall.

\bibitem[\protect\citeauthoryear{Hagan, Kumar, Lesniewski, and Woodward}{Hagan
  et~al.}{2002}]{sabr}
Hagan, P., D.~Kumar, A.~Lesniewski, and D.~Woodward (2002).
\newblock Managing smile risk.
\newblock {\em Wilmott Magazine\/}~{\em 1000}, 84--108.

\bibitem[\protect\citeauthoryear{Heston}{Heston}{1993}]{heston}
Heston, S. (1993).
\newblock {A closed-form solution for options with stochastic volatility with
  applications to bond and currency options}.
\newblock {\em Rev. Financ. Stud.\/}~{\em 6\/}(2), 327--343.

\bibitem[\protect\citeauthoryear{Hull and White}{Hull and
  White}{1987}]{hullwhite1987}
Hull, J. and A.~White (1987).
\newblock The pricing of options on assets with stochastic volatilities.
\newblock {\em The Journal of Finance\/}~{\em 42\/}(2), 281--300.

\bibitem[\protect\citeauthoryear{Lewis}{Lewis}{1998}]{lewis1998}
Lewis, A. (1998).
\newblock Applications of eigenfunction expansions in continuous-time finance.
\newblock {\em Mathematical Finance\/}~{\em 8\/}(4), 349--383.

\bibitem[\protect\citeauthoryear{Linetsky}{Linetsky}{2002}]{linetsky2002}
Linetsky, V. (2002, 4).
\newblock Exotic spectra.
\newblock {\em Risk Magazine\/}, 85--89.

\bibitem[\protect\citeauthoryear{Linetsky}{Linetsky}{2004a}]{linetsky2004lookback}
Linetsky, V. (2004a).
\newblock Lookback options and diffusion hitting times: A spectral expansion
  approach.
\newblock {\em Finance and Stochastics\/}~{\em 8\/}(3), 373--398.

\bibitem[\protect\citeauthoryear{Linetsky}{Linetsky}{2004b}]{linetsky2004}
Linetsky, V. (2004b).
\newblock The spectral decomposition of the option value.
\newblock {\em International Journal of Theoretical and Applied Finance\/}~{\em
  7\/}(3), 337--384.

\bibitem[\protect\citeauthoryear{Linetsky}{Linetsky}{2004c}]{linetsky2004asian}
Linetsky, V. (2004c).
\newblock Spectral expansions for asian (average price) options.
\newblock {\em Operations Research\/}, 856--867.

\bibitem[\protect\citeauthoryear{Linetsky}{Linetsky}{2006}]{linetsky2006bankruptcy}
Linetsky, V. (2006).
\newblock Pricing equity derivatives subject to bankruptcy.
\newblock {\em Mathematical Finance\/}~{\em 16\/}(2), 255--282.

\bibitem[\protect\citeauthoryear{Linetsky}{Linetsky}{2007}]{linetskybook}
Linetsky, V. (2007).
\newblock Chapter 6 spectral methods in derivatives pricing.
\newblock In J.~R. Birge and V.~Linetsky (Eds.), {\em Financial Engineering},
  Volume~15 of {\em Handbooks in Operations Research and Management Science},
  pp.\  223--299. Elsevier.

\bibitem[\protect\citeauthoryear{Lipton and McGhee}{Lipton and
  McGhee}{2002}]{lipton2002universal}
Lipton, A. and W.~McGhee (2002).
\newblock Universal barriers.
\newblock {\em Risk, May\/}, 81--85.

\bibitem[\protect\citeauthoryear{McKean}{McKean}{1956}]{McKean1956}
McKean, Henry~P., J. (1956).
\newblock Elementary solutions for certain parabolic partial differential
  equations.
\newblock {\em Transactions of the American Mathematical Society\/}~{\em
  82\/}(2), pp. 519--548.

\bibitem[\protect\citeauthoryear{Mendoza-Arriaga, Carr, and
  Linetsky}{Mendoza-Arriaga et~al.}{2010}]{carr}
Mendoza-Arriaga, R., P.~Carr, and V.~Linetsky (2010).
\newblock Time-changed markov processes in unified credit-equity modeling.
\newblock {\em Mathematical Finance\/}~{\em 20}, 527--569.

\bibitem[\protect\citeauthoryear{Reed and Simon}{Reed and
  Simon}{1980}]{reedsimon}
Reed, M. and B.~Simon (1980).
\newblock {\em {Methods of modern mathematical physics. Volume I: Functional
  Analysis}}.
\newblock Academic press.

\bibitem[\protect\citeauthoryear{Shawagfeh}{Shawagfeh}{2011}]{laguerre}
Shawagfeh, N. (2011).
\newblock A note on some integrals involving two associated laguerre
  polynomials.
\newblock {\em Revista T{\'e}cnica de la Facultad de Ingenier{\'\i}a.
  Universidad del Zulia\/}~{\em 19\/}(2).

\bibitem[\protect\citeauthoryear{Stein and Stein}{Stein and
  Stein}{1991}]{stein1991stock}
Stein, E. and J.~Stein (1991).
\newblock {Stock price distributions with stochastic volatility: an analytic
  approach}.
\newblock {\em Review of financial Studies\/}~{\em 4\/}(4), 727.

\bibitem[\protect\citeauthoryear{Vasicek}{Vasicek}{1977}]{vasicek}
Vasicek, O. (1977).
\newblock An equilibrium characterization of the term structure.
\newblock {\em Journal of Financial Economics\/}~{\em 5\/}(2), 177 -- 188.

\end{thebibliography}

%
%

\clearpage


\begin{figure}
\centering
\begin{tabular}{ | c | c |}
\hline
\includegraphics[width=.5\textwidth,height=.25\textheight]{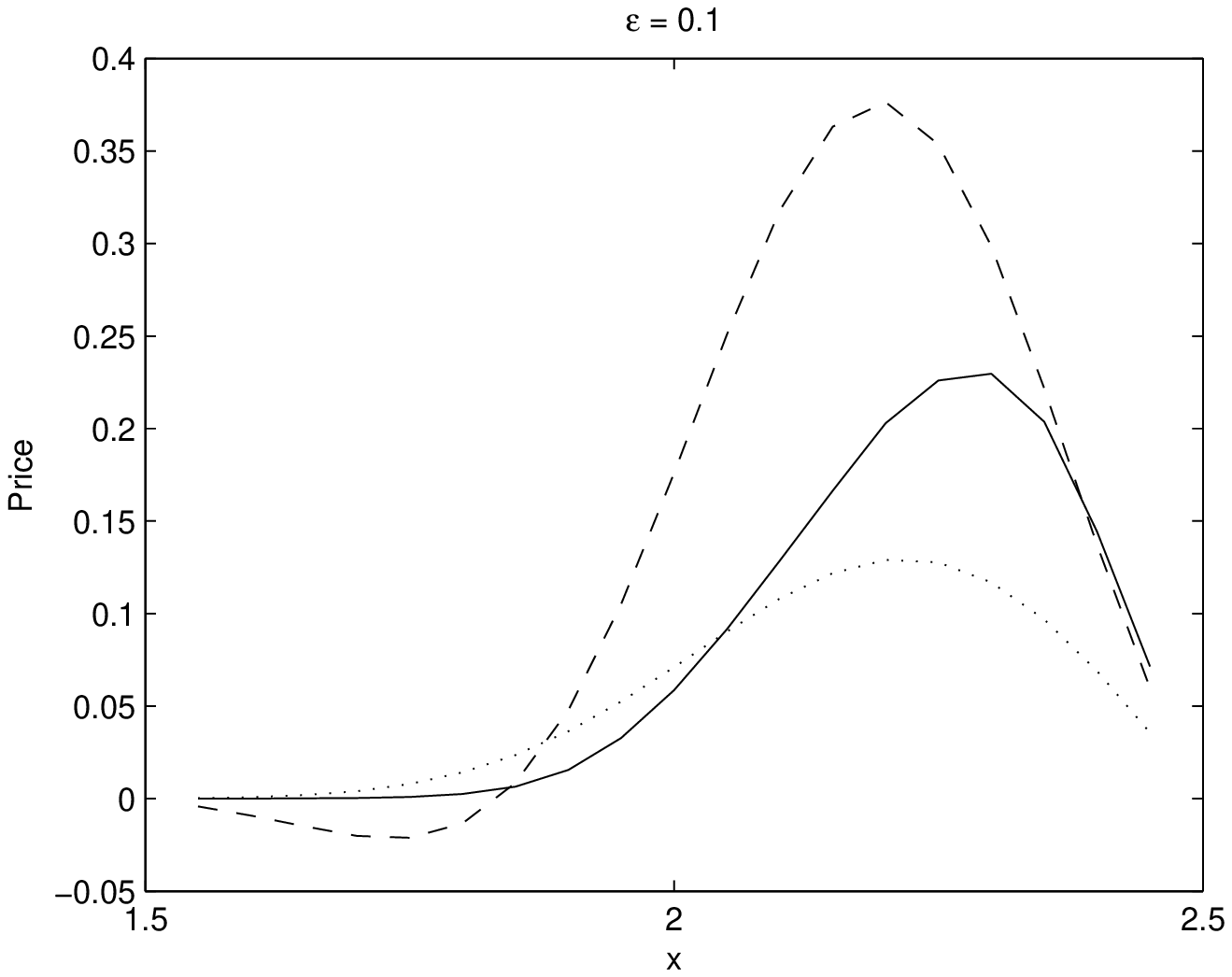} &
\includegraphics[width=.5\textwidth,height=.25\textheight]{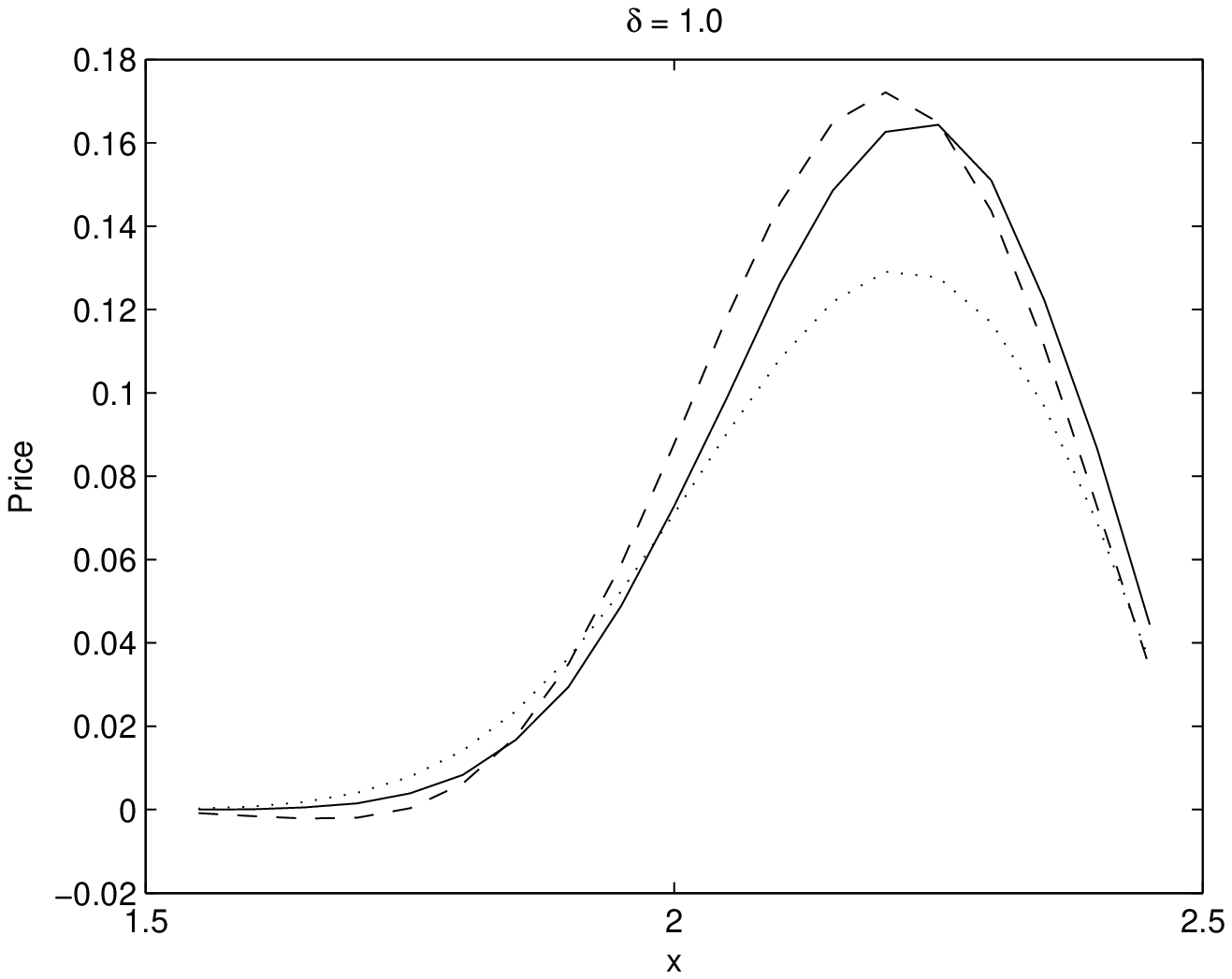} \\ \hline
\includegraphics[width=.5\textwidth,height=.25\textheight]{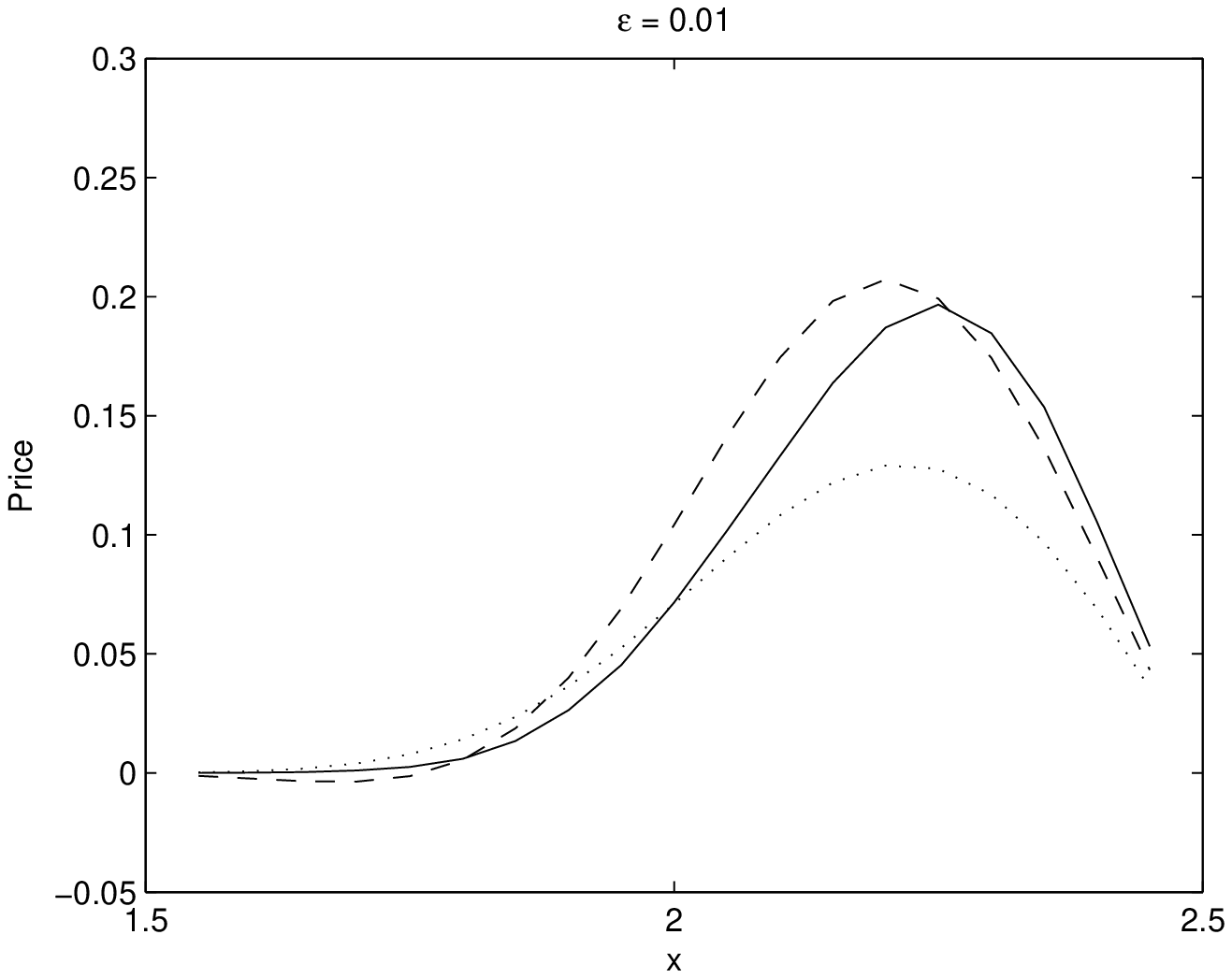} &
\includegraphics[width=.5\textwidth,height=.25\textheight]{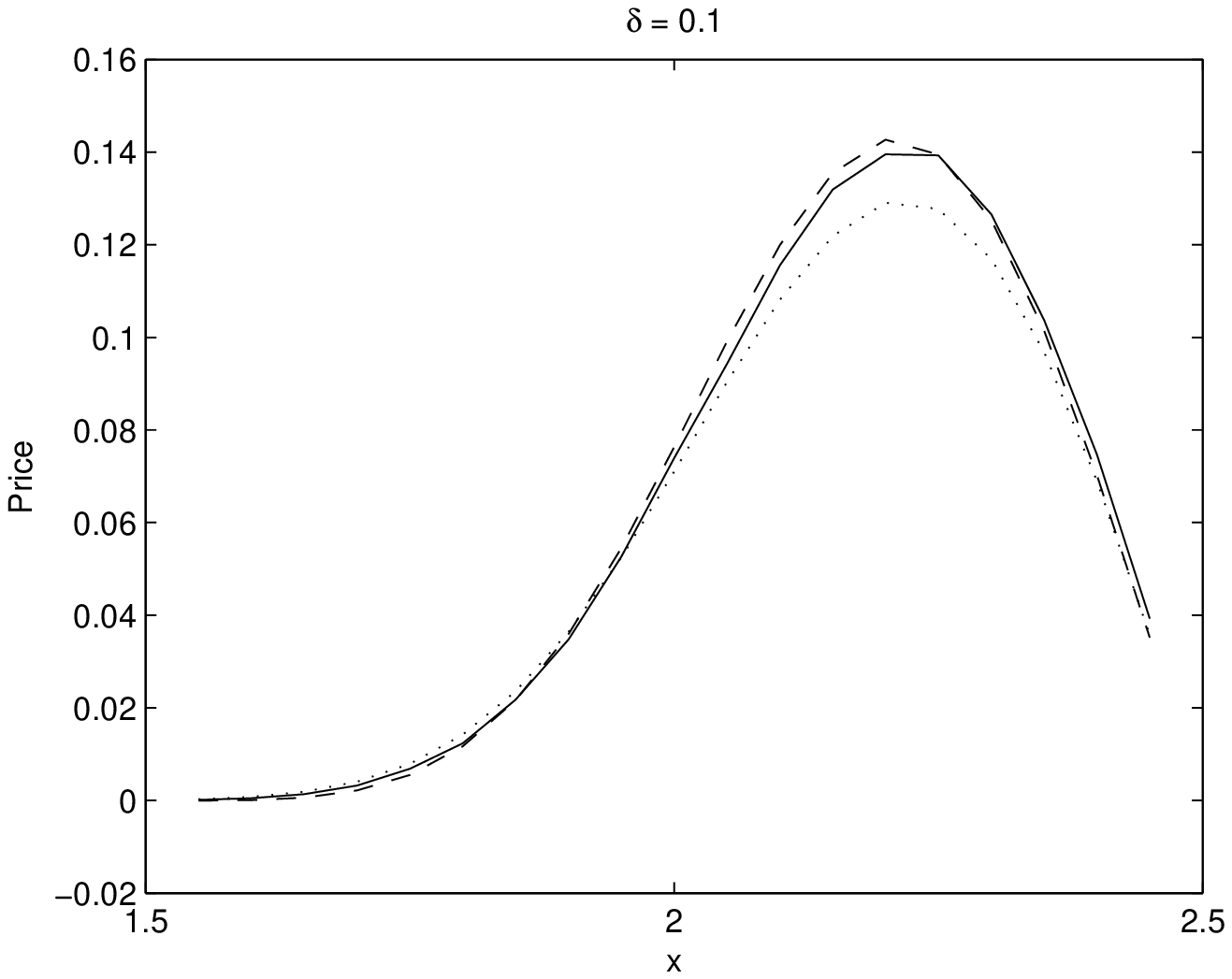} \\ \hline
\includegraphics[width=.5\textwidth,height=.25\textheight]{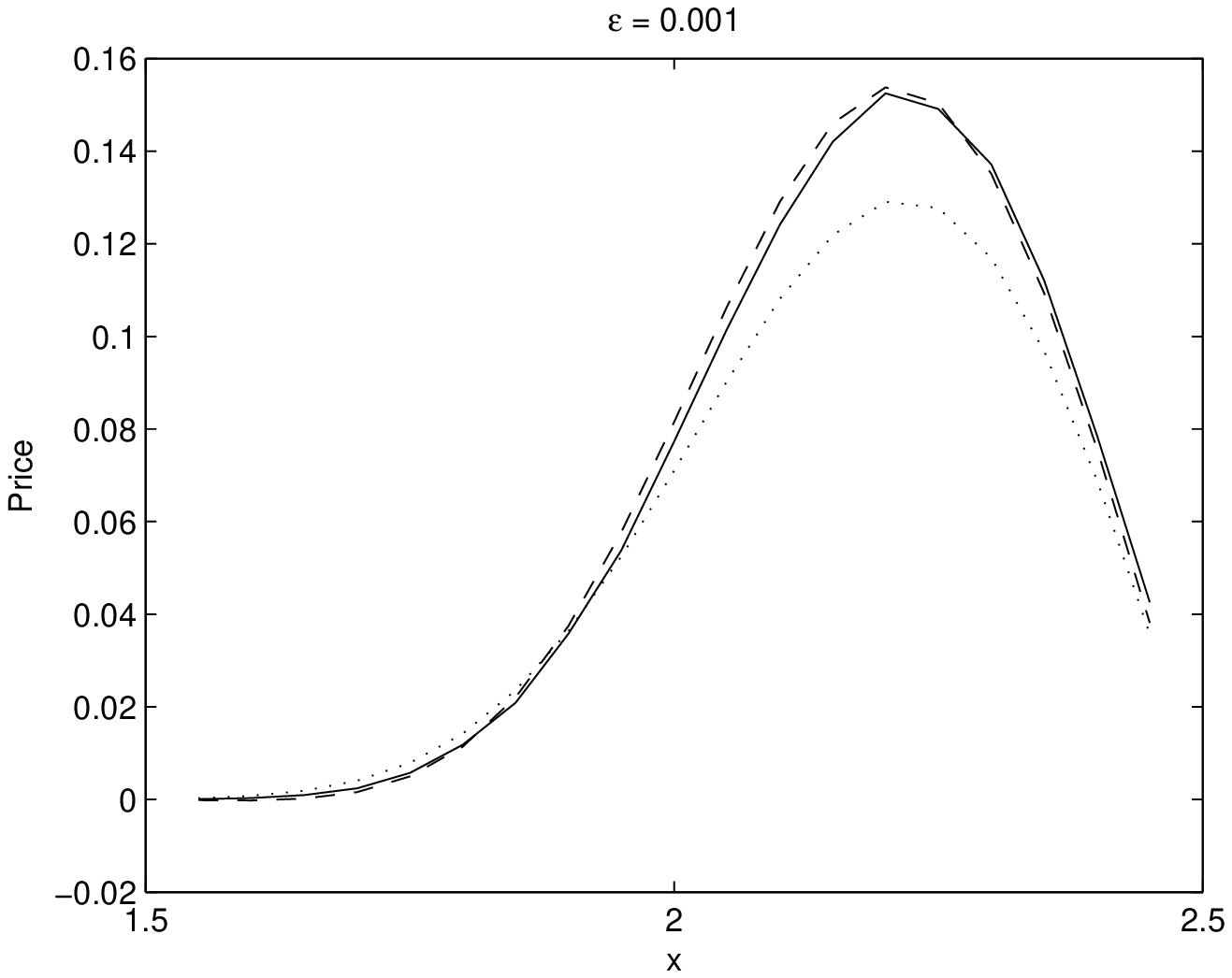} &
\includegraphics[width=.5\textwidth,height=.25\textheight]{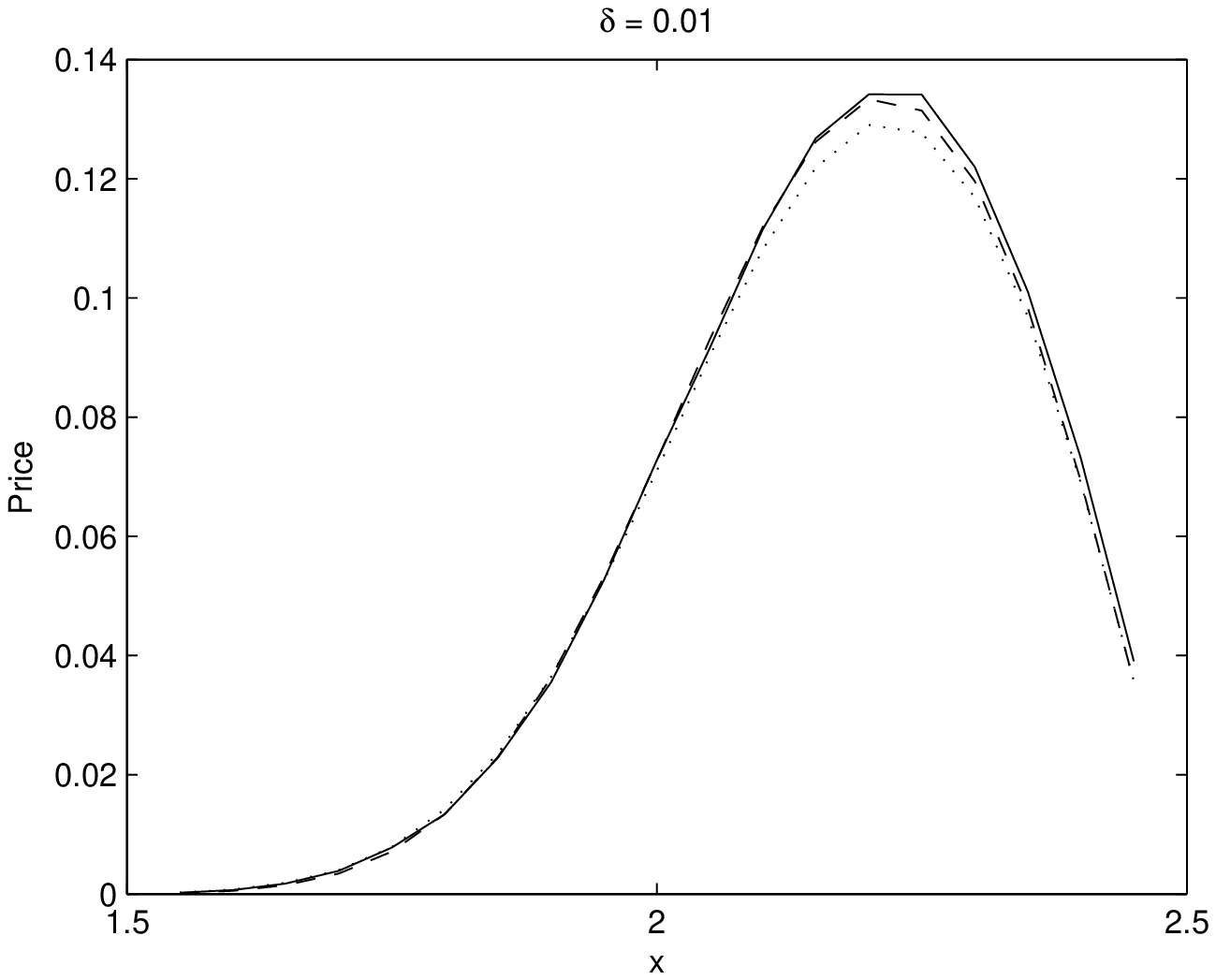} \\ \hline
\end{tabular}
\caption{The price of a double-barrier call option is plotted as a function of the underlying $x$.  On the left we consider the Black-Scholes model with only a fast-varying factor of volatility $Y$ whose dynamics are given by \ref{eq:Yexample}.  On the right, we consider the Black-Scholes model with only a slow-varying factor of volatility $Z$ whose dynamics a given by \ref{eq:Zexample}.  In each plot, the solid black line corresponds to the full price of the option, the dashed line corresponds to our approximation, and the dotted line corresponds to the Black-Scholes price.  For the plots on the left we use parameters $t=1/12$, $y=0$, $r=0.05$, $\sigma=0.34$, $\rho_{xy}=-0.5$, $\beta=1$, $L=1.5$, $K=2.0$, $R=2.5$.  For the plots on the right we use parameters $t=1/12$, $z=2$, $r=0.05$, $\sigma=0.34$, $\rho_{xz}=-0.5$, $g=2$, $L=1.5$, $K=2.0$, $R=2.5$.}
\label{fig:DBslow}
\end{figure}


\begin{figure}
\centering
\begin{tabular}{ | c | c | }
\hline
\includegraphics[width=.5\textwidth,height=.25\textheight]{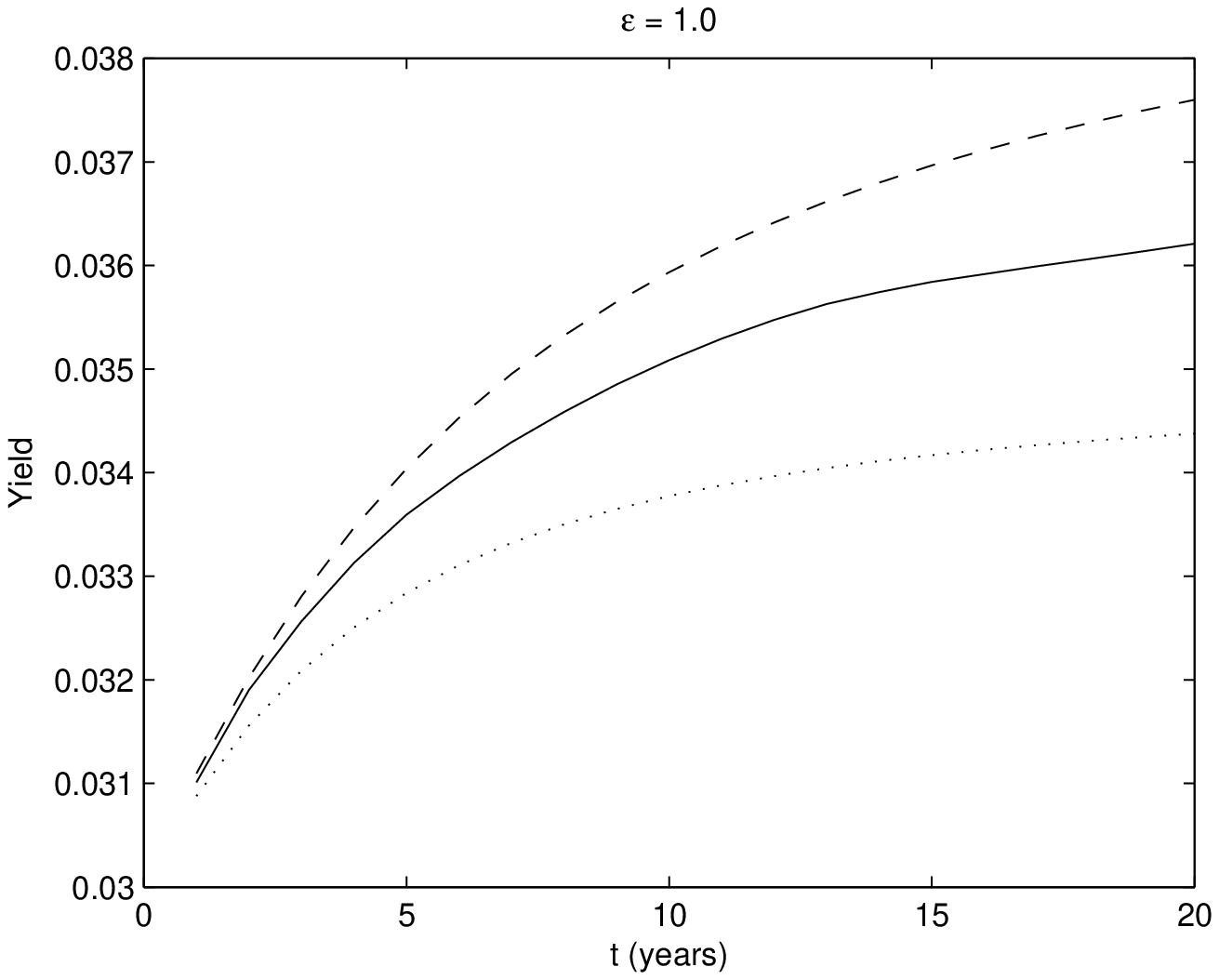} &
\includegraphics[width=.5\textwidth,height=.25\textheight]{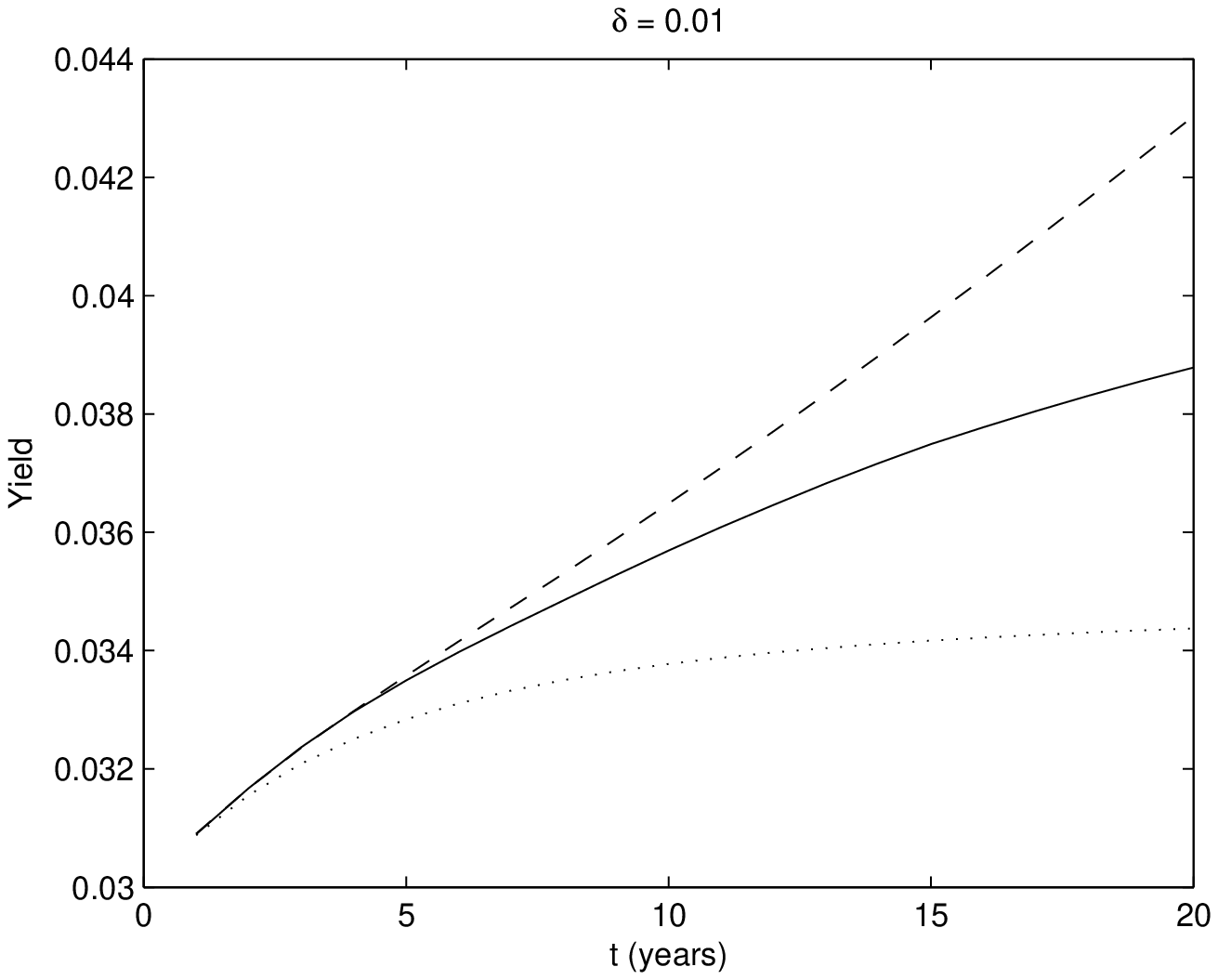} \\ \hline
\includegraphics[width=.5\textwidth,height=.25\textheight]{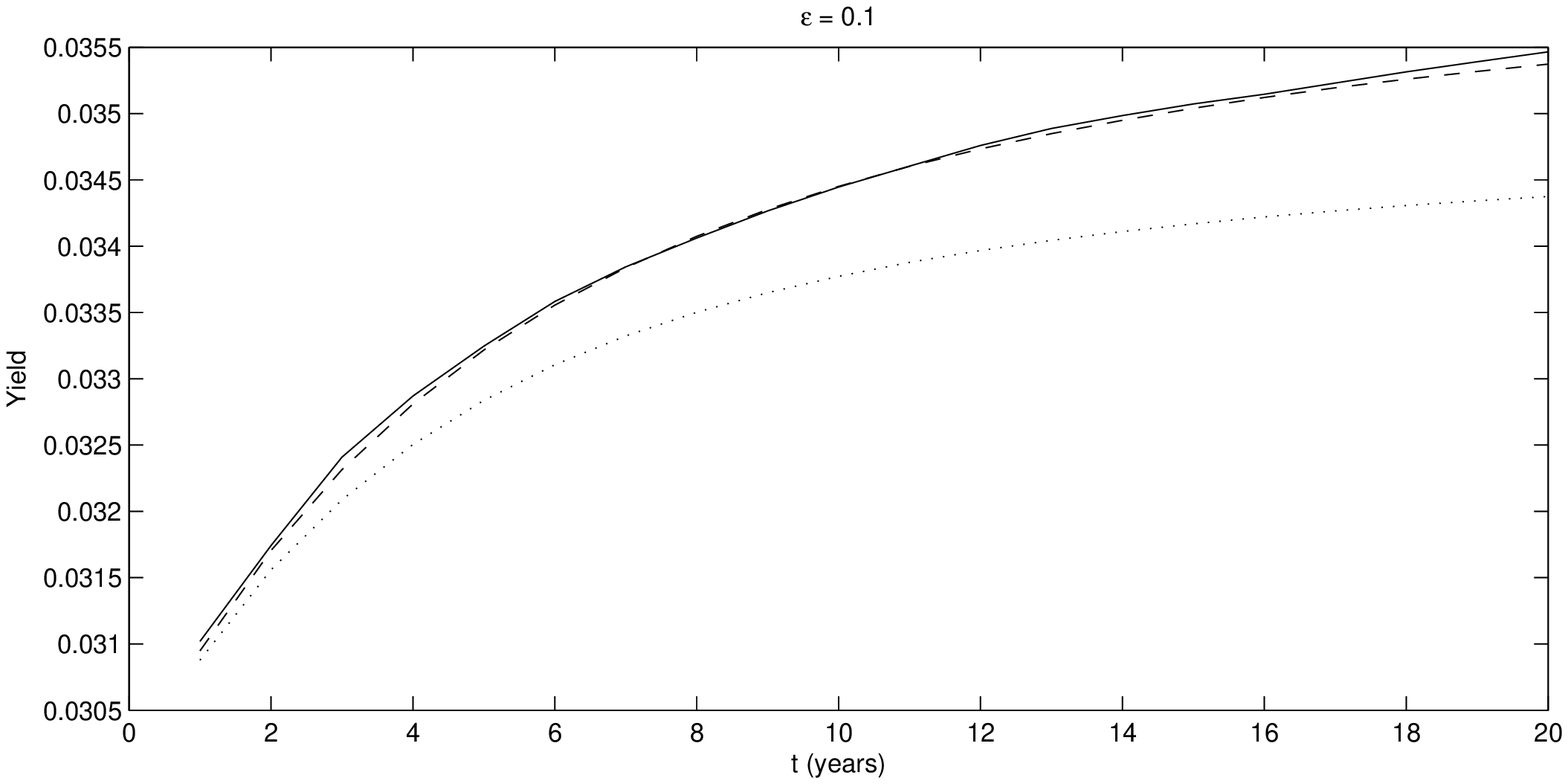} &
\includegraphics[width=.5\textwidth,height=.25\textheight]{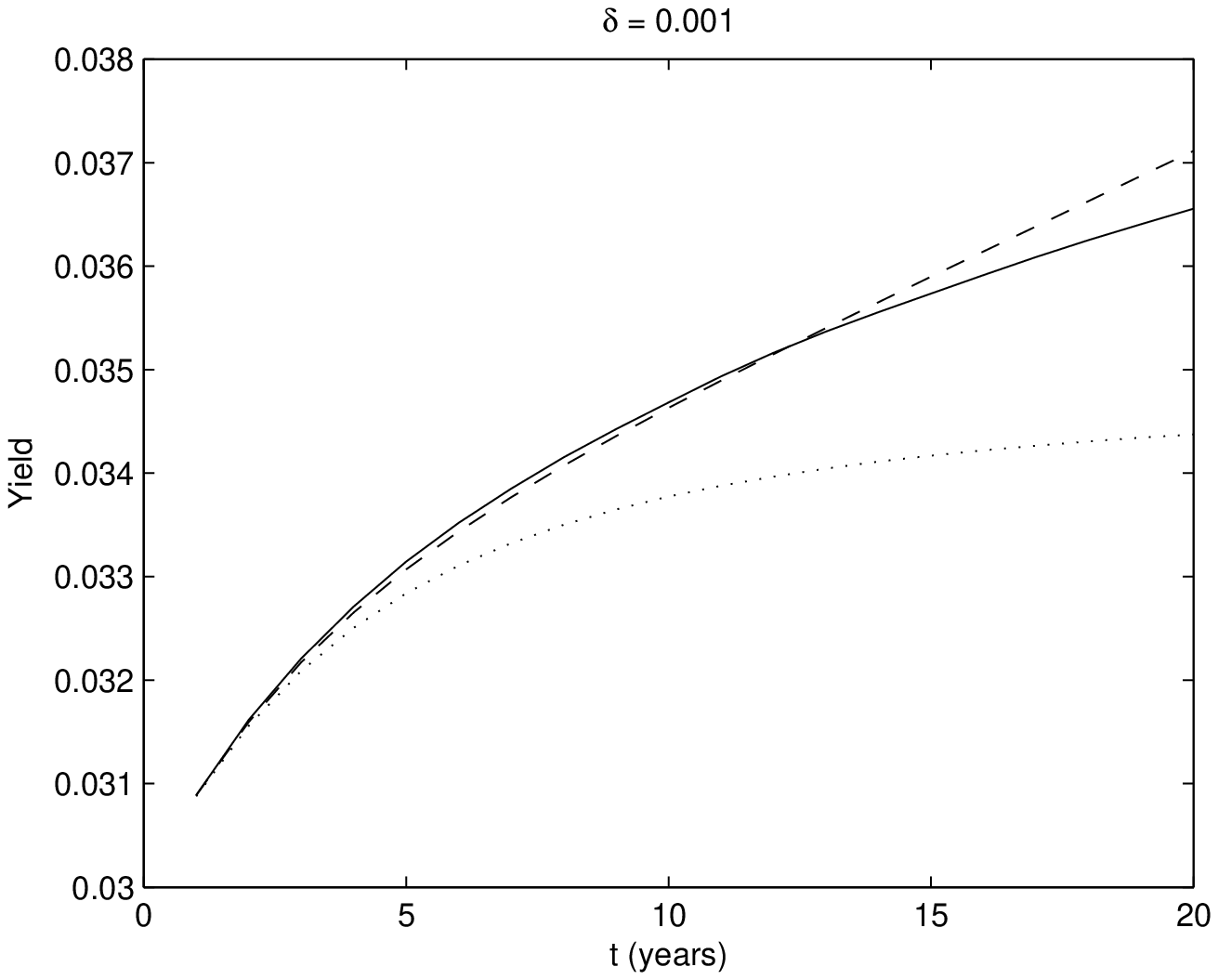} \\ \hline
\includegraphics[width=.5\textwidth,height=.25\textheight]{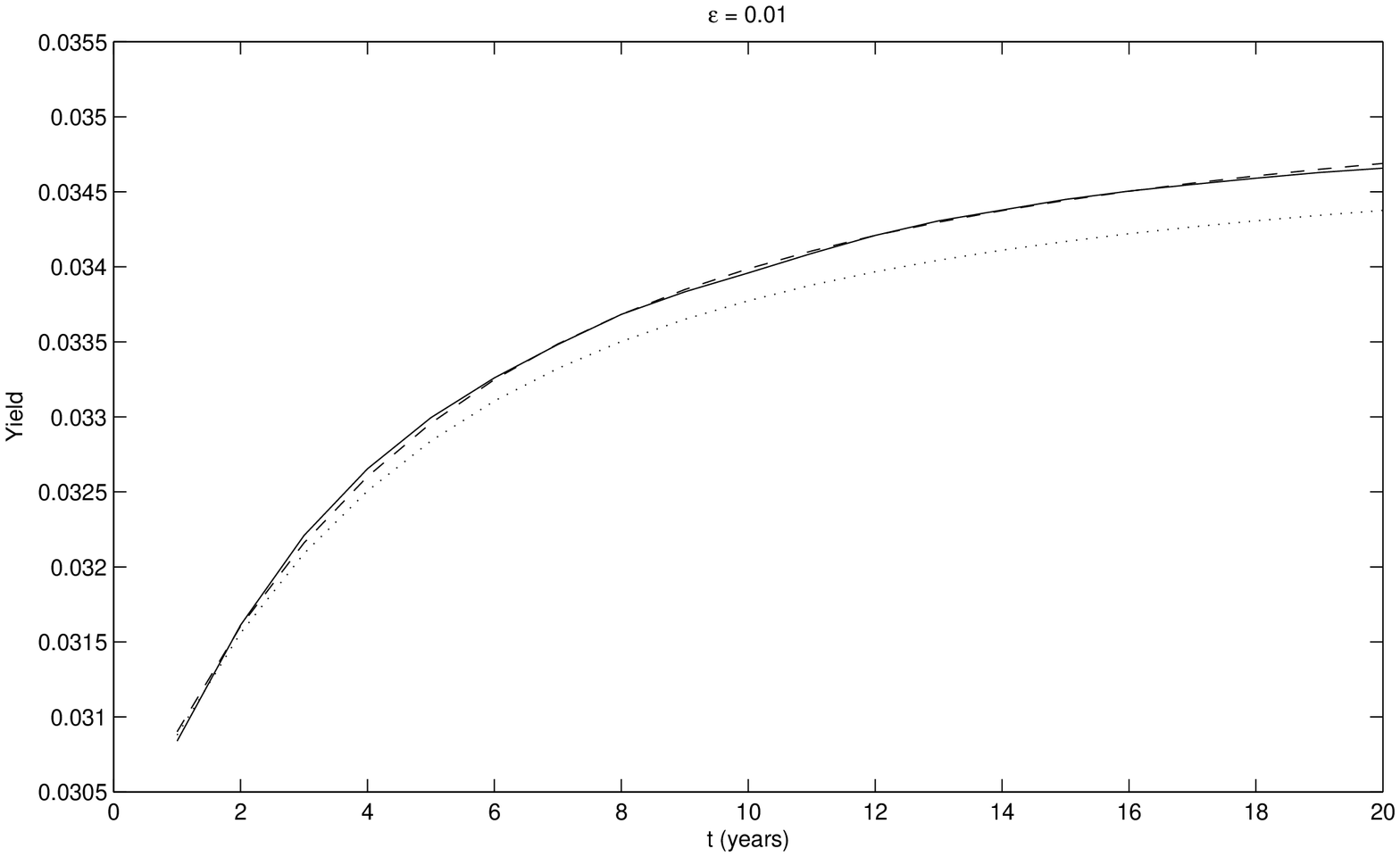} &
\includegraphics[width=.5\textwidth,height=.25\textheight]{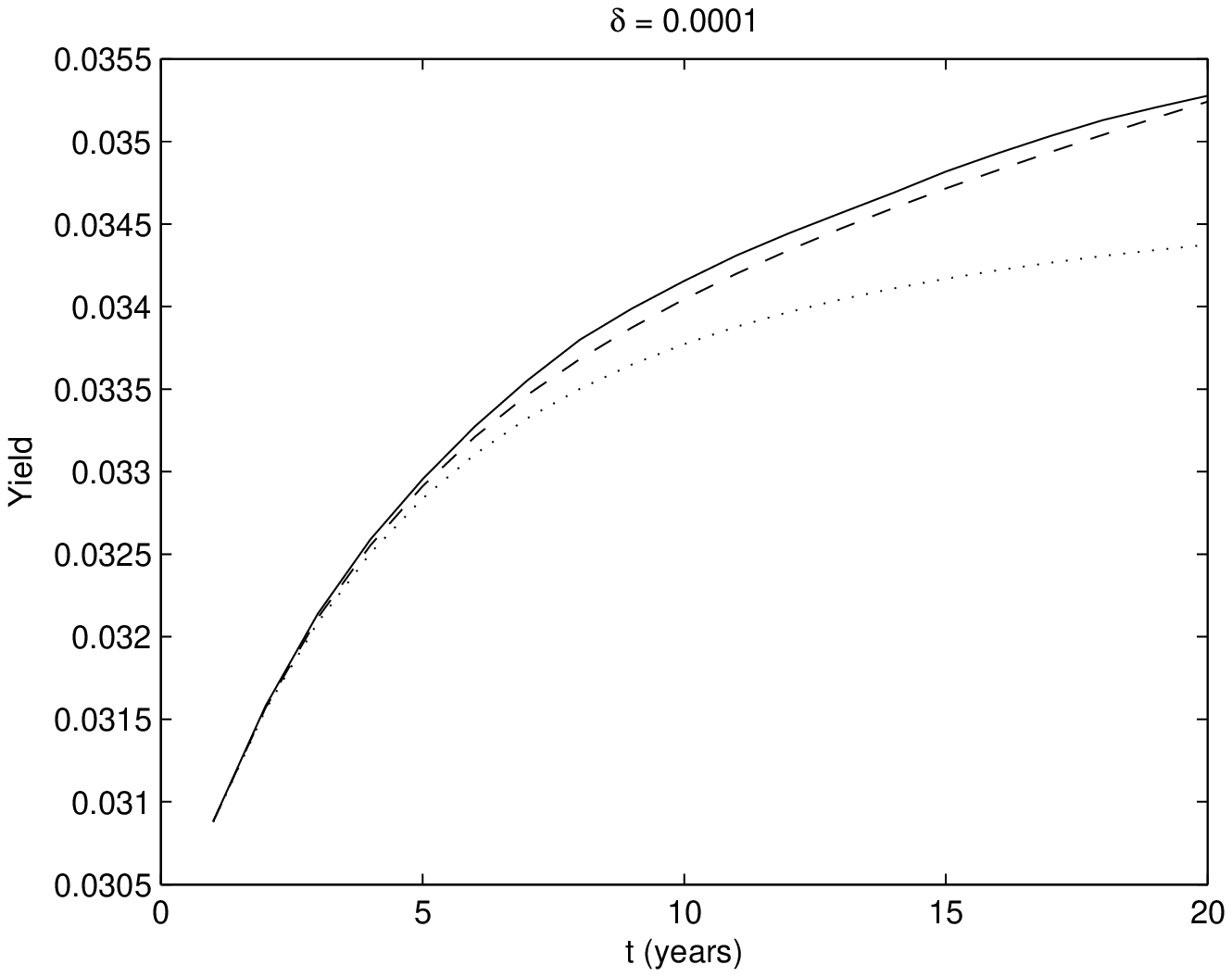} \\ \hline
\end{tabular}
\caption{The yield of a zero coupon bond is plotted as a function of the time to maturity $t$.  On the left we consider the Vasicek model with only a fast-varying factor of volatility $Y$ whose dynamics are given by \ref{eq:Yexample}.  On the right, we consider the Vasicek model with only a slow-varying factor of volatility $Z$ whose dynamics a given by \ref{eq:Zexample}.  In each plot, the solid black line corresponds to the full yield of the bond, the dashed line corresponds to our approximation, and the dotted line corresponds to the Vasicek yield.  For the plots on the left we use parameters $x=0.03$, $y=0$, $\theta=0.05$, $\sigma=0.02$, $\rho_{xy}=-0.5$, $\beta=1$, $\Om = 0.1\,e^{\beta^2/4}$.  For the plots on the right we use parameters $x=0.03$, $z=1.0$, $\theta=0.05$, $\sigma=0.02$, $\rho_{xz}=-0.5$, $g=1$, $\Om = 0.1$.}
\label{fig:Vasicekslow}
\end{figure}


\begin{figure}
\centering
\begin{tabular}{ | c | c | }
\hline
\includegraphics[width=.5\textwidth,height=.25\textheight]{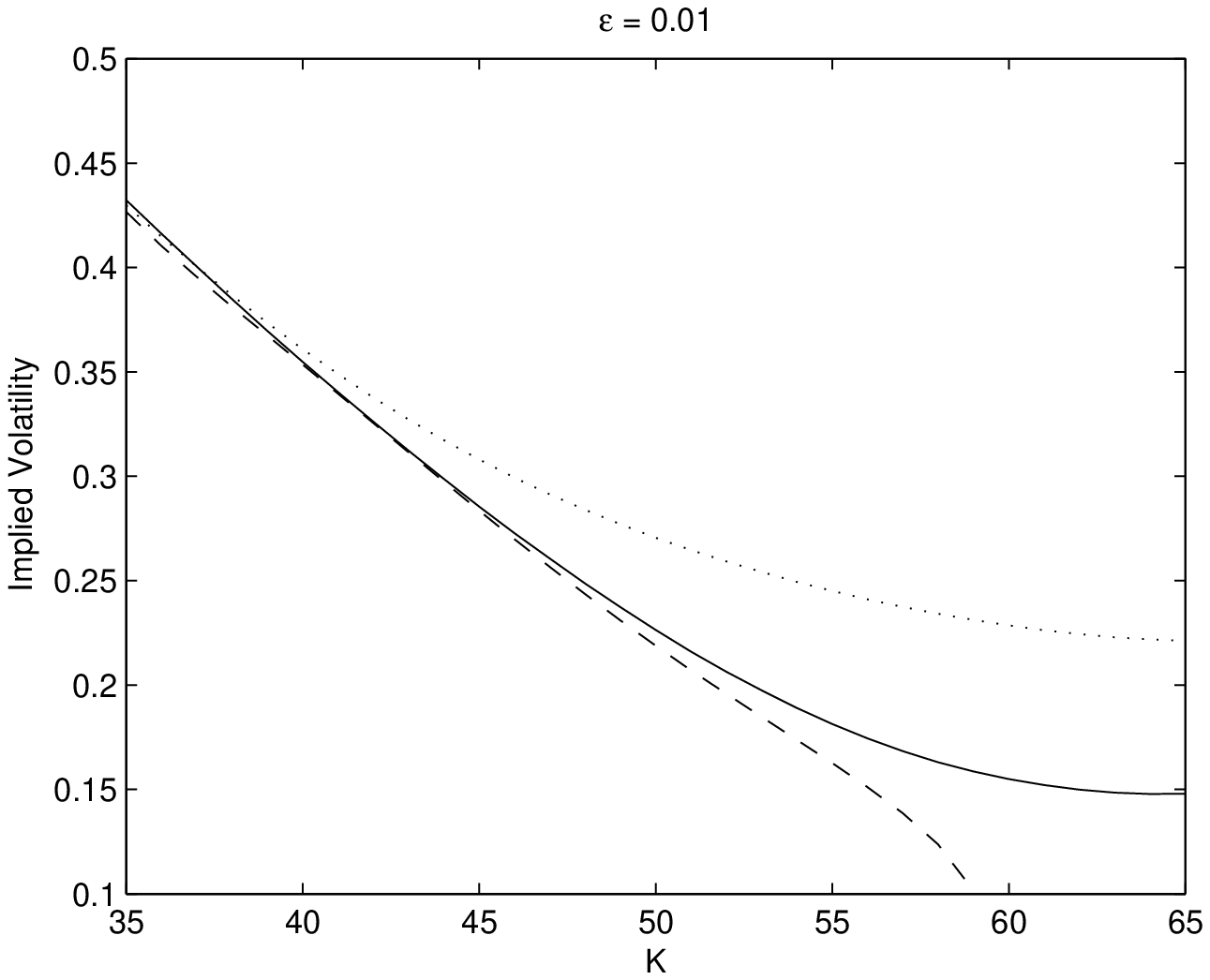} &
\includegraphics[width=.5\textwidth,height=.25\textheight]{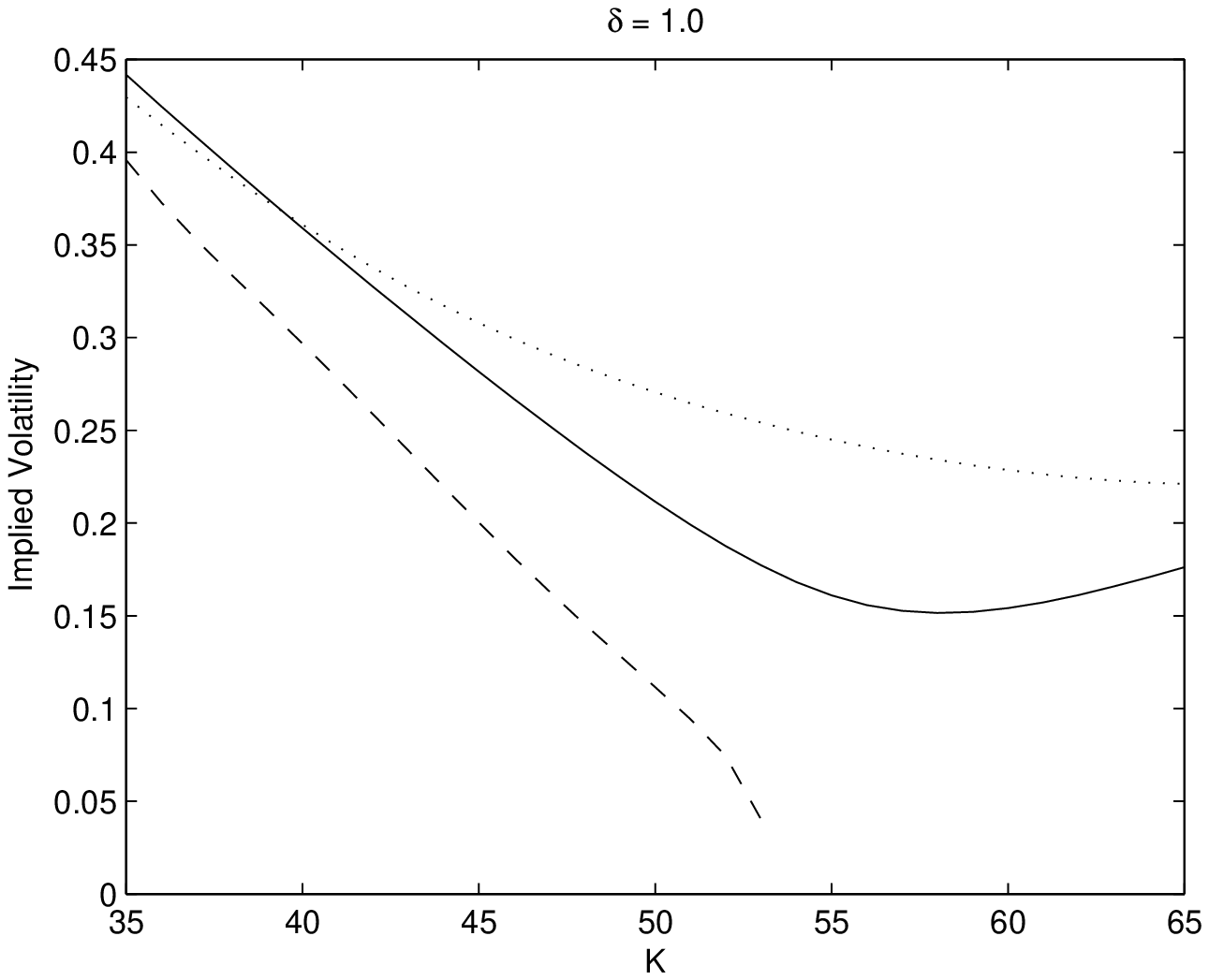} \\ \hline
\includegraphics[width=.5\textwidth,height=.25\textheight]{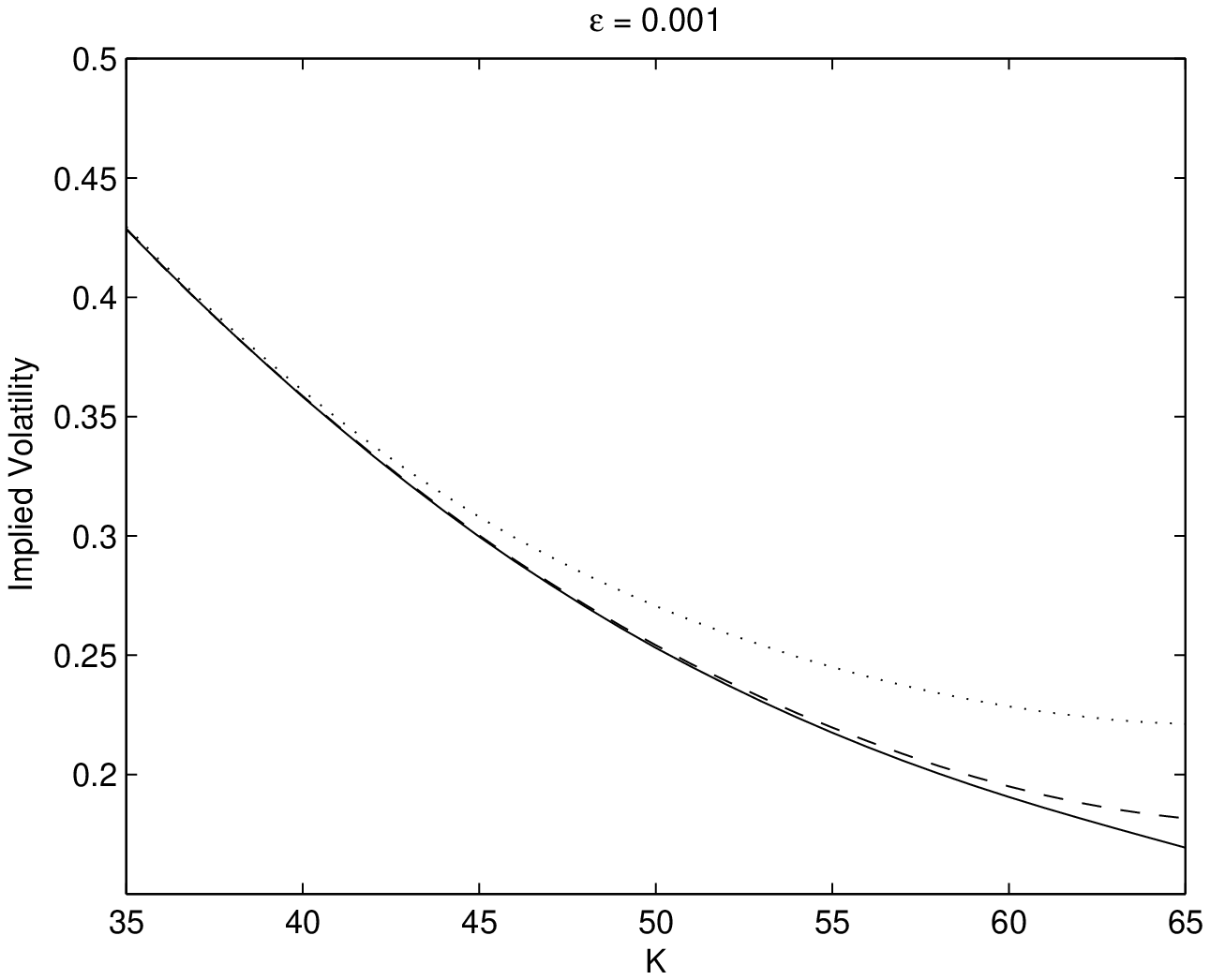} &
\includegraphics[width=.5\textwidth,height=.25\textheight]{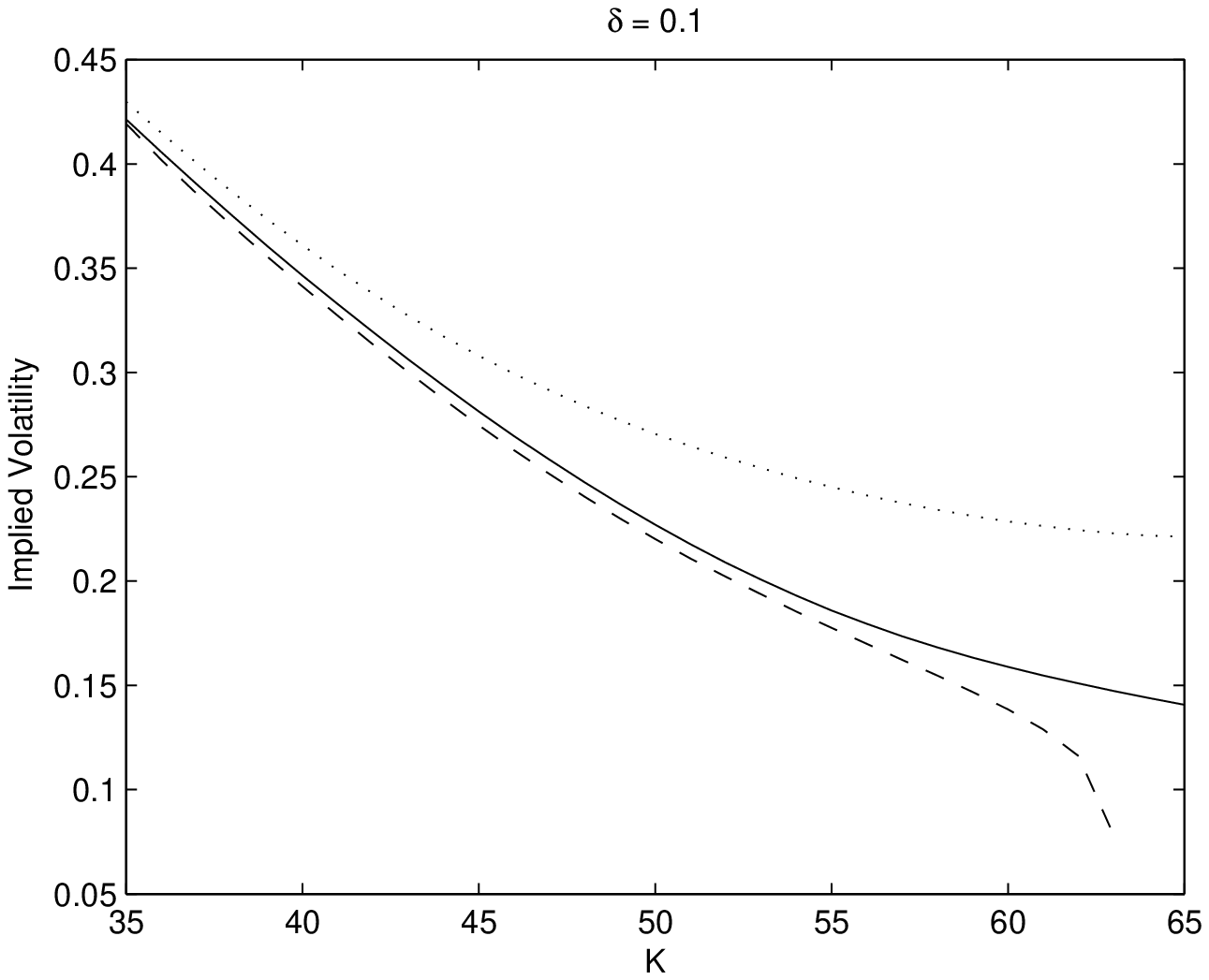} \\ \hline
\includegraphics[width=.5\textwidth,height=.25\textheight]{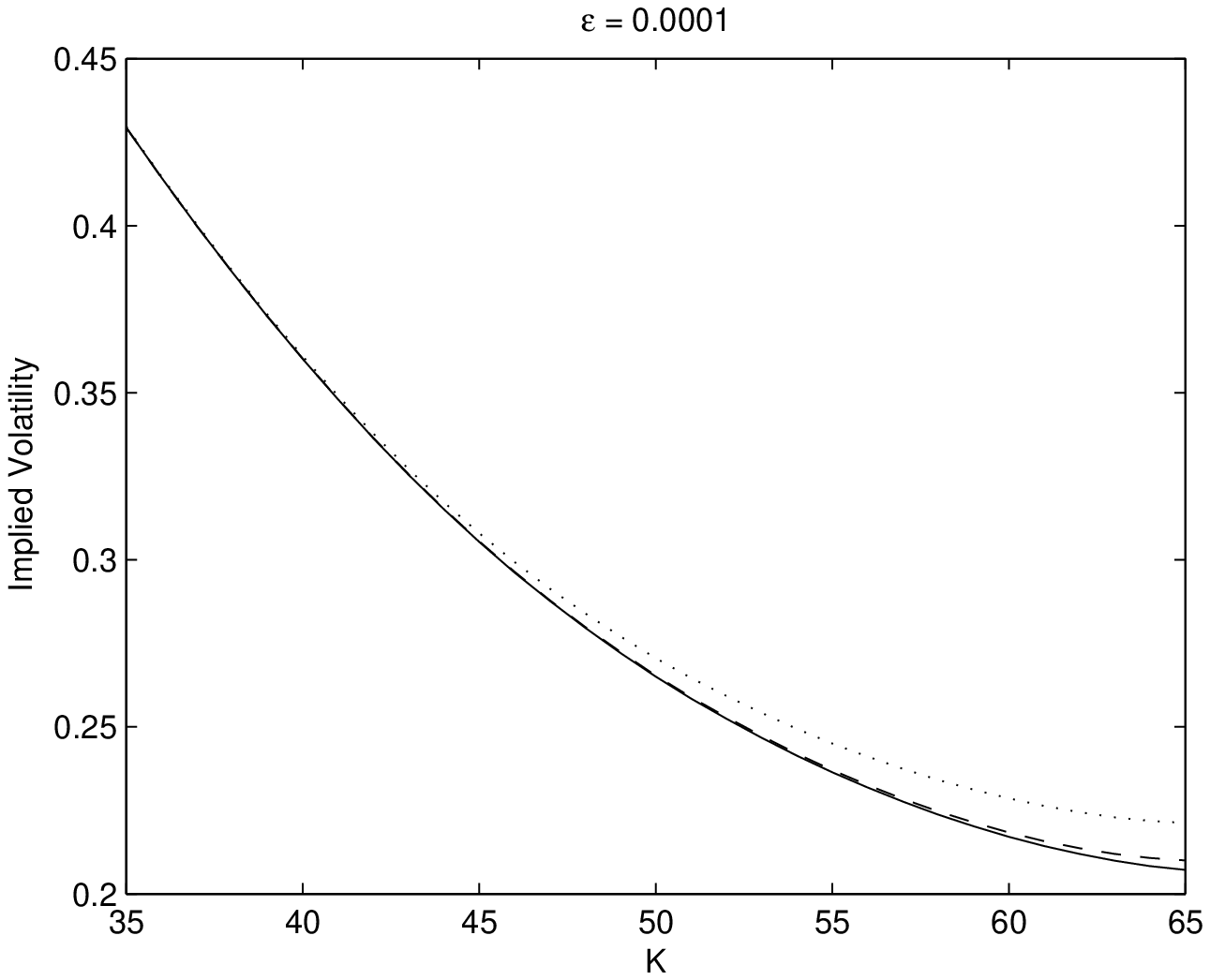} &
\includegraphics[width=.5\textwidth,height=.25\textheight]{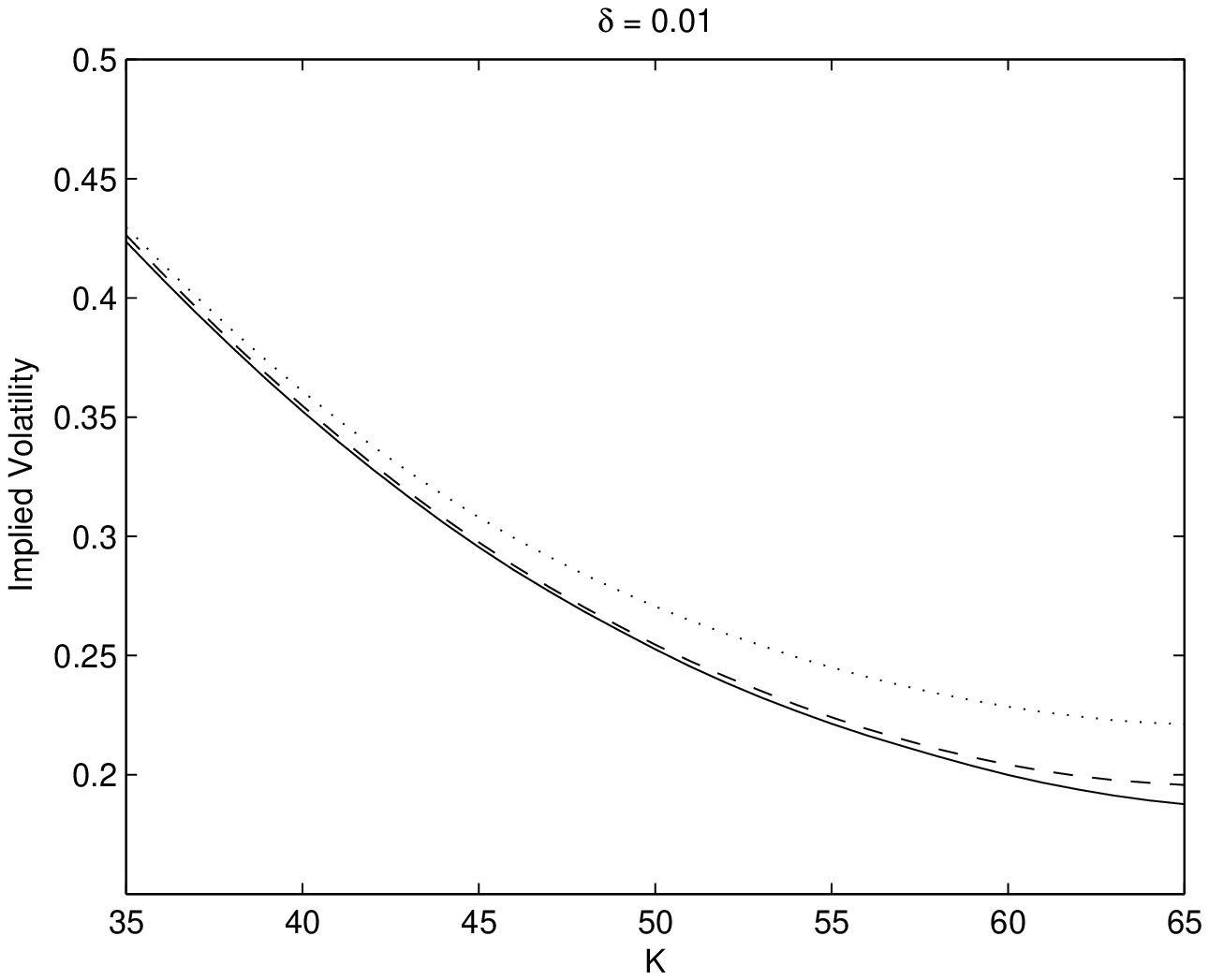} \\ \hline
\end{tabular}
\caption{The implied volatility of a European put option is plotted as a function of the strike price $K$.  On the left we consider a JDCEV model with only a fast-varying factor of volatility $Y$ whose dynamics are given by \ref{eq:Yexample}.  On the right, we consider the JDCEV model with only a slow-varying factor of volatility $Z$ whose dynamics a given by \ref{eq:Zexample}.  In each plot, the solid black line corresponds to the full implied volatility, the dashed line corresponds to our approximation, and the dotted line corresponds to the JDCEV implied volatility.  For the plots on the left we use parameters $t=1$, $x=50$, $\mu=0.05$, $\sig=10$, $\eta=-1$, $c=0.5$, $\rho_{xy}=-0.5$, $y=0$ and $\beta=2$.  For the plots on the right we use parameters $t=1$, $x=50$, $\mu=0.05$, $\sig=10$, $\eta=-1$, $c=0.5$, $\rho_{xz}=-0.5$, $z=2$ and $g=2$.}
\label{fig:JDCEVslow}
\end{figure}

\end{document}